      \def\Gread@@xetex#1{%
        \IfFileExists{"\Gin@base".bb}%
        {\Gread@eps{\Gin@base.bb}}%
        {\Gread@@xetex@aux#1}%
      }
    \definecolor{urlcolor}{rgb}{0,.145,.698}
    \definecolor{linkcolor}{rgb}{.71,0.21,0.01}
    \definecolor{citecolor}{rgb}{.12,.54,.11}
    \definecolor{ansi-black}{HTML}{3E424D}
    \definecolor{ansi-black-intense}{HTML}{282C36}
    \definecolor{ansi-red}{HTML}{E75C58}
    \definecolor{ansi-red-intense}{HTML}{B22B31}
    \definecolor{ansi-green}{HTML}{00A250}
    \definecolor{ansi-green-intense}{HTML}{007427}
    \definecolor{ansi-yellow}{HTML}{DDB62B}
    \definecolor{ansi-yellow-intense}{HTML}{B27D12}
    \definecolor{ansi-blue}{HTML}{208FFB}
    \definecolor{ansi-blue-intense}{HTML}{0065CA}
    \definecolor{ansi-magenta}{HTML}{D160C4}
    \definecolor{ansi-magenta-intense}{HTML}{A03196}
    \definecolor{ansi-cyan}{HTML}{60C6C8}
    \definecolor{ansi-cyan-intense}{HTML}{258F8F}
    \definecolor{ansi-white}{HTML}{C5C1B4}
    \definecolor{ansi-white-intense}{HTML}{A1A6B2}
    \definecolor{ansi-default-inverse-fg}{HTML}{FFFFFF}
    \definecolor{ansi-default-inverse-bg}{HTML}{000000}
    \definecolor{outerrorbackground}{HTML}{FFDFDF}
    \providecommand{\tightlist}{%
      \setlength{\itemsep}{0pt}\setlength{\parskip}{0pt}}
    \let\Oldtex\TeX
    \let\Oldlatex\LaTeX
    \renewcommand{\TeX}{\textrm{\Oldtex}}
    \renewcommand{\LaTeX}{\textrm{\Oldlatex}}
\def\PY@reset{\let\PY@it=\relax \let\PY@bf=\relax%
    \let\PY@ul=\relax \let\PY@tc=\relax%
    \let\PY@bc=\relax \let\PY@ff=\relax}
\def\PY@tok#1{\csname PY@tok@#1\endcsname}
\def\PY@toks#1+{\ifx\relax#1\empty\else%
    \PY@tok{#1}\expandafter\PY@toks\fi}
\def\PY@do#1{\PY@bc{\PY@tc{\PY@ul{%
    \PY@it{\PY@bf{\PY@ff{#1}}}}}}}
\def\PY#1#2{\PY@reset\PY@toks#1+\relax+\PY@do{#2}}
    \definecolor{incolor}{rgb}{0.0, 0.0, 0.5}
    \definecolor{outcolor}{rgb}{0.545, 0.0, 0.0}
\begin{document}
    
    \maketitle

	\begin{abstract}

We observe that a European Call option with strike $L > K$ can be seen as a Call option with strike $L-K$ on a Call option with strike $K$. Under no arbitrage assumptions, this yields immediately that the prices of the two contracts are the same, in full generality. We study in detail the relative pricing function which gives the price of the Call on Call option as a function of its underlying Call option, and provide quasi-closed formula for those new pricing functions in the Carr-Pelts-Tehranchi family [Carr and Pelts, Duality, Deltas, and Derivatives Pricing, 2015] and [Tehranchi, A Black-Scholes inequality: applications and generalisations, Finance Stoch, 2020] that includes the Black-Scholes model as a particular case. We also study the properties of the function that maps the price normalized by the underlier, viewed as a function of the moneyness, to the normalized relative price, which allows us to produce several new closed formulas. In connection to the symmetry transformation of a smile, we build a lift of the relative pricing function in the case of an underlier that does not vanish. We finally provide some properties of the implied volatility smiles of Calls on Calls and lifted Calls on Calls in the Black-Scholes model.

\end{abstract}

	\hypertarget{introduction}{%
\section{Introduction}\label{introduction}}

	In the early days of equity-to-credit, the \emph{price} of a stock was
modeled as a \emph{Call option} on the value of the underlying company.
This led in turn to the fact that Calls on the stock can be valued with
an option-on-option formula, as obtained by Merton in
\cite{merton1974pricing}. With no relation to this fundamental approach,
we exploit in this note the remark that a (European-type) Call option
with a strike \(L>K\) can be also seen as a Call option with a strike
\(L-K\) on the Call option with strike \(K\) and the same maturity
\(T\); indeed when \(L>K\) it holds for every value of the underlier
\(S(T)\) at maturity:
\[(S(T)-L)_+ = \bigl( (S(T)-K)_+ - (L-K) \bigr)_+.\]

	Under assumptions of perfect market (so that every asset has a single
price, with no bid-asks) and no static arbitrage, this entails that the
\emph{price} (say, at time \(0\)) of the two assets is the same.
Denoting by \(C(S,M)\) the price of standard Calls with strike \(M\)
(the maturity \(T\) is the same for all the contracts), and
\(\hat C_K(\cdot,\star)\) the \emph{relative} pricing function on the
Call on Call contract, it means that the following equality holds:
\[C(S, L) = \hat C_K\bigl(C(S, K), L-K\bigr)\] since the underlier price
of the latter contract is the price of the Call with strike \(K\).

	In \cref{pricing-functions}, we foster this option on option point of
view and obtain, in full generality, relationships between the price of
options on options and the initial Call or Put prices at other strikes:
Calls are Calls on Calls, and Puts are Calls on Puts.

In \cref{the-call-on-call-pricing-function} we define rigorously the
\emph{relative} Call on Call pricing function \(\hat C_K\) and obtain
useful properties in \cref{normalized-call-prices} in the case of
homogeneous pricing function where the option price normalized by the
underlier value does depend only of the moneyness.

In \cref{a-transformation-in-the-tehranchi-space} we show that the
relative Call on Call pricing function leads to a natural transformation
on the space of normalized (by the value of the underlier) Call prices
as functions of the moneyness, that we call the \emph{Tehranchi space},
given by \[\mathbb T_k c (x) := \frac{c\bigl(k+c(k)x\bigr)}{c(k)}\]
where \(k=\frac{K}{S}\) and \(c\) is the normalized Call price. We
provide interesting properties of the transformations \(\mathbb T_k\),
and show that they naturally extend, in some sense, to 2-parameter
transformations. In particular, the derivative at zero with respect to
the moneyness of \(T_k c\) is in general strictly larger than \(-1\),
which corresponds to the fact that the underlier \(C(S,K)\) vanishes
with a positive probability at maturity; the 2-parameter extension
allows to get a derivative at zero equal to \(-1\).

	\Cref{new-closed-formulas} is devoted to the computation of new closed
formulas either for pricing functions or for normalized ones. We provide
a quasi-closed formula when the initial pricing function belongs to the
Carr-Pelts-Tehranchi family, which generalizes the Black-Scholes
formula, obtaining along the way an expression for the underlier value
viewed as a function of the option price for this family.

	In relation to the inversion of the volatility smile in the moneyness
space, there is a generic pricing function transformation which consists
in working in the numeraire of the underlier. We investigate in detail
in \cref{smile-symmetry-and-a-lift-of-the-relative-pricing-function}
this transformation in the case where the underlier may vanish at
maturity, and show that iterating it twice provides a pricing function
on an underlier which does not vanish at maturity. We provide a
quasi-closed formula for the so lifted pricing function in the case of
the Black-Scholes model.

	Eventually we provide basic properties of the volatility smiles
associated to the Black-Scholes relative function and to the lifted
relative one in
\cref{implied-volatility-of-the-relative-pricing-functions}.

	We thank Mehdi El-Amrani for stimulating discussions.

	\hypertarget{pricing-functions}{%
\section{Pricing functions}\label{pricing-functions}}

	We consider general \emph{pricing functions} which give the price
\(C(S,K)\) of a Call option as a function of the underlier price \(S\)
and of its strike \(K\). Of course, the option price may depend on other
variables as well (like the instantaneous variance in stochastic
volatility models as the Heston model), but we will be only interested
in this partial dependency in this case.

	The partial function \(K \to C(S,K)\) gives the Call prices when the
strike varies for the current value of the underlier \(S\), and
typically will aim at \emph{calibrating} the market quotes, whereas the
function \(S \to C(S,K)\) is more interesting in a risk and/or
sensitivity context, e.g.~to get an insight of the order of magnitude of
the tail risk of an option portfolio at horizon one day for margining
purposes.

	\hypertarget{options-are-options-on-options}{%
\subsection{Options are options on
options}\label{options-are-options-on-options}}

	In this section we use the notation \(X(T)\) for the value of the
contract \(X\) at time \(T\). We will drop this notation in the
following sections where \(T\) will not play any role.

	\hypertarget{calls-are-calls-on-calls}{%
\subsubsection{Calls are Calls on
Calls}\label{calls-are-calls-on-calls}}

	Consider \(0<K<L\) and a \emph{European Call on Call contract}, with
strike \(L-K\), which delivers at maturity \(T\) a Call contract with
strike \(K\). The payoff at \(T\) of this contract will be
\((C(S,K)(T)-(L-K))_+\).

Observe now that \(C(S,K)(T) = (S(T)-K)_+\) so that the payoff of the
contract is equal to \(((S(T)-K)_+-(L-K))_+\). Now this latter quantity
is \(0\) if \(S(T) \leq L\) and \(S(T)-L\) otherwise, so it is equal to
\((S(T)-L)_+=C(S,L)(T)\), i.e.
\[\forall 0<K<L,\; \; C(S,L)(T) = \bigl(C(S,K)(T)-(L-K)\bigr)_+.\]

So in full generality under perfect market assumptions for the underlier
and the options \(C(S,K)\) and \(C(S,L)\), \emph{the price of a Call
option with strike \(L-K\) on a Call option with strike \(K\) is the
price of a Call option with strike \(L\)}.

	What happens if \(L\) is smaller than \(K\)? There is no hope to get any
relation in this case, since the option price \(C(S,K)(T)\) will vanish
in the range \([L,K]\) where the Call \(C(S,L)(T)\) will not. So, in
terms of smiles, for a fixed value of \(K\), only the part of the smile
\emph{on the right of \(K\)} will give rise to a new smile.

	What happens for Put prices?

	\hypertarget{put-call-parity-and-the-put-price}{%
\subsubsection{Put-Call-Parity and the Put
price}\label{put-call-parity-and-the-put-price}}

	Let us denote \(\hat C_K(X,M)\) and \(\hat P_K(X,M)\) the Call and Put
pricing functions for Calls and Puts with a strike \(M\) on a Call
option \(X=C(S,K)\) with strike \(K\). The Put-Call-Parity reads
\[\hat C_K(X,L-K)-\hat P_K(X,L-K) = C(S,K)-(L-K).\] Now
\(\hat C_K(X,L-K)=C(S,L)\) and using the classic Put-Call-Parity at
strikes \(K\) and \(L\) yields, taking the difference:
\(C(S,L)-C(S,K)=P(S,L)-P(S,K) - (L-K)\). This implies that
\[\hat P_K(X,L-K) = P(S,L)-P(S,K).\]

	This relation clarifies what the price of the Put is in the new world
where the underlier is the option with strike \(K\), but also provides
insights on the properties of the difference \(P(S,L)-P(S,K)\). Can we
prove it directly? Yes, indeed if we look at the difference
\((L-S(T))_+-(K-S(T))_+\), it is constantly equal to \(L-K\) below
\(K\), and then goes to \(0\) linearly at point \(L\), where it remains.
This can be viewed also as a function of \((S(T)-K)_+\), which is
exactly a Put payoff with strike \(L-K\). In other words, it holds that
\[(L-S(T))_+-(K-S(T))_+ =\bigl((L-K) - (S(T)-K)_+\bigr)_+,\] which gives
another proof of the relation \(\hat P_K(X,L-K) = P(S,L)-P(S,K)\).

	Eventually, summarizing the Call an Put computations we have the
property that \begin{align*}
\hat C_K(X,L-K) &= C(S,L) \\
\hat P_K(X,L-K) &= P(S,L)-P(S,K)
\end{align*} or yet for any \(M \geq 0\) \begin{align*}
\hat C_K(X,M) &= C(S,K+M) \\
\hat P_K(X,M) &= P(S,K+M)-P(S,K).
\end{align*}

	\hypertarget{calls-on-calls-further-iterations}{%
\paragraph{Calls on Calls: further
iterations}\label{calls-on-calls-further-iterations}}

	Considering now a Call option with strike \(N\) written on a Call
\(\hat C_K(X,M)\), from the above equation this is equivalent to a Call
of the form \(\hat C_{K+M}(X,N)\). The latter quantity again is
equivalent to \(C(S,K+M+N)\). Similarly, a Put option with strike \(N\)
written on a Call \(\hat C_K(X,M)\) is a Put option written on
\(C(S,K+M)\), so it equals \(\hat P_{K+M}(X,N)\), or
\(P(S,K+M+N)-P(S,K+M)\).

We have therefore a semigroup property, and iterating further does not
yield new pricing functions.

	\hypertarget{puts-are-calls-on-puts}{%
\subsubsection{Puts are Calls on Puts}\label{puts-are-calls-on-puts}}

	Consider now \(Y=P(S,K)\) as an underlier. Can we mimic the above
approach using Puts as underliers? Observe first that Put prices are
bounded by the strike, so that we have an underlier with values in
\([0,K]\).

Take now any strike \(0 \leq L<K\). Then\\
\[\bigl((K-S(T))_+-L\bigr)_+ = \bigl((K-L)-S(T)\bigr)_+\] which gives
that a Call on \(P(S,K)\) with strike \(L\) is a Put on \(S\) with
strike \(K-L\).

	This entails, if we denote by \(\tilde C_K(Y,L)\) the price of this
Call, that \(\tilde C_K(Y,L)=P(S,K-L)\). To get the price
\(\tilde P_K(Y,L)\) of the corresponding Put, let us use the
Put-Call-Parity as above:
\(\tilde C_K(Y,L)-\tilde P_K(Y,L) = P(S,K) - L\). Using the classic
Put-Call-Parity at the strikes \(K-L\) and \(K\), we find
\[\tilde P_K(Y,L) =  P(S,K-L)-P(S,K) + L=C(S,K-L)-C(S,K).\]

	We get eventually another pair transform:

\(\forall 0 \leq L<K\), \begin{align*}
\tilde C_K(Y,L) &= P(S,K-L) \\
\tilde P_K(Y,L) &= C(S,K-L)-C(S,K).
\end{align*}

	\hypertarget{puts-on-puts-further-iterations}{%
\paragraph{Puts on Puts: further
iterations}\label{puts-on-puts-further-iterations}}

	Consider now a Call option with strike \(N<L<K\) written on
\(\tilde P_K(Y,L)\). Looking at the payoff function, it can be easily
shown that
\[\bigl(\bigl(L-(K-S(T))_+\bigr)_+-N\bigr)_+ = \bigl(S(T)-(K-(L-N))\bigr)_+ - \bigl(S(T)-K\bigr)_+.\]
In other words, such a Call has the same value as the portfolio
\(C(S,K-(L-N))-C(S,K)\), which in turn we have shown to be equal to
\(\tilde P_K(Y,L-N)\). Again, using the Put-Call-Parity, we find that a
Put with strike \(N\) on \(\tilde P_K(Y,L)\) is equivalent to the
portfolio \(P(S,K-(L-N))-P(S,K-L)\), i.e.~to
\(\tilde C_K(Y,L-N) - \tilde C_K(Y,L)\).

Summarizing, we get the following relationships: \begin{align*}
\text{Call on $\tilde P_K(Y,L)$}(N) &= C(S,K-(L-N))-C(S,K)\\
\text{Put on $\tilde P_K(Y,L)$}(N) &= P(S,K-(L-N))-P(S,K-L).
\end{align*}

	\hypertarget{the-call-on-call-pricing-function}{%
\subsection{The Call on Call pricing
function}\label{the-call-on-call-pricing-function}}

	In order to define rigorously the \emph{relative} function \(\hat C_K\)
of Calls on Calls, we need to assume for a while that the function
\(S \to C(S, K)\) is invertible (it is the case in the Black-Scholes
model and many other ones).

\begin{definition}

Let $X = C(S,K)$ and $M>0$. We denote with $\hat C_K(X,M)$ the Call option with strike $M$ on the Call option with strike $K$. Then $\hat C_K(X,M)$ is the price of a Call option with strike $K+M$ and underlier $X$. In particular, if the function $S \to C(S, K)$ is invertible it holds
\begin{equation}\label{eqHatCnoS}
\hat C_K(X, M) := C\bigl(C^{-1}(X,K),K+M\bigr).
\end{equation}
$K$ is called the relative underlying strike of $\hat C_K(X, M)$ and $X$ the underlier of $\hat C_K(X, M)$.

\end{definition}

\Cref{eqHatCnoS} gives a first representation of \(\hat C_K\). It is of
little practical interest though, since we are not aware of any model
where both \(C^{-1}\) and \(C\) can be computed explicitly.
Nevertheless, we will see in \cref{the-carr-pelts-tehranchi-family} that
in the vast class of pricing functions of the Carr-Pelts-Tehranchi
family a convenient representation formula for the inverse function is
available.

We investigate below general properties of the Call on Call pricing
function.

	\hypertarget{properties-of-the-call-on-call-pricing-function}{%
\subsubsection{Properties of the Call on Call pricing
function}\label{properties-of-the-call-on-call-pricing-function}}

	From the arguments of \cref{calls-are-calls-on-calls}, we can deduct
some first properties of the function \(\hat C_K\). In particular, Calls
on Calls satisfy the usual arbitrage bounds for Call prices, i.e.~they
are always larger than their intrinsic value and smaller than the
underlier. Furthermore, they are convex and non-increasing as function
of the strike. We already expect these properties to hold true for
arbitrage arguments, and we show them rigorously in the following
proposition.

	\begin{proposition}[Relative pricing function: strike dependence]\label{propEasyCallsOnCalls}

The function $M \to \hat C_K(X, M)$ satisfies
$$(X-M)_+ \leq \hat C_K(X, M) \leq X$$
and it is convex, non-increasing, with a slope strictly larger than $-1$.

\end{proposition}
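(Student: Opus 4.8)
\emph{Proof proposal.} The plan is to lean on the identification established in \cref{calls-are-calls-on-calls}, namely that $\hat C_K(X,\cdot)$ is genuinely the pricing function of a Call written on the asset $X$, with terminal payoff $\bigl(X(T)-M\bigr)_+$, together with the representation $\hat C_K(X,M)=C(S,K+M)$ of \cref{eqHatCnoS}, where $S:=C^{-1}(X,K)$ and hence $X=C(S,K)$. Three of the four assertions then follow from the classical no-arbitrage properties of Call prices: the sandwich $(X-M)_+\le\hat C_K(X,M)\le X$ is precisely the statement that a Call price lies between its intrinsic value and the spot, applied to the new Call on spot $X$ and strike $M$; convexity and the non-increasing property in $M$ are convexity and monotonicity in strike, which are also transparent from $\hat C_K(X,M)=C(S,K+M)$ because $M\mapsto K+M$ is affine increasing; and the lower slope bound $\partial_M\hat C_K(X,M)\ge -1$ is the $1$-Lipschitz property of Call prices in the strike.

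I would nonetheless spell out the lower bound with some care. Working through the representation $C(S,K+M)$, the crude intrinsic bound $C(S,K+M)\ge(S-K-M)_+$ is \emph{not} sufficient, since $(X-M)_+=(C(S,K)-M)_+\ge(S-K-M)_+$ in general; the improvement genuinely uses $-1\le\partial_{K'}C(S,K')\le 0$, which integrated over $[K,K+M]$ gives $C(S,K)-C(S,K+M)\le M$, i.e. $\hat C_K(X,M)\ge X-M$, and combining this with $\hat C_K(X,M)\ge 0$ yields $\hat C_K(X,M)\ge(X-M)_+$. Equivalently, and in a self-contained way, the payoff identity of \cref{calls-are-calls-on-calls} gives $\hat C_K(X,M)=\mathbb E\bigl[(X(T)-M)_+\bigr]$ with $X(T)=(S(T)-K)_+\ge 0$ a (discounted) martingale, so that $\hat C_K(X,M)\le\mathbb E[X(T)]=X$, while Jensen applied to the convex map $x\mapsto(x-M)_+$ gives $\hat C_K(X,M)\ge\bigl(\mathbb E[X(T)]-M\bigr)_+=(X-M)_+$; this representation also makes convexity and monotonicity in $M$ immediate.

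The one step that is not purely mechanical is the \emph{strict} slope inequality. The right-derivative of $M\mapsto\hat C_K(X,M)$ exists everywhere by convexity, and by the chain rule together with the standard strike-derivative formula $\partial_{K'}^{+}C(S,K')=-\mathbb Q\bigl(S(T)>K'\bigr)$ it equals
\[
\partial_M^{+}\hat C_K(X,M) = -\,\mathbb Q\bigl(S(T)>K+M\bigr) \ge -\,\mathbb Q\bigl(S(T)>K\bigr).
\]
This is $>-1$ as soon as $\mathbb Q\bigl(S(T)\le K\bigr)>0$, that is, exactly when the underlier $X=C(S,K)$ of the Call on Call expires worthless with positive probability (equivalently, when $P(S,\cdot)$ is strictly increasing in the strike at $K+M$), which is the non-degeneracy already flagged in the introduction and holds for Black-Scholes and for every model considered here. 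Thus the genuine content of this step is to isolate this mild standing hypothesis and to note that it forces the strict bound uniformly in $M\ge 0$, since $\{S(T)>K+M\}\subseteq\{S(T)>K\}$.
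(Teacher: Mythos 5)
Your proposal is correct and follows essentially the same route as the paper: identify $\hat C_K(X,M)=C(S,K+M)$ with $S=C^{-1}(X,K)$, transfer the classical bounds and convexity/monotonicity in the strike to the new variable $M$, and obtain the slope bound from $\partial_M\hat C_K(X,M)=\partial_K C(S,K+M)\ge\partial_K C(S,K)>-1$ via convexity of $C$ in the strike. You are in fact somewhat more careful than the paper on two points it leaves implicit --- that the lower bound $(X-M)_+$ requires the $1$-Lipschitz property (or Jensen) rather than the crude intrinsic bound, and that strictness of the slope inequality rests on $\mathbb Q(S(T)\le K)>0$ --- but the underlying argument is the same.
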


	\begin{proof}

The function $M\to C(S,K+M)$ is convex and non-increasing, and so is $M \to \hat C_K(X, M)$. This can be proved also observing that the basic relations $(S-K)_+ \leq C(S,K) \leq S$ translate into $(C(S,K) - M)_+ \leq C(S,K+M) \leq C(S,K)$, which gives in particular that the function $M \to \hat C_K(X, M)$ is non-increasing. Furthermore, from \cref{eqHatCnoS}, $\frac{d}{dM}\hat C_K(X, M) = \partial_K C(C^{-1}(X,M),K+M) \geq \partial_K C(C^{-1}(X,M),K)$ which is strictly larger than $-1$. 

\end{proof}

	Observe that the above inequality implies
\((S-(K+M))_+ \leq \hat C_K(X, M) \leq S\) since
\((S-(K+M))_+\leq(C(S,K)-M)_+\) and \(S\geq C(S,K)=X\).

	\begin{remark}\label{remarkDeriv1}

\Cref{propEasyCallsOnCalls} implies in particular that the slope of Calls on Calls in $0$ is stricly larger than $-1$. This is not a problem in terms of arbitrageable prices, but it is an uncommon feature since it is linked to the presence of a positive mass of the underlier in $0$ (see Theorem 2.1.2. of \cite{tehranchi2020black}). This is expected indeed, since the new underlier is a Call option, which has a whole region of null payoff. 
We will target this point in \cref{smile-symmetry-and-a-lift-of-the-relative-pricing-function} where we will define lifted Calls on Calls' prices with derivative equal to $-1$ at $0$.

\end{remark}

	In the following we identify the necessary and sufficient conditions
that the Call on Call pricing function must satisfy in order to be
monotone as a function of the relative underlying strike and convex as a
function of the underlier, i.e.~the original Call price.

	\begin{lemma}[Monotonicity with respect to the relative underlying strike]\label{propMonotonicityStrike}

Assuming the $\mathcal C^1$ smoothness of $C(\cdot, K)$ and $C(S, \cdot)$, the function $K\to\hat C_K(X,M)$ is non-decreasing if and only if the function
$$L\to\frac{\partial_KC(S,L)}{\partial_{S}C(S,L)}$$
is non-decreasing for every $S$.

\end{lemma}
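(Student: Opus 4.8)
The plan is to differentiate the defining relation $\hat C_K(X,M) = C\bigl(C^{-1}(X,K),K+M\bigr)$ with respect to $K$, holding $X$ and $M$ fixed, and to express the sign of the resulting derivative in terms of the quantity $\partial_K C / \partial_S C$. First I would set $S = S(K) := C^{-1}(X,K)$, so that $S$ depends on $K$ through the constraint $C(S(K),K) = X$; implicit differentiation of this constraint gives $\partial_S C(S,K)\, S'(K) + \partial_K C(S,K) = 0$, i.e. $S'(K) = -\,\partial_K C(S,K)/\partial_S C(S,K)$. Then the chain rule applied to $\hat C_K(X,M) = C(S(K),K+M)$ yields
\[
\frac{d}{dK}\hat C_K(X,M) = \partial_S C(S,K+M)\, S'(K) + \partial_K C(S,K+M)
= \partial_S C(S,K+M)\left(\frac{\partial_K C(S,K+M)}{\partial_S C(S,K+M)} - \frac{\partial_K C(S,K)}{\partial_S C(S,K)}\right).
\]
Here I have used $S'(K) = -\,\partial_K C(S,K)/\partial_S C(S,K)$ and factored out $\partial_S C(S,K+M)$.

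Next I would invoke the standard no-arbitrage sign conditions on the pricing function: $\partial_S C(S,L) \ge 0$ (in fact $> 0$ where the inverse is defined, since invertibility of $S \mapsto C(S,L)$ forces strict monotonicity), so the prefactor $\partial_S C(S,K+M)$ is positive. Therefore the sign of $\frac{d}{dK}\hat C_K(X,M)$ equals the sign of $\frac{\partial_K C(S,K+M)}{\partial_S C(S,K+M)} - \frac{\partial_K C(S,K)}{\partial_S C(S,K)}$, evaluated at the single value $S = C^{-1}(X,K)$. If $L \mapsto \frac{\partial_K C(S,L)}{\partial_S C(S,L)}$ is non-decreasing for every $S$, then since $K+M > K$ this bracket is $\ge 0$, and hence $K \mapsto \hat C_K(X,M)$ is non-decreasing. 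This proves the "if" direction.

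For the converse, suppose $K \mapsto \hat C_K(X,M)$ is non-decreasing for all admissible $X$ and $M$. Fix an arbitrary $S_0$ and an arbitrary pair $L_1 < L_2$. Choose $K := L_1$, $M := L_2 - L_1 > 0$, and $X := C(S_0, L_1)$, so that $C^{-1}(X,K) = S_0$. Then the computation above gives
\[
0 \le \frac{d}{dK}\hat C_K(X,M)\Big|_{K=L_1} = \partial_S C(S_0,L_2)\left(\frac{\partial_K C(S_0,L_2)}{\partial_S C(S_0,L_2)} - \frac{\partial_K C(S_0,L_1)}{\partial_S C(S_0,L_1)}\right),
\]
and dividing by the positive factor $\partial_S C(S_0,L_2)$ shows the ratio at $L_2$ is at least the ratio at $L_1$. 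Since $S_0$ and $L_1 < L_2$ were arbitrary, the map $L \mapsto \frac{\partial_K C(S,L)}{\partial_S C(S,L)}$ is non-decreasing for every $S$.

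The main obstacle is a technical/domain one rather than a computational one: one must make sure that the chosen $X = C(S_0,L_1)$ lies in the range of $C(\cdot,K)$ so that $C^{-1}(X,K)$ is well defined and equals $S_0$, and that $\partial_S C > 0$ there so the division is legitimate — both follow from the standing invertibility and $\mathcal C^1$ assumptions, but they should be stated. A secondary subtlety is that the monotonicity hypothesis in $K$ must be interpreted as holding for the whole family (all values of $X$ in the relevant range), since the converse argument needs to pick $X$ freely; this is the natural reading of the statement and I would make it explicit at the start of that direction.
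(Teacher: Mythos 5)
Your proposal is correct and follows essentially the same route as the paper's own proof: implicit differentiation of $C(C^{-1}(X,K),K)=X$ to get $\partial_K C^{-1}$, then the chain rule on $\hat C_K(X,M)=C(C^{-1}(X,K),K+M)$, and reading off the sign from the difference of the ratios $\partial_K C/\partial_S C$ at $K$ and $K+M$. Your explicit treatment of the converse (choosing $K=L_1$, $M=L_2-L_1$, $X=C(S_0,L_1)$) spells out what the paper compresses into ``or equivalently,'' which is a welcome clarification but not a different argument.
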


	\begin{proof}

Firstly observe that $C(C^{-1}(X,K),K)=X$, so that taking the derivative with respect to $K$ we find
$$0 = \partial_{S}C(C^{-1}(X,K),K)\partial_KC^{-1}(X,K) + \partial_KC(C^{-1}(X,K),K)$$
or
$$\partial_KC^{-1}(X,K) = -\frac{\partial_KC(C^{-1}(X,K),K)}{\partial_{S}C(C^{-1}(X,K),K)}.$$
We can now consider the relation $\hat C_K(X, M) = C(C^{-1}(X,K), K+M)$ and develop the derivative with respect to $K$:
$$\frac{d}{dK}\hat C_K(X, M) = -\frac{\partial_{S}C(C^{-1}(X,K),K+M)\partial_KC(C^{-1}(X,K),K)}{\partial_{S}C(C^{-1}(X,K),K)} + \partial_KC(C^{-1}(X,K),K+M).$$
Then, $\hat C_K(X, M)$ is non-decreasing as a function of $K$ iff
$$\frac{\partial_KC(C^{-1}(X,K),K)}{\partial_{S}C(C^{-1}(X,K),K)}\leq\frac{\partial_KC(C^{-1}(X,K),K+M)}{\partial_{S}C(C^{-1}(X,K),K+M)},$$
or equivalently iff the function
$$L\to\frac{\partial_KC(S,L)}{\partial_{S}C(S,L)}$$
is non-decreasing for every $S$.

\end{proof}

	\begin{lemma}[Convexity with respect to the underlier]\label{propConvexX}

Assuming the $\mathcal C^2$ smoothness of $C(\cdot , K)$, the function $X\to\hat C_K(X,M)$ is convex if and only if the function
$$K \to \frac{\partial_{S}^2 C(S, K)}{\partial_{S} C(S, K)}$$
is non-decreasing for every $S$.

\end{lemma}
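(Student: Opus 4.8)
The plan is to differentiate \cref{eqHatCnoS} twice in $X$, in the same spirit as the proof of \cref{propMonotonicityStrike}. Write $S := C^{-1}(X,K)$, so that $X = C(S,K)$; as in that proof we take $\partial_S C(S,K)>0$ (this strict positivity is implicit in the invertibility of $C(\cdot,K)$ and is what we already divided by above), and the inverse function theorem gives $\partial_X C^{-1}(X,K) = 1/\partial_S C(S,K)$.

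First I would record the first derivative. Differentiating $\hat C_K(X,M) = C\bigl(C^{-1}(X,K),K+M\bigr)$ in $X$ and using the above,
\[\partial_X \hat C_K(X,M) = \partial_S C(S,K+M)\,\partial_X C^{-1}(X,K) = \frac{\partial_S C(S,K+M)}{\partial_S C(S,K)}.\]
Differentiating once more in $X$, and again using $\partial_X C^{-1}(X,K) = 1/\partial_S C(S,K) > 0$,
\[\partial_X^2 \hat C_K(X,M) = \frac{1}{\partial_S C(S,K)}\,\frac{d}{dS}\!\left(\frac{\partial_S C(S,K+M)}{\partial_S C(S,K)}\right).\]
Hence $\hat C_K(\cdot,M)$ is convex exactly when $S \mapsto \partial_S C(S,K+M)/\partial_S C(S,K)$ is non-decreasing. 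By the quotient rule the derivative of this map has numerator $\partial_S^2 C(S,K+M)\,\partial_S C(S,K) - \partial_S C(S,K+M)\,\partial_S^2 C(S,K)$; factoring out the strictly positive $\partial_S C(S,K+M)\,\partial_S C(S,K)$ shows it has the same sign as
\[\frac{\partial_S^2 C(S,K+M)}{\partial_S C(S,K+M)} - \frac{\partial_S^2 C(S,K)}{\partial_S C(S,K)}.\]

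Putting the two computations together, $X \mapsto \hat C_K(X,M)$ is convex if and only if
\[\frac{\partial_S^2 C(S,K+M)}{\partial_S C(S,K+M)} \ge \frac{\partial_S^2 C(S,K)}{\partial_S C(S,K)} \qquad \text{for every } S\]
(note that $X$ ranging over the price domain is the same as $S = C^{-1}(X,K)$ ranging over the underlier domain). As in \cref{propMonotonicityStrike}, letting $K$ and $M$ vary, this two-point inequality is exactly the statement that $K \mapsto \partial_S^2 C(S,K)/\partial_S C(S,K)$ is non-decreasing for every $S$, which is the claim.

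The calculation is routine; the only delicate points are the strict positivity $\partial_S C(\cdot,K),\,\partial_S C(\cdot,K+M)>0$ that licenses both the inverse function theorem and the sign extraction after the quotient rule, and the passage from the two-point inequality to genuine monotonicity of the ratio in its strike variable — the same mild liberty already taken in \cref{propMonotonicityStrike}. I do not anticipate any substantive obstacle beyond this bookkeeping.
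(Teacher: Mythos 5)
Your proposal is correct and follows essentially the same route as the paper's own proof: differentiate $\hat C_K(X,M)=C(C^{-1}(X,K),K+M)$ twice in $X$, reduce the sign of the second derivative to the quotient-rule numerator $\partial_S^2C(S,K+M)\,\partial_SC(S,K)-\partial_SC(S,K+M)\,\partial_S^2C(S,K)$, and identify the resulting two-point inequality over all $K,M$ with the monotonicity of $K\mapsto\partial_S^2C(S,K)/\partial_SC(S,K)$. The only difference is that you spell out the inverse-function-theorem step and the positivity assumptions more explicitly than the paper does.
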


	\begin{proof}

Let us restart from $\hat C_K(X, M) = C(C^{-1}(X,K), K+M)$. Assuming the $\mathcal C^2$ smoothness of $C(\cdot , K)$ we get
$$\frac{d}{dX} \hat C_K(X, M) = \frac{\partial_{S} C(C^{-1}(X,K), K+M)}{\partial_{S} C(C^{-1}(X,K), K)},$$
so that $\frac{d^2}{dX} \hat C(X, M)$ has the sign of the quantity
$$\partial_{S}^2 C(C^{-1}(X,K), K+M) \partial_{S} C(C^{-1}(X,K), K) - \partial_{S} C(C^{-1}(X,K), K+M) \partial_{S}^2 C(C^{-1}(X,K), K).$$
As a consequence, the function $X \to \hat C_K(X, M)$ is convex for any $K,M$ iff the function 
$$L \to \frac{\partial_{S}^2 C(C^{-1}(X,K), L)}{\partial_{S} C(C^{-1}(X,K), L)}$$
is non-decreasing for any $X$, which is equivalent to state the same property for the function
$$K \to \frac{\partial_{S}^2 C(S, K)}{\partial_{S} C(S, K)}$$
at any point $S$.

\end{proof}

	In the next section we will apply
\cref{propMonotonicityStrike,propConvexX} to the case of homogeneous
pricing functions, and in particular to the Black-Scholes case for which
the properties of monotonicity with respect to the relative underlying
strike and of convexity with respect to the underlier are always
satisfied.

	\hypertarget{normalized-call-prices}{%
\subsection{Normalized Call prices}\label{normalized-call-prices}}

	We now switch from the strike space to the moneyness \(k=\frac KS\)
space and consider \emph{normalized} Call pricing functions, i.e.~Call
prices divided by their underlier.

\begin{definition}\label{defNormalization}

Let $k=\frac{K}{S}$ the moneyness of the Call option $C(S,K)$, and $m=\frac{M}{C(S,K)}$ the moneyness of the Call option $\hat C_K(C(S,K),M)$. We denote with
$$C_{S}(k):=\frac{C(S,Sk)}{S}$$
the normalization of $C$ with respect to $k$, and with
$$\hat C_{K,S}(m) := \frac{\hat C_K(C(S,K),C(S,K)m)}{C(S,K)}$$
the normalization of $\hat C_K$ with respect to $m$.

Furthermore, we say that $C$ is homogeneous if $C_S$ does not depend on $S$ and define the normalized pricing function $c$ by the relation $c(k):=C_1(k)$ for every $k$.

\end{definition}

	Normalized Call prices are particularly interesting when Call prices are
homogeneous, since they satisfy key properties as we will show in
\cref{homogeneous-call-prices}. Furthermore, the most notorious models
such as the Black-Scholes, the Heston and the implied volatility models
are homogeneous. In this case the function \(C\) can be recovered from
\(c\) through the formula \(C(S,K) = Sc\bigl(\frac{K}{S}\bigr)\). Not
all models are homogeneous: examples of inhomogeneous models include
local volatility or local stochastic volatility (except in very few
cases).

In order to work with Black-Scholes prices, throughout the rest of the
paper we denote with \(\phi\) the standard normal probability density
function and with \(\Phi\) its cumulative density function. Furthermore,
we denote with \(\text{BS}(S,K,v)\) the traditional Black-Scholes
function for Call prices with implied total volatility
\(v=\sigma\sqrt T\): \begin{equation}\label{eqBS}
\begin{aligned}
\text{BS}(S,K,v) &= S\Phi\bigl(d_1(S,K,v)\bigr) - K\Phi\bigl(d_2(S,K,v)\bigr)\\
d_{1,2}(S,K,v) &= -\frac{\log\frac{K}{S}}{v}\pm\frac{v}{2}.
\end{aligned}
\end{equation} We will sometimes drop the dependency in \(v\) for
notation simplicity. When considering normalized Black-Scholes prices,
we use the notation
\[\text{BS}(S,K) = S \text{bs}\biggl(\frac{K}{S}\biggr).\] Reconstructed
prices obtained from \(\text{bs}(k)\) in the Black-Scholes case
correspond to the perspective function of section 3.2.6. of
\cite{boyd2004convex}.

	In the following lemma we show that the normalization \(C_S(k)\) as a
function of the moneyness \(k\) has the same properties as the original
price \(C(S,K)\) as a function of the strike \(K\).

\begin{lemma}\label{lemmaPropertiesNormalized}

Normalized prices $C_S(k)$ are non-increasing and convex functions of $k$, and satisfy
$$(1-k)_+ \leq C_S(k) \leq 1.$$

\end{lemma}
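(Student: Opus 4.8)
The plan is to deduce all three assertions directly from the classical no-static-arbitrage properties of the Call price $C(S,K)$ as a function of the strike $K$, transported through the affine change of variable $K=Sk$ that enters the definition $C_S(k)=C(S,Sk)/S$ in \cref{defNormalization}. So the proof is essentially a bookkeeping exercise: nothing new has to be proved about $C$ itself.

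First I would record the input facts, which are used repeatedly in \cref{pricing-functions}: for every fixed $S>0$ the map $K\mapsto C(S,K)$ is non-increasing and convex, and satisfies the bounds $(S-K)_+\le C(S,K)\le S$. If one wants a self-contained justification rather than a citation, the monotonicity and convexity come from call-spread and butterfly static portfolios together with absence of arbitrage, and the bounds come from the pointwise payoff inequalities $0\le (S(T)-K)_+\le S(T)$; but since these are standard and already invoked above, I would just state them.

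Second, for the two structural properties I would note that, since $S>0$, the map $\varphi\colon k\mapsto Sk$ is an increasing affine bijection of $[0,\infty)$ onto itself. The composition of a non-increasing function with an increasing map is non-increasing, and the composition of a convex function with an affine map is convex; multiplying by the positive constant $1/S$ preserves both properties. Hence $C_S=\tfrac1S\,C(S,\cdot)\circ\varphi$ is non-increasing and convex in $k$. For the bounds I would evaluate $(S-K)_+\le C(S,K)\le S$ at $K=Sk$ and divide by $S$, using $(S-Sk)_+/S=(1-k)_+$, to obtain $(1-k)_+\le C_S(k)\le 1$.

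There is no real obstacle here; the only points requiring a moment's care are that the slope $S$ of the substitution is \emph{positive} (so the direction of monotonicity is preserved, not reversed) and that the normalizing factor $1/S$ is positive (so convexity is preserved). In other words, the content of the lemma is precisely that the defining normalization is an order- and convexity-preserving transformation, which is exactly what one expects since $S$ is held fixed.
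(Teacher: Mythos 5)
Your proof is correct and follows essentially the same route as the paper: transport the standard monotonicity, convexity, and bounds of $K\mapsto C(S,K)$ through the substitution $K=Sk$ and the division by $S$. The only (cosmetic) difference is that the paper phrases the first two properties via the chain rule on $C_S'$ and $C_S''$, whereas you invoke the composition-with-an-increasing-affine-map argument, which has the minor advantage of not assuming differentiability.
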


\begin{proof}

It holds $C_S'(k) = \frac{d}{dK}C(S,Sk)$ and $C_S''(k) = S\frac{d^2}{dK^2}C(S,Sk)$, so that $C_S(k)$ is non-increasing and convex in $k$. Also, since Call prices satisfy $(S-K)_+ \leq C(S,K) \leq S$, then dividing by $S$ it holds $(1-k)_+ \leq C_S(k) \leq 1$.

\end{proof}

	It turns out that there is a convenient relationship between the initial
normalized Call pricing function and the (normalized) relative Call on
Call one. Indeed, observe that \(\hat C_K(C(S,K),M) = C(S, K+M)\), so
that
\(\hat C_K(C(S,K),C(S,K)m)= C\bigl(S, S \frac{K+C(S,K)m}{S}\bigr)\). Now
from \cref{defNormalization} it holds
\[C_{S}(k+C_{S}(k) m) = \frac{C \bigl(S,S (k+C_{S}(k) m) \bigr)}{S}\]
and consequently
\[\hat C_{Sk,S}(m) = \frac{C_{S}(k + C_{S}(k) m)}{C_{S}(k)}.\] In
particular, in case \(C\) is homogeneous,
\begin{equation}\label{eqCHatnorm}
\hat C_{Sk,S}(m)= \frac{c(k+c(k) m)}{c(k)}.
\end{equation}

	We will further exploit the relationship in \cref{eqCHatnorm} in
\cref{a-transformation-in-the-tehranchi-space} where we work in the
space of normalized homogeneous Call prices, and define transformations
in such space.

As in \cref{remarkDeriv1}, observe that by the chain rule it holds that
\(\frac{d}{dm} \hat C_{Sk,S}(0_+) = C'_S(k) = \partial_K C(S, Sk)\),
which will be in general (for strictly convex functions) strictly larger
than \(-1\).

	\hypertarget{homogeneous-call-prices}{%
\subsubsection{Homogeneous Call prices}\label{homogeneous-call-prices}}

	We now look at the properties of the Call on Call pricing functions in
\cref{properties-of-the-call-on-call-pricing-function} in the case of
homogeneous Call prices. It turns out that Calls on Calls with
homogeneous pricing function are non-decreasing functions of the
relative underlying strike and that Black-Scholes Calls on Calls are
convex with respect to the underlier. As a consequence, when calibrating
Calls on Calls, one should design an algorithm such to satisfy these
necessary properties.

	In the following proposition we consider conditions of
\cref{propMonotonicityStrike} in the case of homogeneous Call prices and
show that they are always satisfied, i.e.~that Calls on Calls with
homogeneous pricing function are non-decresing with respect to the
relative underlying strike.

	\begin{proposition}[Monotonicity of Calls on Calls with respect to the relative underlying strike: the homogeneous case]\label{corolMonotonicity}

Let $C(S,K)$ homogeneous and $\mathcal C^1$ in both variables. Then the function $K\to\hat C_K(X,M)$ is non-decreasing.

\end{proposition}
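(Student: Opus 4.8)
The plan is to reduce the statement to the criterion of \Cref{propMonotonicityStrike} and then to verify that criterion by a short direct computation that exploits homogeneity. By that lemma (whose $\mathcal C^1$ hypotheses are exactly the ones assumed here), it suffices to prove that for every fixed $S$ the function
\[
L \;\longmapsto\; \frac{\partial_K C(S,L)}{\partial_S C(S,L)}
\]
is non-decreasing.

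Since $C$ is homogeneous I would write $C(S,K) = S\,c(k)$ with $k = K/S$ and $c = C_1$ the normalized pricing function of \Cref{defNormalization}, which by \Cref{lemmaPropertiesNormalized} is non-negative, non-increasing and convex. Differentiating $S\,c(K/S)$ separately in $K$ and in $S$ gives
\[
\partial_K C(S,K) = c'(k), \qquad \partial_S C(S,K) = c(k) - k\,c'(k),
\]
the second expression being the usual Call delta. Hence, for fixed $S$ and with $k = L/S$, the ratio above equals $g(k) := \dfrac{c'(k)}{c(k) - k\,c'(k)}$, and since $L \mapsto L/S$ is increasing it is enough to show that $g$ is non-decreasing on $(0,\infty)$.

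Setting $d(k) := c(k) - k\,c'(k)$ one has $d'(k) = -k\,c''(k)$, and the quotient rule collapses pleasantly:
\[
g'(k) = \frac{c''(k)\,d(k) - c'(k)\,d'(k)}{d(k)^2} = \frac{c''(k)\bigl(c(k) - k c'(k)\bigr) + k\,c'(k)\,c''(k)}{d(k)^2} = \frac{c(k)\,c''(k)}{d(k)^2},
\]
which is $\ge 0$ because $c \ge 0$ and $c$ is convex. So $g$ is non-decreasing, the criterion of \Cref{propMonotonicityStrike} holds, and $K \to \hat C_K(X,M)$ is non-decreasing.

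I do not expect real difficulty in the algebra — the key identity $g' = c\,c''/d^2$ is a one-line simplification once homogeneity has been used; the points that deserve a little care are regularity and positivity. The denominator $d(k) = \partial_S C(S,K)$ is the Call delta, which must be strictly positive for $g$ (and for the ratio in \Cref{propMonotonicityStrike}) to be well defined; this holds for the non-degenerate homogeneous models one has in mind, and in particular in the Black-Scholes case, where $d(k) = \Phi(d_1) > 0$ and $g(k) = -\Phi(d_2)/\Phi(d_1)$ is visibly increasing from $-1$ to $0$. Also, the computation formally uses $c''$ whereas only $\mathcal C^1$ was assumed: since $c$ is convex it is twice differentiable almost everywhere and $g$ is locally absolutely continuous, so the a.e.\ sign of $g'$ still controls monotonicity; alternatively one approximates $c$ by smooth convex functions and passes to the limit.
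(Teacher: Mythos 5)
Your proposal is correct and follows essentially the same route as the paper: reduce to the criterion of \Cref{propMonotonicityStrike}, use homogeneity to pass to the moneyness variable, and compute that the derivative of $l\mapsto c'(l)/(c(l)-lc'(l))$ equals $c(l)c''(l)/(c(l)-lc'(l))^2\geq 0$ by convexity. Your added remarks on the positivity of the delta in the denominator and on the a.e.\ second derivative of a convex $c$ are sensible care points that the paper glosses over, but they do not change the argument.
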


	\begin{proof}

From \cref{propMonotonicityStrike}, we shall prove that the function
$$L\to\frac{\partial_KC(S,L)}{\partial_{S}C(S,L)}$$
is non-decreasing for every $S$. In the homogeneous case this can be simplified writing $C(S,L)=c\bigl(\frac{L}{S}\bigr)S$ and considering that a function is monotone in $L$ iff it is monotone in $l=\frac{L}{S}$. We then find that the function $K\to \hat C_K(X,M)$ is non-decreasing iff the function
$$l\to \frac{c'(l)}{c(l)-lc'(l)}$$
is non-decreasing. This is actually the case since the derivative of the latter function is $\frac{c(l)c''(l)}{(c(l)-lc'(l))^2}$, which is always positive for convex prices.

\end{proof}

	Consider now a fixed value of \(X\). This property gives that at a fixed
moneyness \(\frac{M}{X}\), the map \(K \to \hat C_K(X,M)\) is
non-decreasing and so, for any continuous increasing function \(Y\) with
\(Y(0)=0\), the map \(t \to \hat C_{Y(t)}(X,M)\) is non-decreasing as
well, meaning there is no calendar-spread arbitrage for the price
surface \((t,M) \to C_{Y(t)}(X,M)\). Since there is no Butterfly
arbitrage in the strike dimension for any \(t\), we have built an
arbitrage-free \emph{forward extrapolation} of the pricing function
\(C_0(X,M) = C(X,M)\). One can see that we treat the strike \(K\) here
as a shadow parameter, completely forgetting its role in the design of
the relative pricing function.

	We now pass to the study of \cref{propConvexX}. Conditions for the
convexity of Calls on Calls with respect to the underlier can be
re-written in the homogeneous case. Differently from the property of
monotonicity with respect to the relative underlying strike, here we do
not achieve to show the convexity property for all homogeneous pricing
function. However, we prove it for the Black-Scholes case.

	\begin{proposition}[Convexity of Calls on Calls with respect to the underlier: the homogeneous case]

Let $C(S,K)$ homogeneous and $\mathcal C^2$ in the first variable. Then the function $X\to\hat C_K(X,M)$ is convex if and only if the function
$$k \to \frac{k^2c''(k)}{c(k)-kc'(k)}$$
is non-decreasing. In particular this holds true in the Black-Scholes case.

\end{proposition}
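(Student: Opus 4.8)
The plan is to invoke \cref{propConvexX} and then unwind its criterion under homogeneity. By \cref{propConvexX}, the map $X\to\hat C_K(X,M)$ is convex (for all $K,M$) if and only if $K\to\frac{\partial_{S}^2 C(S,K)}{\partial_{S} C(S,K)}$ is non-decreasing for every fixed $S$. So the first step is to express $\partial_S C$ and $\partial_S^2 C$ through the normalized function $c$. Writing $C(S,K)=S\,c\bigl(\frac KS\bigr)$ and $k=\frac KS$, a direct differentiation in $S$ gives $\partial_S C(S,K)=c(k)-k c'(k)$ and $\partial_S^2 C(S,K)=\frac{k^2}{S}c''(k)$, so that $\frac{\partial_{S}^2 C(S,K)}{\partial_{S} C(S,K)}=\frac1S\cdot\frac{k^2 c''(k)}{c(k)-k c'(k)}$.

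For a fixed $S>0$ the map $K\mapsto k=K/S$ is an increasing bijection of $(0,\infty)$ and $1/S$ is a positive constant, so $K\mapsto\frac{\partial_{S}^2 C(S,K)}{\partial_{S} C(S,K)}$ is non-decreasing if and only if $k\mapsto\frac{k^2 c''(k)}{c(k)-k c'(k)}$ is non-decreasing; and this last condition no longer depends on $S$. Together with \cref{propConvexX} this yields the stated equivalence.

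It then remains to verify the criterion for $c=\text{bs}$. The plan here is to simplify $\frac{k^2 \text{bs}''(k)}{\text{bs}(k)-k\,\text{bs}'(k)}$ using the standard Black--Scholes identities $\text{bs}'(k)=-\Phi(d_2)$, $\text{bs}''(k)=\frac{\phi(d_2)}{kv}$, and $k\,\phi(d_2)=\phi(d_1)$ (the last following from $d_1-d_2=v$ and $d_1+d_2=-\frac{2\log k}{v}$). The numerator collapses, $\text{bs}(k)-k\,\text{bs}'(k)=\Phi(d_1)$, and the whole expression becomes $\frac1v\cdot\frac{\phi(d_1)}{\Phi(d_1)}$. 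Since $d_1=-\frac{\log k}{v}+\frac v2$ is strictly decreasing in $k$, it suffices to show $x\mapsto\frac{\phi(x)}{\Phi(x)}$ is non-increasing; this follows by writing $\frac{\phi(x)}{\Phi(x)}=\frac{\phi(-x)}{1-\Phi(-x)}$ and invoking the classical monotonicity (increasingness) of the Gaussian hazard rate $y\mapsto\frac{\phi(y)}{1-\Phi(y)}$. Composing the two monotonicities shows that $k\mapsto\frac1v\cdot\frac{\phi(d_1)}{\Phi(d_1)}$ is strictly increasing, hence non-decreasing, as required.

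The routine part is the homogeneity differentiation and the Black--Scholes algebra; the only genuine input is the monotonicity of the Gaussian hazard rate, and that is where I expect the (modest) main obstacle to lie, since one must track carefully that $d_1$ \emph{decreases} in $k$ while the target ratio must \emph{increase} in $k$, so the two reversals of direction have to cancel correctly.
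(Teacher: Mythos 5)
Your proof is correct and follows essentially the same route as the paper: reduce via \cref{propConvexX} and homogeneity to the monotonicity of $k\mapsto\frac{k^2c''(k)}{c(k)-kc'(k)}$, then in the Black--Scholes case simplify to $\frac{1}{v}\frac{\phi(d_1)}{\Phi(d_1)}$ and conclude from the positivity of $x\Phi(x)+\phi(x)$. Your appeal to the increasingness of the Gaussian hazard rate is exactly the paper's Mill's-ratio step in disguise, so the two arguments coincide in substance.
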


	\begin{proof}

From \cref{propConvexX} we shall prove that the function
$$K \to \frac{\partial_{S}^2 C(S, K)}{\partial_{S} C(S, K)}$$
is non-decreasing for every $S$. We can write $C(S,K)=c\bigl(\frac{K}{S}\bigr)S$, develop the derivatives and consider that a function is monotone in $K$ iff it is monotone in $\frac{K}{S}$. We find that in the homogeneous case, $X \to \hat C_K(X, M)$ is convex for any $K,M$ iff the function
$$k \to \frac{k^2c''(k)}{S(c(k)-kc'(k))}$$
is non-decreasing for any $S$. We can drop $S$ at the denominator and conclude.

In the Black-Scholes case, $\text{bs}''(k)=\frac{\phi(d_2)}{kv}$ where $d_{1,2}=-\frac{\log k}{v} \pm \frac{v}2$. Then the above requirement is that
$$k\to\frac{k\phi(d_2)}{v\Phi(d_1)}$$
is non-decreasing. This holds true iff, taking the derivative, the quantity
$$\frac{\phi(d_2)}{v\Phi(d_1)^2}\biggl(\Phi(d_1)+\frac{\Phi(d_1)d_2+\phi(d_1)}{v}\biggr)$$
is positive. Observe that $d_2=d_1-v$, so that we are asking the quantity $\Phi(d_1)d_1+\phi(d_1)$ to be positive. When $d_1$ is positive this is gained. Otherwise, we can use the upper bound of the Mill's ratio $\frac{1-\Phi(x)}{\phi(x)}<\frac{1}{x}$ for every $x>0$ with $x=-d_1$ and obtain the desired property.

\end{proof}

	The convexity in the underlier of the option price is a key property
from a risk analysis perspective, and allows to study the behavior of
the option price dynamic as being locally Black-Scholes-like, with a
positive Gamma for Calls and Puts. Combined with the previous
proposition and the discussion that follows it, we get a forward
extrapolation scheme with nice properties when the convexity property is
fulfilled.

	\begin{remark}

It is interesting to observe that the function $\hat C_K(X,M)$ cannot be homogeneous when $C(S, K)$ is. Indeed, in order to satisfy such a property, its normalized function $\hat C_{K,S}(m) = \frac{\hat C_K(X,Xm)}{X}$ where $S$ is recovered from $X=C(S,K)$ should not depend on $X$, i.e. it should be a function of the form $g(m)$. From \cref{eqCHatnorm}, it should hold $g(m)=\frac{c\bigl(\frac{K}{S} + c\bigl(\frac{K}{S}\bigr)m\bigr)}{c\bigl(\frac{K}{S}\bigr)}$. However the right term depends on $X$ in the $S$ term, so that the equality cannot hold for all $X$. Indeed, for $X$ moving from its lowest values to $\infty$, $S$ moves from $0$ to $\infty$, so that $g(m)=\frac{c(\infty)}{c(\infty)}=1$ and $g(m)=c(m)$ respectively. In non-degenerate cases, normalized prices are not constantly equal to $1$ so that Calls on Calls with homogeneous pricing function cannot be homogeneous.

\end{remark}

	\hypertarget{a-transformation-in-the-tehranchi-space}{%
\subsection{A transformation in the Tehranchi
space}\label{a-transformation-in-the-tehranchi-space}}

	In \cref{lemmaPropertiesNormalized} we have pointed out some necessary
properties that normalized Call prices satisfy: monotonicity and
convexity with respect to the moneyness, and upper and lower bounds
corresponding to the constant function \(\mathbbm 1\) and the normalized
intrinsic value function \((1-k)_+\). Note that the property of
monotonicity is actually implied by the two other properties.

A crucial point here is that the underlier is considered to be frozen
(and, given the normalization, with unit value): in other words we only
consider the partial dependency in the normalized strike (the moneyness)
of the pricing function.

As Tehranchi has deeply studied normalized Call prices in
\cite{tehranchi2020black}, we will name \emph{Tehranchi space} the space
\(\mathbb C\) of such normalized Call prices:
\[\mathbb C = \Bigl\{c:\mathbb R_+\longrightarrow[0,1]\ \bigl|\,  c \; \text{convex},\; \forall m, \; (1-m)_+\leq c(m)\leq 1 \Bigr\}.\]

As an immediate consequence, functions in \(\mathbb C\) are
non-increasing and satisfy \(c(0)=1\). Also, from
\cref{lemmaPropertiesNormalized}, functions obtained by the
normalization \(c(k)\) of homogeneous prices defined in
\cref{defNormalization} belong to the Tehranchi space.

	\Cref{eqCHatnorm} suggests to define the following transformation on
\(\mathbb C\).

\begin{definition}

For any $c \in \mathbb C$ and $k\geq0$ with $c(k)>0$ we define the transformation
$$\mathbb T_k c(\cdot)  := \frac{c(k + c(k) \cdot )}{c(k)}.$$
$k$ is called the relative underlying moneyness of $\mathbb T_k$.

\end{definition}

	Observe that functions in \(\mathbb C\) are either positive, or positive
before a threshold \(a\) and null beyond \(a\). It is natural if needed
to extend the definition of \(\mathbb T_k\) for \(k \geq a\) by
\(\mathbb T_k \equiv \mathbbm{1}\), the constant function equal to the
normalized underlier.

In relation to \cref{eqCHatnorm}, the transformation \(\mathbb T_k\)
corresponds to the normalization of Calls on Calls with homogeneous
pricing function, i.e.~\(\mathbb T_k c(m) = \hat C_{Sk,S}(m)\). Also,
for a given \(S\) and a function \(\mathbb T_kc(\cdot)\), it is always
possible to reconstruct the corresponding non-normalized Call on Call.
In particular, the original underlier Call written on \(S\) has strike
\(K=Sk\), and the Call on Call with strike \(M\) is
\[\hat C_K(C(S,K),M) = \hat C_K(Sc(k),M) = \mathbb T_kc\Bigl(\frac{M}{Sc(k)}\Bigr)Sc(k).\]

	\hypertarget{properties-of-the-transformation-mathbb-t_k}{%
\subsubsection{\texorpdfstring{Properties of the transformation
\(\mathbb T_k\)}{Properties of the transformation \textbackslash mathbb T\_k}}\label{properties-of-the-transformation-mathbb-t_k}}

	The following lemma lists important properties of the transformations
\(\mathbb T_k\). In particular, it states that the new function
\(\mathbb T_k c\) still lives in \(\mathbb C\) and that it has a
derivative in \(0\) which is larger than \(-1\). Furthermore the lemma
gives the limits of the transformation \(\mathbb T_k\) with respect to
\(k\).

	\begin{lemma}[Properties of $\mathbb T_k$]\label{lemmaT}

For any $c \in \mathbb C$ and $k\geq0$ with $c(k)>0$ it holds:
\begin{enumerate}
\item $\mathbb T_k c \in \mathbb C$;
\item $\frac{d}{dm} \mathbb T_kc(0_+) = c'(k)$;
\item $\mathbb T_kc(\infty) = \frac{c(\infty)}{c(k)}$;
\item $\mathbb T_0 c=c$;
\item $\mathbb T_\infty c \equiv \mathbbm{1}$.
\end{enumerate}

\end{lemma}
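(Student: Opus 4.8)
The plan is to verify the five items essentially in the order listed, since each is a short computation once the right substitution is set up. Throughout I fix $c \in \mathbb C$ and $k \geq 0$ with $c(k) > 0$, and I write $a := c(k) > 0$ for brevity so that $\mathbb T_k c(m) = c(k + a m)/a$.

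For item 1, I must show $\mathbb T_k c \in \mathbb C$, i.e. that $\mathbb T_k c$ maps $\mathbb R_+$ into $[0,1]$, is convex, and satisfies $(1-m)_+ \le \mathbb T_k c(m) \le 1$. Convexity is immediate: $m \mapsto c(k + am)$ is a convex function (affine precomposition of the convex $c$) divided by the positive constant $a$. For the upper bound, $c$ is non-increasing so $c(k+am) \le c(k) = a$, giving $\mathbb T_k c(m) \le 1$; in particular the values lie in $[0,1]$ since $c \ge 0$. The lower bound $(1-m)_+ \le \mathbb T_k c(m)$ is the only point requiring a genuine (if short) argument: it suffices to show $\mathbb T_k c(m) \ge 1 - m$ for $m \le 1$, i.e. $c(k+am) \ge a - am = c(k) - c(k) m$. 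This is exactly the statement that the chord/tangent-type inequality $c(y) \ge c(k) - (y-k)$ holds — which is \emph{not} true for a general convex function, so here I will instead use that $c \in \mathbb C$ has slope $\ge -1$ everywhere (a consequence of $c \ge (1-\cdot)_+$ and convexity, already noted in the text: functions in $\mathbb C$ are non-increasing and bounded below by $(1-m)_+$). Concretely, for a convex $c$ bounded below by the intrinsic value, one has $c'(y) \ge -1$ for all $y$ where the derivative exists, hence $c(k+am) - c(k) \ge -(am)$, which is precisely the desired inequality. I should also record that $\mathbb T_k c(0) = c(k)/c(k) = 1$, consistent with membership in $\mathbb C$.

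Items 2–5 are direct. For item 2, the chain rule gives $\frac{d}{dm}\mathbb T_k c(m) = \frac{a \, c'(k+am)}{a} = c'(k+am)$, and evaluating at $m = 0_+$ yields $c'(k)$ (right derivative if $c$ is not differentiable at $k$). For item 3, $\mathbb T_k c(\infty) = \lim_{m\to\infty} c(k+am)/a = c(\infty)/c(k)$ since $k + am \to \infty$. Item 4 is the trivial substitution $\mathbb T_0 c(m) = c(0 + c(0)m)/c(0) = c(m)$ using $c(0)=1$. For item 5 I interpret $\mathbb T_\infty c$ as the limit $\lim_{k\to\infty}\mathbb T_k c$ (in the relevant pointwise sense, as suggested by the preceding remarks): for any fixed $m$, $k + c(k)m$ lies between $k$ and $k + m$ (using $0 \le c(k) \le 1$), so $c(k+c(k)m) \to c(\infty)$ as $k \to \infty$; dividing by $c(k) \to c(\infty)$ gives the limit $1$ provided $c(\infty) > 0$, and when $c(\infty) = 0$ one uses instead the extension convention $\mathbb T_k \equiv \mathbbm 1$ for $k$ beyond the zero-threshold $a$ of $c$, as stated in the text just before the lemma, so that $\mathbb T_k c \equiv \mathbbm 1$ identically for all large $k$.

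The only genuine obstacle is the lower bound in item 1, and specifically the clean justification that every $c \in \mathbb C$ has slope $\ge -1$; I would either cite this as an immediate consequence of convexity together with $c(m) \ge (1-m)_+ \ge 0$ (a convex function staying above the $x$-axis cannot have slope below $-1$ at any point from which the line of slope $-1$ would drive it negative), or, to avoid smoothness assumptions, phrase it via the secant inequality $\frac{c(y) - c(k)}{y-k} \ge -1$ for $y > k$, which follows because $c$ is convex and non-negative on $[0,\infty)$. Everything else is bookkeeping.
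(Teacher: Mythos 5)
Your handling of items 1--4 is correct and considerably more detailed than the paper's, which declares everything except item 5 easy and proves only that point. One remark on item 1: the fact that every $c\in\mathbb C$ has slope $\geq -1$ is indeed what makes the lower bound $(1-m)_+\leq \mathbb T_k c(m)$ work, but your parenthetical justification (``a convex function staying above the $x$-axis cannot have slope below $-1$'') is not the right reason --- $c(x)=(1-10x)_+$ is convex, non-negative and bounded by $1$, yet has slope $-10$. The correct derivation uses $c(0)=1$ together with $c\geq(1-\cdot)_+$: the secant from $0$ to $z$ has slope $\frac{c(z)-1}{z}\geq\frac{(1-z)-1}{z}=-1$, and by convexity every one-sided derivative at $z$ dominates that secant slope. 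Since you do cite the lower bound in your first justification, this is an imprecision rather than an error.

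The genuine gap is in item 5. Your argument computes $\lim_{k\to\infty} c(k+c(k)m)/c(k)=c(\infty)/c(\infty)=1$ and therefore needs $c(\infty)>0$; for $c(\infty)=0$ you fall back on the convention $\mathbb T_k c\equiv\mathbbm 1$ for $k$ past the zero threshold $a$ of $c$. But that convention only bites when $c$ actually vanishes at a finite point. In the most relevant case --- the normalized Black--Scholes price $\mathrm{bs}$, which is strictly positive at every finite $k$ yet has $\mathrm{bs}(\infty)=0$ --- the threshold is $a=\infty$, the convention never applies, and your limit is an indeterminate $0/0$. The paper closes exactly this case with the mean value theorem: $c(k+c(k)m)=c(k)+c'(k+uc(k))c(k)m$ for some $u\in\,]0,m[$, hence $\mathbb T_kc(m)=1+c'(k+uc(k))m$, and $c'\to 0$ at infinity (a convex, non-increasing function bounded below must have derivative tending to $0$), so $\mathbb T_kc(m)\to 1$. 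An equally short fix in your own style: the tangent bound $c(k+c(k)m)\geq c(k)+c'(k)\,c(k)m$ from convexity gives $1\geq\mathbb T_kc(m)\geq 1+c'(k)m\to 1$.
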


	\begin{proof}

The only difficult point is the last one. Observe that at fixed $k$, $m$, it holds $c(k+c(k)m)=c(k)+c'(k+uc(k))c(k)m$ for some $u$ in $]0,m[$. Whence $\mathbb T_k c(m)=1+c'(k+uc(k))m$ and since $c'$ goes uniformly to $0$ at infinity this yields $\mathbb T_k c(\cdot) \to 1$ as $k \to \infty$.

\end{proof}

	At this point, one can consider the family
\(\{\mathbb T_k c: k \geq 0 \}\), as a (one-dimensional)
\emph{enrichment} of the price curve \(c\), given that
\(\mathbb T_0 c=c\). The initial forward moneyness \(k\) should be
considered here as a plain parameter; all the price curves
\(\mathbb T_k c\) are arbitrage-free in the sense that they belong to
\(\mathbb C\).

	In relation to this latter point, one can wonder about the
\emph{composition} of the above enrichment/extensions, like
\(\mathbb T_{k_n} \cdots \mathbb T_{k_2} \mathbb T_{k_1}\). The
following property corresponds to the image of the semigroup property in
the normalized space:

\begin{lemma}[Iterates of $\mathbb T_k$]\label{lemmaTSemigroup}

It holds $\mathbb T_b \mathbb T_a c = \mathbb T_{a+c(a)b}c$.

\end{lemma}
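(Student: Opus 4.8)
The plan is to compute $\mathbb T_b \mathbb T_a c$ directly by unwinding the two definitions and simplifying. First I would set $g := \mathbb T_a c$, so that by definition $g(y) = \dfrac{c(a + c(a)y)}{c(a)}$ for all $y \geq 0$, and in particular $g(0) = \dfrac{c(a)}{c(a)} = 1$, so $g(b) = 1$ whenever $b \geq 0$ (which confirms that $g(b) > 0$ and the outer transformation $\mathbb T_b g$ is well-defined). Actually more care is needed: the relevant value is $g(b)$, not $g(0)$, so I would write $g(b) = \dfrac{c(a + c(a)b)}{c(a)}$ and keep this expression rather than evaluating it.

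Next I would apply the outer transformation: for any $x \geq 0$,
\[
\mathbb T_b \mathbb T_a c\,(x) = \mathbb T_b g\,(x) = \frac{g\bigl(b + g(b)\,x\bigr)}{g(b)}.
\]
Now substitute $g(y) = c(a + c(a)y)/c(a)$ into both the numerator and denominator. The denominator is $g(b) = c(a + c(a)b)/c(a)$. For the numerator, the argument of $g$ is $b + g(b)x = b + \dfrac{c(a+c(a)b)}{c(a)}x$, so
\[
g\bigl(b + g(b)x\bigr) = \frac{1}{c(a)}\, c\!\left(a + c(a)\Bigl(b + \tfrac{c(a+c(a)b)}{c(a)}x\Bigr)\right) = \frac{1}{c(a)}\, c\bigl(a + c(a)b + c(a+c(a)b)\,x\bigr).
\]
Dividing by $g(b) = c(a+c(a)b)/c(a)$, the factors $1/c(a)$ cancel and one is left with
\[
\mathbb T_b \mathbb T_a c\,(x) = \frac{c\bigl((a + c(a)b) + c(a+c(a)b)\,x\bigr)}{c(a+c(a)b)} = \mathbb T_{a+c(a)b}\,c\,(x),
\]
which is exactly the claimed identity.

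This is a routine algebraic verification, so there is no serious obstacle; the only point requiring a word of care is well-definedness of the compositions, i.e.\ checking that $\mathbb T_a c \in \mathbb C$ (so the outer $\mathbb T_b$ makes sense on it) and that $(\mathbb T_a c)(b) > 0$, together with $c(a+c(a)b) > 0$, whenever both sides are being asserted to be defined. The first is precisely item (1) of \cref{lemmaT}; for the positivity, note that if $c(a) > 0$ and $c(a+c(a)b) > 0$ then all denominators above are nonzero, and if instead $c$ vanishes at or before $a + c(a)b$ then by the convention extending $\mathbb T_k$ to $\mathbbm 1$ for $k$ beyond the support both $\mathbb T_{a+c(a)b}c$ and $\mathbb T_b \mathbb T_a c$ equal $\mathbbm 1$, so the identity persists. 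I would state this boundary case briefly and otherwise present the chain of equalities above.
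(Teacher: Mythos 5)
Your computation is correct and is essentially identical to the paper's own proof: both unwind the two definitions, substitute $\mathbb T_a c$ into the outer transformation, and cancel the factors of $c(a)$ to land on $\mathbb T_{a+c(a)b}c$. The extra remarks on well-definedness and the degenerate case where $c$ vanishes are a harmless (and reasonable) addition that the paper omits.
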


\begin{proof}

The following relations hold
\begin{align*}
\mathbb T_b\mathbb T_a c(m) &= \frac{\mathbb T_a c(b+\mathbb T_a c(b)m)}{\mathbb T_a c(b)}\\
&= \mathbb T_a c\Bigl(b+\frac{c(a+c(a)b)}{c(a)}m\Bigr)\frac{c(a)}{c(a+c(a)b)}\\
&= \frac{c\Bigl(a+c(a)\bigl(b+\frac{c(a+c(a)b)}{c(a)}m\bigr)\Bigr)}{c(a)}\frac{c(a)}{c(a+c(a)b)}\\
&= \frac{c(a+c(a)b+ c(a+c(a)b)m}{c(a+c(a)b)}\\
&= \mathbb T_{a+c(a)b}c(m).
\end{align*}

\end{proof}

	This means that the range of \(\mathbb T_.\) is the same as the range of
its iterates, and there is no \emph{additional} enrichment to hope for
from performing those iterations.

	We shall now consider \cref{corolMonotonicity} where we proved that
Calls on Calls with homogeneous pricing function are non-decreasing with
respect to the relative underlying strike. We expect to find a similar
property for \(k\to\mathbb T_k c(m) = \hat C_{Sk,S}(m)\).

	\begin{proposition}[Monotonicity of $\mathbb T_k$ with respect to the relative underlying moneyness]\label{propTincreasing}

For any $c\in\mathbb C$ and $m\geq 0$, the map $k\to\mathbb T_kc(m)$ is non-decreasing.

\end{proposition}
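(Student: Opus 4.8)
The plan is to reduce the statement, via the semigroup identity of \cref{lemmaTSemigroup}, to the single pointwise inequality $\mathbb T_b d(m)\geq d(m)$ valid for every $d\in\mathbb C$ and every $b\geq 0$. Indeed, fix $0\leq k_1\leq k_2$; if $c(k_1)>0$, set $b:=\frac{k_2-k_1}{c(k_1)}\geq 0$, so that $k_2=k_1+c(k_1)b$ and hence $\mathbb T_{k_2}c=\mathbb T_b\mathbb T_{k_1}c$ by \cref{lemmaTSemigroup}. Since $d:=\mathbb T_{k_1}c$ lies in $\mathbb C$ by \cref{lemmaT}, the desired inequality $\mathbb T_{k_2}c(m)\geq\mathbb T_{k_1}c(m)$ is precisely $\mathbb T_b d(m)\geq d(m)$. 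The remaining cases $c(k_1)=0$ (whence also $c(k_2)=0$) or $c(k_2)=0$ are immediate from the convention $\mathbb T_k\equiv\mathbbm 1$ past the vanishing threshold, together with the fact that every element of $\mathbb C$ is bounded above by $\mathbbm 1$.

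It then remains to show: for $d\in\mathbb C$ and $b,m\geq 0$ with $d(b)>0$, one has $d\bigl(b+d(b)m\bigr)\geq d(b)d(m)$. I would put $p:=b+d(b)m$, note $p\geq b\geq 0$, and split according to the position of $p$ relative to $m$. If $p\leq m$, then $d(p)\geq d(m)\geq d(b)d(m)$ because $d$ is non-increasing and $0\leq d(b)\leq 1$, and there is nothing more to do. If $p>m$, I would invoke the convexity of $d$ on $[0,p]$ together with $d(0)=1$, which gives the chord bound $d(x)\leq 1-\frac{x}{p}\bigl(1-d(p)\bigr)$ for all $x\in[0,p]$. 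Evaluating it at the admissible point $x=m$ and multiplying by $d(b)$, then using the elementary identity $d(b)\,\frac{m}{p}=\frac{p-b}{p}$ (which holds since $p-b=d(b)m$), one gets $d(b)d(m)\leq d(b)-\frac{p-b}{p}\bigl(1-d(p)\bigr)$; evaluating the same chord bound at $x=b$ gives $d(b)\leq 1-\frac{b}{p}\bigl(1-d(p)\bigr)$, and adding these, since $\frac{b}{p}+\frac{p-b}{p}=1$, yields $d(b)d(m)\leq d(p)=d\bigl(b+d(b)m\bigr)$.

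The only step with real content is the case $p>m$; the case $p\leq m$ is pure monotonicity, and the reduction is a bookkeeping use of \cref{lemmaTSemigroup,lemmaT}. The point to be careful about is that both applications of the chord inequality are at admissible points, namely $m<p$ and $b\leq p$, and that the identity $p-b=d(b)m$ is exactly what makes the two bounds — one on $d(m)$, one on $d(b)$ — combine. If one would rather argue directly with $c$ and is willing to assume $\mathcal C^1$ regularity, one can instead differentiate $k\mapsto c\bigl(k+c(k)m\bigr)/c(k)$, clear the positive factor $c(k)^2$, and verify the resulting inequality by an analogous two-case split on the sign of $1+c'(k)m$, using the tangent-line inequality $c(k')\geq c(k)+c'(k)(k'-k)$ and $c'(k)\leq c'(k')\leq 0$ with $k':=k+c(k)m$, the general case following by approximation; the semigroup route has the advantage of requiring no smoothness hypothesis.
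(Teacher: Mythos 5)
Your proof is correct, but it follows a genuinely different route from the paper's. The paper differentiates $\mathbb T_kc(m)$ in $k$, then differentiates that expression in $m$, observes that the resulting function of $m$ increases up to $-1/c'(k)$ and then decreases, and so reduces to checking non-negativity at the two endpoints $m=0$ and $m=\infty$; the endpoint at infinity is handled via Tehranchi's representation $c(y)=1-E[S\wedge y]$, which yields $yc'(y)\to 0$. That argument implicitly uses $c''$, i.e.\ more smoothness than membership in $\mathbb C$ guarantees. Your argument instead uses \cref{lemmaTSemigroup} to reduce the monotonicity in $k$ to the single inequality $\mathbb T_bd(m)\geq d(m)$ for $d\in\mathbb C$, i.e.\ $d\bigl(b+d(b)m\bigr)\geq d(b)\,d(m)$, and proves that with two applications of the secant (chord) inequality for convex functions at the admissible points $x=m$ and $x=b$ of $[0,p]$, $p:=b+d(b)m$, the identity $d(b)m=p-b$ making the two bounds sum to exactly $d(p)$. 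I checked the algebra and the edge cases (the reduction needs $c(k_2)>0$ for the semigroup identity to apply, which you correctly carve out via the convention $\mathbb T_k\equiv\mathbbm 1$ past the vanishing threshold, and $p>0$ is automatic in your second case since $p>m\geq 0$): everything holds. What your approach buys is generality and economy — no differentiability assumptions, no recourse to the probabilistic representation theorem — at the cost of not exhibiting the explicit derivative formula, which the paper's computation makes available for other purposes. Your closing remark that the statement is equivalent, via the semigroup, to the special case $\mathbb T_kc\geq c$ is exactly the corollary the paper records immediately after its proposition, so your proof in effect shows that this corollary already implies the full statement.
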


	\begin{proof}

It holds
$$\frac{d}{dk}\mathbb T_kc(m) = \frac{c'(k+c(k)m)(1+c'(k)m)c(k) - c(k+c(k)m)c'(k)}{c(k)^2}.$$
Doing the derivative with respect to $m$, one finds $\frac{d}{dm}\frac{d}{dk}\mathbb T_kc(m)=c''(k+c(k)m)(1+c'(k)m)$ which is positive iff $m<-\frac{1}{c'(k)}$. Then, the function $\frac{d}{dk}\mathbb T_kc(m)$ with variable $m$ is increasing up to $-\frac{1}{c'(k)}$ and then starts decreasing. To show that it is non-negative for every $k$ and $m$, it is enough to show that it is non-negative for every $k$ and $m\in\{0,\infty\}$.

At $m=0$, it is easy to see $\frac{d}{dk}\mathbb T_kc(0) = 0$. From Theorem 2.1.2 of \cite{tehranchi2020black}, there exists a random variable $S$ such that $c(y) = 1-E[S\land y]$ and $-c'(y) = P(S>y)$. Then
\begin{align*}
c(y) - yc'(y) &= 1- E[S\land y]+yP(S>y)\\
&= 1- \int(s\land y)f_S(s)\,ds + y\int_y^\infty f_S(s)\,ds\\
&= 1-\int_0^ysf_S(s)\,ds-y\int_y^\infty f_S(s)\,ds + y\int_y^\infty f_S(s)\,ds\\
&= 1 -\int_0^ysf_S(s)\,ds
\end{align*}
and this goes to $1-E[S] = c(\infty)\geq 1$ as $y$ goes to $\infty$. As a consequence, $yc'(y)$ goes to $0$ and the result holds.

\end{proof}

	Note that the above proposition implies in particular
\[c(m) = \mathbb T_0c(m) \leq \mathbb T_kc(m) = \frac{c(k+c(k)m)}{c(k)}.\]

	In \cref{figureT} we plot the function \(\mathbb T_kc(m)\) with respect
to \(k\) for different fixed \(m\)s. The function \(c\) is a normalized
Black-Scholes Call function with implied total volatility equal to
\(0.2\). It can be seen that \(\mathbb T_kc(m)\) is non-decreasing in
\(k\) (as shown in \cref{propTincreasing}) and non-increasing in \(m\),
as expected since \(c\) is a non-increasing function.

\begin{figure}
	\centering
	\includegraphics[width=.5\linewidth]{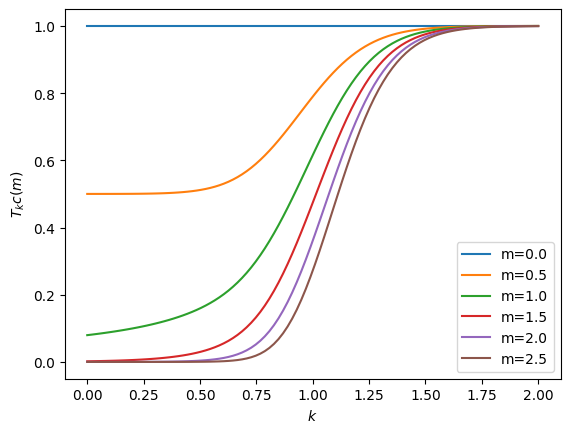}
	\caption{Function $k\to \mathbb T_kc(m)$ for different levels of $m$. The original Black-Scholes implied total volatility is set at $0.2$.}
	\label{figureT}
\end{figure}
    
	\hypertarget{a-slight-generalization}{%
\subsubsection{A slight generalization}\label{a-slight-generalization}}

	From \cref{lemmaT}, the function \(m\to\mathbb T_kc(m)\) has a
particular feature at \(0\). Indeed, its right derivative is
\(\frac{d}{dm} \mathbb T_kc(0_+) = c'(k)\) which for \(k>0\) is in
general larger than \(-1\). As already seen in \cref{remarkDeriv1}, this
feature might be annoying since it implies the presence of a mass in
\(0\) of the probability density function associated to the underlier of
prices.

We are then interested in generalizing suitably the transformation
\(\mathbb T_k\) in order to get rid of this mass at \(0\) phenomenon.
This generalization is formulated on the Tehranchi space here. In
\cref{the-change-of-probability-with-a-mass-at-0} we will see how to
change the probability measure in order to lift Calls on Calls to Calls
on Calls with no mass in \(0\), and will provide the connection with the
generalized transformation of this section.

	To introduce the generalized transformation, firstly consider
\(\alpha\geq 0\) and the quantity \(\mathbb V_{k,\alpha}\) defined as
follows.

\begin{definition}

For any $c \in \mathbb C$ and $\alpha, k\geq0$ with $c(k)>0$, we define the transformation
$$\mathbb V_{k,\alpha} c(\cdot) := \frac{c(k + \alpha\cdot)}{c(k)}.$$

\end{definition}

	The transformation \(\mathbb T_k\) can be written as a function of
\(\mathbb V_{k,\alpha}\) in the sense that
\(\mathbb T_kc = \mathbb V_{k,c(k)}c\).

	It is easy to see that functions \(m\to\mathbb V_{k,\alpha} c(\cdot)\)
are convex and bounded by \(1\). Adding the requirement that
\(\alpha(k)\leq -\frac{c(k)}{c'(k)}\) also guarantees the lower bound
\((1-m)_+\), so that the transformations \(\mathbb V_{k,\alpha}\) can be
viewed as operating on the Tehranchi space.

\begin{lemma}[Properties of $\mathbb V_{k,\alpha}$]\label{lemmaV}

For any $c \in \mathbb C$, $\alpha, k\geq0$ with $c(k)>0$ and $\alpha\leq-\frac{c(k)}{c'(k)}$, it holds
\begin{enumerate}
\item $\mathbb V_{k,\alpha} c\in\mathbb C$;
\item $\frac{d}{dm} \mathbb V_{k,\alpha}c(0_+) = \frac{c'(k)}{c(k)}\alpha\geq 1$;
\item $\mathbb V_{k,\alpha}c(\infty) = \frac{c(\infty)}{c(k)}$;
\item $\mathbb V_{0,1} c=c$.
\end{enumerate}

\end{lemma}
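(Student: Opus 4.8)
The plan is to verify the four claims in order, each reducing to a short computation using the representation of $\mathbb{V}_{k,\alpha}c$ together with the properties of $c$ as an element of $\mathbb{C}$ and the assumption $\alpha \le -\frac{c(k)}{c'(k)}$. For claim (1), I would first note that $\mathbb{V}_{k,\alpha}c(m) = \frac{c(k+\alpha m)}{c(k)}$ is a composition of the convex non-increasing function $c$ with the affine increasing map $m \mapsto k+\alpha m$, hence convex and non-increasing in $m$. The upper bound $\mathbb{V}_{k,\alpha}c(m) \le 1$ follows from $c(k+\alpha m) \le c(k)$ since $c$ is non-increasing and $\alpha m \ge 0$. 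The lower bound $(1-m)_+ \le \mathbb{V}_{k,\alpha}c(m)$ is the only point needing the hypothesis on $\alpha$: since $\mathbb{V}_{k,\alpha}c$ is convex with value $1$ at $m=0$ and right derivative $\frac{\alpha c'(k)}{c(k)}$ (computed next), it lies above its tangent line at $0$, namely $1 + \frac{\alpha c'(k)}{c(k)}m$, and the constraint $\alpha \le -\frac{c(k)}{c'(k)}$ is exactly $\frac{\alpha c'(k)}{c(k)} \ge -1$, so the tangent line dominates $(1-m)$; combined with $\mathbb{V}_{k,\alpha}c \ge 0$ (values of $c$ are nonnegative), this gives $\mathbb{V}_{k,\alpha}c(m) \ge (1-m)_+$.

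For claim (2), I would simply apply the chain rule: $\frac{d}{dm}\mathbb{V}_{k,\alpha}c(m) = \frac{\alpha c'(k+\alpha m)}{c(k)}$, and evaluating at $m=0_+$ gives $\frac{\alpha c'(k)}{c(k)}$; the inequality $\frac{c'(k)}{c(k)}\alpha \ge -1$ restates the hypothesis on $\alpha$ (note there is a sign subtlety — since $c'(k)\le 0$, the quantity $\frac{\alpha c'(k)}{c(k)}$ is negative, so the assertion should read $\ge -1$ rather than $\ge 1$; I would state it as $-1 \le \frac{d}{dm}\mathbb{V}_{k,\alpha}c(0_+) \le 0$, matching the intended "no mass at $0$" interpretation when equality $\alpha = -\frac{c(k)}{c'(k)}$ holds). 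Claim (3) is the limit $m \to \infty$: since $k+\alpha m \to \infty$ and $c(\infty) := \lim_{y\to\infty}c(y)$ exists (monotone bounded), $\mathbb{V}_{k,\alpha}c(m) = \frac{c(k+\alpha m)}{c(k)} \to \frac{c(\infty)}{c(k)}$. Claim (4) is immediate: $\mathbb{V}_{0,1}c(m) = \frac{c(0+1\cdot m)}{c(0)} = \frac{c(m)}{1} = c(m)$ using $c(0)=1$.

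The main obstacle — to the extent there is one — is the lower-bound part of claim (1), where one must combine convexity, the tangent-line inequality at $0$, the nonnegativity of $c$, and the precise form of the constraint on $\alpha$; all other steps are one-line verifications. I would also want to double-check the edge case $\alpha = 0$, which is permitted by $\alpha \ge 0$ but then forces $\mathbb{V}_{k,0}c \equiv \mathbbm{1}$, and the hypothesis $\alpha \le -\frac{c(k)}{c'(k)}$ is vacuously satisfied only if $c'(k) \le 0$ (always true); in that degenerate case claims (1)–(4) still hold trivially, so no separate argument is needed.
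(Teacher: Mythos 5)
Your proof is correct and follows essentially the same route as the paper's: compute the first derivative, get convexity and the upper bound from the corresponding properties of $c$, and derive the lower bound $(1-m)_+$ from convexity via the condition $\alpha\leq-\frac{c(k)}{c'(k)}$ (the paper applies the mean value theorem to $c(k+\alpha m)-c(k)$ and then monotonicity of $c'$, you use the supporting line of $\mathbb V_{k,\alpha}c$ at $0$ — the same idea in a different guise). You are also right that item (2) as stated contains a sign typo: since $c'(k)\leq 0$, $\alpha\geq 0$ and $c(k)>0$, the hypothesis gives $\frac{c'(k)}{c(k)}\alpha\geq -1$, not $\geq 1$.
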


\begin{proof}

The derivative and second derivative of $\mathbb V_{k,\alpha} c(m)$ with respect to $m$ are respectively $\frac{c'(k+\alpha m)}{c(k)}\alpha$ and $\frac{c'(k+\alpha m)}{c(k)}\alpha^2$. Then $\mathbb V_{k,\alpha} c(m)$ is convex in $m$. Since $c\in\mathbb C$, it is non-increasing and $\mathbb V_{k,\alpha} c(m)\leq 1$. The inequality $\mathbb V_{k,\alpha} c(m) \geq (1-m)_+$ amounts to $c(k+\alpha m)-c(k) \geq -c(k) m$ for $m<1$; by the mean value theorem the LHS writes $\alpha c'(k+\alpha u)m$ for some $u$ in $]0,m[$ where the derivative is negative. Since $c$ is convex, the latter quantity is larger than $\alpha c'(k)m$ and $\mathbb V_{k,\alpha} c(m) \geq (1-m)_+$ holds true as soon as $\alpha\leq -\frac{c(k)}{c'(k)}$.

The other points follow immediately.

\end{proof}

	As the transformation \(\mathbb T_k\), also its generalization
\(\mathbb V_{k,\alpha}\) satisfies a semigroup property and, as a
consequence, iterations of this transformation do not further enrich the
family \(\{\mathbb V_{k,\alpha}: k \geq 0, \alpha\geq0\}\):

	\begin{lemma}[Iterates of $\mathbb V_{k,\alpha}$]\label{lemmaIteratesV}

It holds $\mathbb V_{b,\beta} \mathbb V_{a,\alpha} c = \mathbb V_{a+\alpha b,\alpha\beta}c$.

Furthermore, if $\alpha\leq-\frac{c(a)}{c'(a)}$ and $\beta\leq-\frac{\mathbb V_{a,\alpha} c(b)}{\frac{d}{dm}\mathbb V_{a,\alpha} c(b)}$ then $\alpha\beta\leq-\frac{c(a+\alpha b)}{c'(a+\alpha b)}$.

\end{lemma}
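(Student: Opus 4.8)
The plan is to treat the two assertions separately: the purely algebraic semigroup identity first, then the propagation of the admissibility bound.

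For the identity I would simply unfold the definitions. Writing $\mathbb V_{a,\alpha} c(\,\cdot\,) = c(a+\alpha\,\cdot\,)/c(a)$, substitute this into
$$\mathbb V_{b,\beta}\bigl(\mathbb V_{a,\alpha} c\bigr)(m) = \frac{(\mathbb V_{a,\alpha} c)(b+\beta m)}{(\mathbb V_{a,\alpha} c)(b)}.$$
The normalising constant $c(a)$ occurs in both numerator and denominator and cancels, leaving $c\bigl(a+\alpha(b+\beta m)\bigr)/c(a+\alpha b)$. Since $a+\alpha(b+\beta m) = (a+\alpha b) + (\alpha\beta)m$, the right-hand side is exactly $\mathbb V_{a+\alpha b,\alpha\beta} c(m)$. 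This is the analogue of the computation in \cref{lemmaTSemigroup}, only shorter, because the scaling factor is now an explicit parameter instead of $c(k)$.

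For the constraint, I would first record --- from the proof of \cref{lemmaV}, or by direct differentiation --- the two formulas $\mathbb V_{a,\alpha} c(b) = c(a+\alpha b)/c(a)$ and $\frac{d}{dm}\mathbb V_{a,\alpha} c(b) = \alpha\, c'(a+\alpha b)/c(a)$. Taking their ratio, the factor $c(a)$ cancels once more and
$$-\frac{\mathbb V_{a,\alpha} c(b)}{\frac{d}{dm}\mathbb V_{a,\alpha} c(b)} = -\frac{c(a+\alpha b)}{\alpha\, c'(a+\alpha b)}.$$
Hence the hypothesis $\beta \leq -\frac{c(a+\alpha b)}{\alpha c'(a+\alpha b)}$, multiplied through by $\alpha \geq 0$ (which preserves the inequality), becomes $\alpha\beta \leq -\frac{c(a+\alpha b)}{c'(a+\alpha b)}$, which is precisely the admissibility condition attached to the transformation $\mathbb V_{a+\alpha b,\alpha\beta}$ produced by the first part.

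The steps are each one-line substitutions, so the only thing that needs care --- and what I would flag as the main, though minor, obstacle --- is keeping the degenerate configurations honest. If $\alpha = 0$ the middle transformation is constantly equal to $1$ and every ratio above is $+\infty$, so both the hypothesis on $\beta$ and the conclusion are vacuous; likewise if $c'$ vanishes at $a+\alpha b$. In the generic case $\alpha>0$, $c'(a+\alpha b)<0$, one should also note that the first hypothesis $\alpha \leq -c(a)/c'(a)$ is exactly what \cref{lemmaV} needs in order to conclude $\mathbb V_{a,\alpha} c \in \mathbb C$, which legitimises writing the bound on $\beta$ in terms of that function and guarantees that $a+\alpha b$ lies in the positivity region of $c$, so that $\mathbb V_{a+\alpha b,\alpha\beta}$ is well defined. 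Modulo this bookkeeping, both claims follow directly.
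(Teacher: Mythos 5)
Your proof is correct and follows essentially the same route as the paper: the semigroup identity is obtained by unfolding the definition exactly as in the proof of Lemma~\ref{lemmaTSemigroup}, and the bound on $\alpha\beta$ is obtained by computing $\frac{d}{dm}\mathbb V_{a,\alpha}c(b)=\frac{\alpha c'(a+\alpha b)}{c(a)}$, cancelling $c(a)$ in the ratio, and multiplying through by $\alpha\geq 0$. The extra remarks on the degenerate cases ($\alpha=0$ or $c'(a+\alpha b)=0$) are sensible bookkeeping that the paper leaves implicit.
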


\begin{proof}

The proof of the first statement is similar to the proof of \cref{lemmaTSemigroup}. 

Since $\frac{d}{dm}\mathbb V_{a,\alpha} c(b)=\frac{c'(a+\alpha b)}{c(a)}\alpha$, if $\beta\leq-\frac{\mathbb V_{a,\alpha} c(b)}{\frac{d}{dm}\mathbb V_{a,\alpha} c(b)}$ then $\beta\leq-\frac{c(a+\alpha b)}{\alpha c'(a+\alpha b)}$ and the second statement follows.

\end{proof}

	The second statement of \cref{lemmaIteratesV} implies that the family
\(\bigl\{\mathbb V_{k,\alpha}c: c\in\mathbb C, k \geq 0, 0\leq\alpha\leq-\frac{c(k)}{c'(k)}\bigr\}\)
(where \(c\in\mathbb C\) is also a parameter) is closed under
iterations.

	From \cref{lemmaV}, we see that the critical case
\(\alpha = -\frac{c(k)}{c'(k)}\) is of particular interest since it will
entail the property \(\frac{d}{dm} \mathbb V_{k,\alpha}c(0_+) = -1\).
This gives rise to a new transform on the Tehranchi space:

	\begin{definition}
For any $c \in \mathbb C$ and $k\geq0$ with $c(k)>0$ and $c'(k)\neq 0$ we define the transformation
$$\mathbb U_k c(\cdot) := \frac{c\bigl(k - \frac{c(k)}{c'(k)} \cdot\bigr)}{c(k)}.$$

\end{definition}

The transformation \(\mathbb U_k\) can be written as a function of
\(\mathbb V_{k,\alpha}\) in the sense that
\(\mathbb U_k c = \mathbb V_{k,-\frac{c(k)}{c'(k)}}c\), so properties of
the latter transformation (for fixed \(c\)) still hold for the former
one.

\begin{lemma}[Properties of $\mathbb U_k$]\label{lemmaU}

For any $c \in \mathbb C$ and $k\geq0$ with $c(k)>0$ and $c'(k)\neq 0$, it holds:
\begin{enumerate}
\item $\mathbb U_k c \in \mathbb C$;
\item $\frac{d}{dm} \mathbb U_kc(0_+) = -1$;
\item $\mathbb U_kc(\infty) = \frac{c(\infty)}{c(k)}$;
\item If $c'(0_+)=-1$, then $\mathbb U_0 c=c$.
\end{enumerate}

\end{lemma}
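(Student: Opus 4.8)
The plan is to observe that $\mathbb U_k$ is just the transformation $\mathbb V_{k,\alpha}$ of \cref{lemmaV} evaluated at the extremal parameter $\alpha = -\frac{c(k)}{c'(k)}$, as already remarked right after the definition. First I would check that this choice of $\alpha$ is admissible for \cref{lemmaV}. Since $c\in\mathbb C$ is convex and non-increasing we have $c'(k)\leq 0$, and the standing hypothesis $c'(k)\neq 0$ upgrades this to $c'(k)<0$; together with $c(k)>0$ this gives $\alpha = -\frac{c(k)}{c'(k)}>0$. Trivially this $\alpha$ satisfies the constraint $\alpha\leq -\frac{c(k)}{c'(k)}$ required by \cref{lemmaV} (with equality). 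Hence each of the four conclusions is obtained by specializing the corresponding item of \cref{lemmaV}.

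Concretely: item 1, $\mathbb U_k c=\mathbb V_{k,\alpha}c\in\mathbb C$, is item 1 of \cref{lemmaV}; item 3, $\mathbb U_k c(\infty)=\frac{c(\infty)}{c(k)}$, is item 3 of \cref{lemmaV}; and for item 2 we use that item 2 of \cref{lemmaV} gives $\frac{d}{dm}\mathbb V_{k,\alpha}c(0_+)=\frac{c'(k)}{c(k)}\alpha$, which for $\alpha=-\frac{c(k)}{c'(k)}$ equals exactly $-1$.

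For item 4 I would specialize to $k=0$. The hypothesis $c'(0_+)=-1$ is in particular nonzero, so $\mathbb U_0$ is well-defined, and since every $c\in\mathbb C$ satisfies $c(0)=1$ the parameter is $\alpha=-\frac{c(0)}{c'(0_+)}=-\frac{1}{-1}=1$; then $\mathbb V_{0,1}c=c$ (item 4 of \cref{lemmaV}) yields $\mathbb U_0 c=c$. There is essentially no analytic obstacle here, as all the real work sits in \cref{lemmaV}; the only points deserving a line of care are the exclusion of the degenerate case $c'(k)=0$ (handled by hypothesis, and needed for $\mathbb U_k$ to make sense) and the fact that in item 4 one works with the right derivative $c'(0_+)$ rather than a two-sided derivative.
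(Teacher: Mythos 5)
Your proposal is correct and matches the paper's intended argument exactly: the paper states \cref{lemmaU} without a separate proof, relying on the preceding remark that $\mathbb U_k c = \mathbb V_{k,-\frac{c(k)}{c'(k)}}c$ so that everything specializes from \cref{lemmaV}, which is precisely what you carry out (including the verification that $\alpha=-\frac{c(k)}{c'(k)}>0$ meets the admissibility constraint with equality, and the $k=0$, $\alpha=1$ computation for item 4).
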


	\Cref{lemmaIteratesV} can be applied to \(\mathbb U_k\) but it does not
automatically guarantee that iterates of \(\mathbb U_k\) are still
functions in the family \(\{\mathbb U_k c: k \geq 0 \}\), even though
they certainly live in
\(\{\mathbb V_{k,\alpha}: k \geq 0, \alpha\geq0\}\). In the following
lemma we prove this point.

\begin{lemma}[Iterates of $\mathbb U_k$]

It holds $\mathbb U_b \mathbb U_a c= \mathbb U_{a-\frac{c(a)}{c'(a)}b}c$.

\end{lemma}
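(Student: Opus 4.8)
The idea is that $\mathbb U_k$ is the boundary case of $\mathbb V_{k,\alpha}$, namely $\mathbb U_k c = \mathbb V_{k,\alpha}c$ with $\alpha = -\tfrac{c(k)}{c'(k)}$, so the claim should follow by specializing the semigroup identity $\mathbb V_{b,\beta}\mathbb V_{a,\alpha}c = \mathbb V_{a+\alpha b,\,\alpha\beta}c$ of \cref{lemmaIteratesV} and checking that the resulting product $\alpha\beta$ is again the boundary value at the shifted point. Concretely, first I would write $g := \mathbb U_a c = \mathbb V_{a,\alpha}c$ with $\alpha := -\tfrac{c(a)}{c'(a)}$, so that $\mathbb U_b g = \mathbb V_{b,\beta}g$ with $\beta := -\tfrac{g(b)}{g'(b)}$.

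Next I would compute $g(b)$ and $g'(b)$ explicitly using the formulas already recorded in the proof of \cref{lemmaIteratesV}: $\mathbb V_{a,\alpha}c(b) = \tfrac{c(a+\alpha b)}{c(a)}$ and $\tfrac{d}{dm}\mathbb V_{a,\alpha}c(b) = \tfrac{c'(a+\alpha b)}{c(a)}\alpha$. Hence
$$\beta = -\frac{c(a+\alpha b)/c(a)}{\alpha\, c'(a+\alpha b)/c(a)} = -\frac{c(a+\alpha b)}{\alpha\, c'(a+\alpha b)}.$$
Then by the first part of \cref{lemmaIteratesV}, $\mathbb U_b\mathbb U_a c = \mathbb V_{b,\beta}\mathbb V_{a,\alpha}c = \mathbb V_{a+\alpha b,\,\alpha\beta}c$.

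Finally I would simplify the two indices. The shift is $a+\alpha b = a - \tfrac{c(a)}{c'(a)}b$, which is exactly the index claimed on the right-hand side. For the scale, using $\alpha = -\tfrac{c(a)}{c'(a)}$,
$$\alpha\beta = \alpha\cdot\Bigl(-\frac{c(a+\alpha b)}{\alpha\, c'(a+\alpha b)}\Bigr) = -\frac{c(a+\alpha b)}{c'(a+\alpha b)},$$
so $\mathbb V_{a+\alpha b,\,\alpha\beta}c = \mathbb V_{a+\alpha b,\,-c(a+\alpha b)/c'(a+\alpha b)}c = \mathbb U_{a+\alpha b}c = \mathbb U_{a-\frac{c(a)}{c'(a)}b}c$, as wanted. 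The only genuinely delicate point is this last cancellation making $\alpha\beta$ land precisely on the boundary value (so that the composite is a $\mathbb U$-transform and not merely a $\mathbb V$-transform); it is forced by the algebra, and the second statement of \cref{lemmaIteratesV} already shows the corresponding inequality, so here we just observe it holds with equality. One should also note in passing that $c'$ does not vanish at $a+\alpha b$ whenever it does not vanish at $a$ and at $b$ for $g$, so that $\mathbb U_{a-\frac{c(a)}{c'(a)}b}c$ is well defined.
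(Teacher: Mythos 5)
Your proposal is correct and follows essentially the same route as the paper's own proof, which likewise writes $\mathbb U_b\mathbb U_a c=\mathbb V_{b,\beta}\mathbb V_{a,\alpha}c$ with $\alpha=-\frac{c(a)}{c'(a)}$ and $\beta=-\frac{c(a+\alpha b)}{\alpha c'(a+\alpha b)}$, applies the semigroup identity of the iterates-of-$\mathbb V_{k,\alpha}$ lemma, and observes that $\alpha\beta=-\frac{c(a+\alpha b)}{c'(a+\alpha b)}$ lands exactly on the boundary value. The algebra and the final identification with $\mathbb U_{a-\frac{c(a)}{c'(a)}b}c$ match the paper's argument step for step.
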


\begin{proof}

The proof can be shown directly as in \cref{lemmaTSemigroup}. Alternatively, applying \cref{lemmaIteratesV}, we have $\mathbb U_b \mathbb U_a c = \mathbb V_{b,\beta} \mathbb V_{a,\alpha} c$ where $\alpha = -\frac{c(a)}{c'(a)}$ and $\beta=-\frac{\mathbb V_{a,\alpha}c(b)}{\frac{d}{dm}\mathbb V_{a,\alpha}c(b)} = -\frac{c(a+\alpha b)}{\alpha c'(a+\alpha b)}$. Then $a+\alpha b = a-\frac{c(a)}{c'(a)}b$ and $\alpha\beta=-\frac{c(a+\alpha b)}{c'(a+\alpha b)}$, so $\mathbb V_{a+\alpha b,\alpha\beta}c = \mathbb U_{a-\frac{c(a)}{c'(a)}b}c$.

\end{proof}

	The transformation \(\mathbb U_k\) has a derivative in \(0_+\) equal to
\(-1\) and is then linked to probabilitiees with no mass in \(0\). This
will allow us to define new closed pricing formulas in
\cref{smile-symmetry-and-a-lift-of-the-relative-pricing-function}, that
we will call lifted Calls on Calls.

	\hypertarget{new-closed-formulas}{%
\section{New closed formulas}\label{new-closed-formulas}}

	In this section, we provide a quasi-closed formula for the pricing
function within the Carr-Pelts-Tehranchi family (see
\cite{carr2015duality,tehranchi2020black}), which generalizes the
Black-Scholes pricing function associated to the standard normal density
to any log-concave (and even, unimodal, as shown by Vladimir Lucic in
\cite{lucic2020lecture}) density function. This includes the
Black-Scholes case as a particular case. Those pricing functions are the
pricing functions of options on option, where the price of the latter
option is viewed as the underlier.

The reason to work with this family of pricing function is that a
variational formula for the option price, reminiscent of a dual
transform, is available, and it turns out that this variational formula
can be inverted to get an expression for the underlier value in terms of
the option price and the other parameters.

In the second section below, we derive new closed formulas from the
normalized price transformations - these formulas will yield (new)
homogeneous pricing functions when de-normalized.

	\hypertarget{the-carr-pelts-tehranchi-family}{%
\subsection{The Carr-Pelts-Tehranchi
family}\label{the-carr-pelts-tehranchi-family}}

	Knowing the expression of the underlier \(S\) as a function of the
option price \(X:=C(S,K)\) yields a closed formula for the option price
which is given by \(C(S,K+M)\), as a pricing function of \(X\) and
\(M\). In general, such an expression is unavailable, even if one can
resort to straightforward numerical procedures like a basic dichotomy to
compute it numerically, given the monotonicity of the map
\(S \to C(S,K)\).

It turns out that one can say more in the case of the
Carr-Pelts-Tehranchi family, due to the availability of a particular
variational formulation for the option price.

	We dub Carr--Pelts--Tehranchi (CPT) model the explicit arbitrage-free
parametrization for FX option prices introduced by Carr and Pelts in
\(2015\) at a conference in honor of Steven Shreve at Purdue university
(see \cite{carr2015duality}). The model has then been independently
rediscovered by Tehranchi in \cite{tehranchi2020black} while studying
advanced properties of the Black--Scholes formula.

In the CPT model, the family of Call prices is indexed by log-concave
densities \(f:\mathbb R\to[0,\infty[\) and increasing functions
\(y:[0,\infty[\to\mathbb R\) (which correspond to the total implied
volatility in the Black-Scholes framework). The Black-Scholes model is a
special case of CPT choosing \(f\) to be the standard normal probability
density function \(\phi\) and \(y(t) = v = \sigma\sqrt t\) with
reference to \cref{eqBS}.

Similarly to Black-Scholes, the CPT model has the nice feature that
option prices have a closed quasi-explicit formula. Indeed, the CPT Call
price is \begin{equation}\label{eqCPTC}
C^\text{CPT}(S,K;f,y(t)) := \int_{-\infty}^{\infty}\biggl(Sf(z+y(t))-Kf(z)\biggr)_+\,dz.
\end{equation} Tehranchi shows in section 3.2 of
\cite{tehranchi2020black} that if \(f\) is log-concave and \(y\) is
increasing, then prices in \cref{eqCPTC} represent a Call price surface
of the form \(E[(S_T-K)_+]\) for a certain non-negative supermartingale
\(S_t\) such that \(E[S_T]=S\). Equivalently, Call prices are
non-decreasing in \(t\), convex in \(K\) and equal to \((S-K)_+\) for
\(t=0\).

Remarkably, prices in \cref{eqCPTC} can actually be represented with a
formulation very close to the Black-Scholes one as
\[C^\text{CPT}(S,K;f,y(t)) = SF(d(K,y(t);f)+y(t)) - KF(d(K,y(t);f)\]
where \[d(K,y;f) := \sup\biggl\{z:\frac{f(z+y)}{f(z)}\geq K\biggr\}\]
and \(z\) lives in the support of \(f\), and \(F\) is the cumulative
density function associated with \(f\). In the Black-Scholes case, the
function \(d(K,y(t);f)\) can be obtained explicitly and is given by the
classical expression
\(d_2(S,K,v)=-\frac{\log\frac{K}{S}}{v} - \frac{v}{2}\).

Note that the CPT pricing functions are homogeneous ones.

Furthermore, Lucic has shown in \cite{lucic2020lecture} that under the
more general hypothesis that \(f\) is unimodal, i.e.~it has a single
peak (point of maximum), and \(y\) is increasing, prices in
\cref{eqCPTC} are still a Call price surface.

Since in the present article we are considering smiles of the Call
surface, i.e.~for fixed time-to-maturity, we will drop the dependence to
\(t\) of \(y\).

	One of the important properties of the CPT family is the availability of
a variational formula for the option price (Theorem 4.1.2 of
\cite{tehranchi2020black}):
\[C^\text{CPT}(S,K;f,y(t)) = \sup_{p \in ]0,1[}  S F(F^{-1}(p)+y) - p K.\]

	This formula is the key of the following result.

\begin{lemma}[Inversion of the CPT formula]\label{lemmaPsi}

Let $f$ a unimodal probability density function and $F$ its cumulative density function. For $K,y\geq 0$ let $X=C^\text{CPT}(S,K;f,y)$, then it holds
$$S = K \psi\biggl(\frac{X}K,y;F\biggr)$$
where
$$\psi(a,y;F):= \inf_{r \in \mathbb R} \frac{a+F(r-y)}{F(r)}.$$

\end{lemma}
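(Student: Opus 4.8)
The plan is to extract everything from the variational representation quoted just above the statement. Writing $X=C^{\text{CPT}}(S,K;f,y)$ and substituting $p=F(r)$ in $C^{\text{CPT}}(S,K;f,y)=\sup_{p\in]0,1[}SF(F^{-1}(p)+y)-pK$ — legitimate since $F$ is continuous with $F(-\infty)=0$, $F(+\infty)=1$ — turns the formula into
\[X=\sup_{r\in\mathbb R}\bigl(SF(r+y)-KF(r)\bigr).\]
I will assume $X>0$ (automatic in the Black–Scholes case, and the only situation in which inverting $S\mapsto X$ is meaningful) and $S>0$. The idea is to read this identity ``the other way round'': as the lower envelope it defines, it pins down $S$ in terms of $X$.

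For the easy inequality, fix $r$: the supremum gives $X\ge SF(r+y)-KF(r)$, hence $S\le\frac{X+KF(r)}{F(r+y)}$ when $F(r+y)>0$; if $F(r+y)=0$ then $F(r)=0$ and the right-hand side is $+\infty$, so the bound holds for all $r$. Replacing $r$ by $r-y$ and dividing by $K$ yields $\frac SK\le\frac{X/K+F(r-y)}{F(r)}$ for every $r$, that is $\frac SK\le\psi\bigl(\frac XK,y;F\bigr)$. For the reverse inequality, take a maximizing sequence $r_n$ with $SF(r_n+y)-KF(r_n)\to X$; since $KF(r_n)\ge0$ and $X>0$, eventually $SF(r_n+y)\ge X/2$, so $F(r_n+y)\ge X/(2S)>0$. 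Then
\[\frac{X+KF(r_n)}{F(r_n+y)}=S+\frac{X-\bigl(SF(r_n+y)-KF(r_n)\bigr)}{F(r_n+y)}\le S+\frac{2S}{X}\bigl(X-SF(r_n+y)+KF(r_n)\bigr)\longrightarrow S,\]
so $\inf_r\frac{X+KF(r)}{F(r+y)}\le S$; after the same change of variable, $\psi\bigl(\frac XK,y;F\bigr)\le\frac SK$. Combining the two inequalities gives $S=K\psi\bigl(\frac XK,y;F\bigr)$.

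The main obstacle is the reverse inequality — more precisely, guaranteeing that $F(r_n+y)$ does not collapse to $0$ along a maximizing sequence; this is exactly where the positivity of $X$ (equivalently $y>0$, away from pathologically heavy-tailed $f$) enters. If one assumes enough regularity for the supremum in the variational formula to be attained at some $r^*$ (it is, at the point $d$ appearing in the CPT representation $C^{\text{CPT}}(S,K)=SF(d+y)-KF(d)$), the argument collapses to the one-line identity $\frac{X+KF(r^*)}{F(r^*+y)}=S$; the maximizing-sequence version is simply the robust way to cover the general unimodal case. A minor preliminary is the substitution $p=F(r)$, which uses only that a cumulative distribution function is continuous and onto $]0,1[$.
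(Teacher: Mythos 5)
Your proof is correct and follows essentially the same route as the paper: read the variational formula $X=\sup_p SF(F^{-1}(p)+y)-pK$ backwards to get $S=\inf$ of the dual ratio, then change variables to land on $\psi$. You are in fact more careful than the paper on the reverse inequality (the paper simply asserts $S=\inf_p\frac{X+pK}{F(F^{-1}(p)+y)}$ without the maximizing-sequence argument), so no gap here.
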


	\begin{proof}
It holds
$$X = \sup_{p \in ]0,1[}  S F(F^{-1}(p)+y) - p K$$
so that for every $p$, $X \geq  S F(F^{-1}(p)+y) - p K$ or yet $S \leq \frac{X+pK}{F(F^{-1}(p)+y)}$, and also
$$S = \inf_{p \in ]0,1[}  \frac{X+pK}{F(F^{-1}(p)+y)}.$$
Set $r:=F^{-1}(p)+y$, then $p=F(r-y)$ and, given that the range of $r$ when $p$ runs into $]0,1[$ is $\mathbb R$ irrespective of $y$, the conclusion follows.

\end{proof}

	We can easily apply the result from this lemma to the relation
\(\hat C_K(C(S,K),M)=C(S,K+M)\) and obtain the following.

\begin{proposition}[Quasi-closed formula for the Call on Call pricing function in the CPT family]\label{propQuasiClosedCPT}

Let $f$ a unimodal probability density function. For $K,M,y\geq 0$ it holds
\begin{align*}
\hat C_K(X,M) &= \int_{-\infty}^{\infty}\biggl(K\psi\biggl(\frac{X}K,y;F\biggr)f(z+y)-(K+M)f(z)\biggr)_+\,dz\\
&= K\psi\biggl(\frac{X}K,y;F\biggr)F(d(K+M,y;f)+y) - (K+M)F(d(K+M,y;f)).
\end{align*}

In particular, in the Black-Scholes case with $v=\sigma\sqrt T$
\begin{align*}
\hat C_K(X,M) =&\ K\psi\biggl(\frac{X}K,v;\Phi\biggr)\Phi\biggl(d_1\Bigl(K\psi\Bigl(\frac{X}{K},v;\Phi\Bigr),K+M,v\Bigr)\Bigr) +\\
&- (K+M)\Phi\biggl(d_2\Bigl(K\psi\Bigl(\frac{X}{K},v;\Phi\Bigr),K+M,v\Bigr)\biggr)
\end{align*}
where $d_{1,2}(a,b,v)=-\frac{\log\frac{b}{a}}{v} \pm \frac{v}{2}$.

\end{proposition}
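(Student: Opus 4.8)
The plan is to read off the formula as a direct consequence of \cref{lemmaPsi} together with the two available representations of the CPT Call price. Recall from the definition of $\hat C_K$ that $\hat C_K(X,M) = C(S,K+M)$, where $S$ is the underlier value consistent with the observed price $X = C(S,K)$. In the CPT family the map $S \to C^{\text{CPT}}(S,K;f,y)$ is strictly increasing (its $S$-derivative is $F(d(K,y;f)+y)$, which is positive on the support of $f$), so this $S$ is well defined, and \cref{lemmaPsi} identifies it as $S = K\,\psi\bigl(\tfrac{X}{K},y;F\bigr)$.

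First I would substitute this value of $S$ into the integral representation \cref{eqCPTC} written at strike $K+M$, namely $C^{\text{CPT}}(S,K+M;f,y) = \int_{-\infty}^{\infty}\bigl(Sf(z+y)-(K+M)f(z)\bigr)_+\,dz$; this yields immediately the first displayed line. Then I would instead substitute the same $S$ into the Black--Scholes-type representation $C^{\text{CPT}}(S,K+M;f,y) = S\,F\bigl(d+y\bigr) - (K+M)\,F(d)$ with $d = d(K+M,y;f)$ the critical level attached to the pair $(S,K+M)$; this gives the second displayed line. The two lines agree because the CPT price carries both representations (Tehranchi), so nothing needs to be re-proved here — both substitutions are purely formal once \cref{lemmaPsi} is in hand.

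For the Black--Scholes specialisation I would take $f=\phi$, $F=\Phi$ and $y = v = \sigma\sqrt T$, as recalled after \cref{eqCPTC}. In this case the critical level is explicit: evaluated at underlier $S$, one has $d(K+M,v;\phi) = d_2(S,K+M,v) = -\frac{\log\frac{K+M}{S}}{v} - \frac{v}{2}$ and $d(K+M,v;\phi)+v = d_1(S,K+M,v)$. Plugging $S = K\,\psi\bigl(\tfrac{X}{K},v;\Phi\bigr)$ into $S\,\Phi\bigl(d_1(S,K+M,v)\bigr) - (K+M)\,\Phi\bigl(d_2(S,K+M,v)\bigr)$ then produces the claimed expression, with $d_{1,2}(a,b,v) = -\frac{\log\frac{b}{a}}{v}\pm\frac{v}{2}$.

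The one point that needs a little care — and the one I would flag as the main (minor) obstacle — is notational: in the general representation the critical level $d(\cdot,y;f)$ is really a function of the moneyness of the option, so the symbol $d(K+M,y;f)$ in the statement must be read as this level evaluated at the recovered underlier $S = K\,\psi\bigl(\tfrac{X}{K},y;F\bigr)$, exactly as the Black--Scholes line makes explicit by writing $d_{1,2}\bigl(K\psi(\tfrac{X}{K},v;\Phi),K+M,v\bigr)$. Beyond checking this matching, and the strict monotonicity of $S\to C^{\text{CPT}}(S,K;f,y)$ needed to invoke \cref{lemmaPsi}, there is nothing further to do.
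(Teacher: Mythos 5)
Your proposal is correct and matches the paper's intent exactly: the paper offers no separate proof, presenting the proposition as an immediate application of \cref{lemmaPsi} to the relation $\hat C_K(C(S,K),M)=C(S,K+M)$, which is precisely the substitution you carry out through the two CPT representations and the Black--Scholes specialisation. Your remark that $d(K+M,y;f)$ must be read as evaluated at the recovered underlier $S=K\psi\bigl(\frac{X}{K},y;F\bigr)$ is a fair clarification of the paper's notation rather than a deviation from its argument.
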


	Observe that in the Black-Scholes case of the above proposition,
choosing \(M=0\), we find the expression \begin{align*}
\hat C_K(X,0) = K\psi\biggl(\frac{X}K,v;\Phi\biggr)\Phi\biggl(d_1\Bigl(K\psi\Bigl(\frac{X}{K},v;\Phi\Bigr),K,v\Bigr)\Bigr) - K\Phi\biggl(d_2\Bigl(K\psi\Bigl(\frac{X}{K},v;\Phi\Bigr),K,v\Bigr)\biggr)
\end{align*} which is the classic Black-Scholes formula for the Call
\(C(\tilde S,K)\) where \(\tilde S=K\psi\bigl(\frac{X}K,v;\Phi\bigr)\).
This was indeed expected since Call prices with null strike coincide
with the value of their underlier. Furthermore, by the definition of
\(\psi\) in \cref{lemmaPsi}, the underlier of the Call option \(X\) with
strike \(K\) is \(S=K\psi\bigl(\frac{X}K,v;\Phi\bigr)\), so that
\(\tilde S=S\) and the above expression is the Call price of an option
with strike \(K\) and underlier \(S\), i.e.~it coincides with \(X\):
\[X = K\psi\biggl(\frac{X}K,v;\Phi\biggr)\Phi\biggl(d_1\Bigl(K\psi\Bigl(\frac{X}{K},v;\Phi\Bigr),K,v\Bigr)\Bigr) - K\Phi\biggl(d_2\Bigl(K\psi\Bigl(\frac{X}{K},v;\Phi\Bigr),K,v\Bigr)\biggr).\]

	\hypertarget{numerical-computation-of-psi}{%
\subsubsection{\texorpdfstring{Numerical computation of
\(\psi\)}{Numerical computation of \textbackslash psi}}\label{numerical-computation-of-psi}}

	The function \(\psi\) in \cref{lemmaPsi} is still not explicit. However,
one can study more precisely \(\psi\) in the case of a strictly
log-concave \(f\), which covers the Black-Scholes case in particular.

Indeed, let us look at the function \(\psi(a, y)\). We consider \(a>0\)
and call \[\gamma(r;a,y):=\frac{a+F(r-y)}{F(r)}\] the argument of the
infimum. Then \(\gamma(\infty;a,y)=1+a\) and \(\gamma\) is
non-increasing iff
\(\gamma'(r;a,y)=\frac{F(r)f(r-y)-f(r)(a+F(r-y))}{F(r)^2}\leq0\), or iff
\(a\geq\delta(r;y)\) where
\[\delta(r;y):=\frac{f(r-y)}{f(r)}F(r)-F(r-y).\]

We have the following:

	\begin{lemma}\label{lemmaMinPsi}

Let $f$ a strictly log-concave probability density function and $F$ its cumulative density function, and let $a,y>0$. Then $\delta(r;y)$ is strictly increasing in $r$ and:

\begin{itemize}
\item If $\delta(\infty;y)\leq a$ then $\psi(a,y)=1+a$;
\item If $\delta(\infty;y)>a$ then $\psi(a,y) = \frac{a+F(r^*-y)}{F(r^*)} = \frac{f(r^*-y)}{f(r^*)}$ where $r^*:=\delta^{-1}(a;y)$. 
\item In the Black-Scholes case $\delta(\infty;v)=\infty$ for every $v$.
\end{itemize}

\end{lemma}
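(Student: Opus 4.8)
The plan is to reduce everything to the position of the function $\delta(\cdot;y)$ relative to the level $a$. From the computation recorded just before the lemma, $\gamma'(r;a,y)$ has the sign of its numerator $F(r)f(r-y)-f(r)\bigl(a+F(r-y)\bigr)$, which factors exactly as $f(r)\bigl(\delta(r;y)-a\bigr)$; since $f(r)>0$ and $F(r)>0$, the function $\gamma(\cdot;a,y)$ decreases precisely where $\delta(\cdot;y)<a$ and increases precisely where $\delta(\cdot;y)>a$. So once $\delta(\cdot;y)$ is shown to be strictly monotone, the whole shape of $\gamma(\cdot;a,y)$, and hence $\psi(a,y)=\inf_r\gamma(r;a,y)$, is forced. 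For the monotonicity I would set $g(r):=f(r-y)/f(r)$, so that $\delta(r;y)=g(r)F(r)-F(r-y)$. Strict log-concavity means $(\log f)'$ is strictly decreasing, hence $(\log g)'(r)=(\log f)'(r-y)-(\log f)'(r)>0$ for $y>0$, so $g$ is positive and strictly increasing; differentiating $\delta$ and using $g(r)f(r)=f(r-y)$, the terms $g(r)f(r)$ and $f(r-y)$ cancel and one is left with $\delta'(r;y)=g'(r)F(r)>0$. This cancellation is the slick point, but it is mechanical once one works with $g$.

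Next I would pin down the two boundary values of $\delta(\cdot;y)$. That $\delta\geq 0$ is immediate from monotonicity of $g$: $g(r)F(r)=\int_{-\infty}^r g(r)f(u)\,du\geq\int_{-\infty}^r g(u)f(u)\,du=\int_{-\infty}^r f(u-y)\,du=F(r-y)$. The delicate point is $\delta(-\infty;y)=0$, which is exactly what legitimises writing $r^*=\delta^{-1}(a;y)$ for every $a>0$. I would argue that log-concavity together with integrability of $f$ at $-\infty$ forces $(\log f)'>0$ on some half-line $(-\infty,r_1)$, so there $g<1$ and $f$ is increasing; then splitting $g(r)F(r)=g(r)F(r-L)+\int_{r-L}^{r}g(r)f(u)\,du\leq F(r-L)+L\,f(r-y)$ and letting $r\to-\infty$ with $L$ fixed kills both terms. (A slicker alternative: if $\delta(-\infty;y)$ were some $\ell>0$, then for $a\in(0,\ell)$ the function $\gamma(\cdot;a,y)$ would be everywhere strictly increasing, contradicting $\gamma(r;a,y)\geq a/F(r)\to\infty$ as $r\to-\infty$.) Write $\delta(\infty;y):=\lim_{r\to\infty}\delta(r;y)\in(0,\infty]$.

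Now the case split is automatic. If $\delta(\infty;y)\leq a$, then $\delta(r;y)<\delta(\infty;y)\leq a$ for every $r$, so $\gamma(\cdot;a,y)$ is strictly decreasing and $\psi(a,y)=\lim_{r\to\infty}\gamma(r;a,y)=1+a$, using $F(r)\to1$. If $\delta(\infty;y)>a$, then $0=\delta(-\infty;y)<a<\delta(\infty;y)$ yields a unique $r^*=\delta^{-1}(a;y)$; $\gamma(\cdot;a,y)$ is strictly decreasing on $(-\infty,r^*]$ and strictly increasing on $[r^*,\infty)$, so $\psi(a,y)=\gamma(r^*;a,y)=\frac{a+F(r^*-y)}{F(r^*)}$, and the stationarity $\gamma'(r^*;a,y)=0$, i.e. $F(r^*)f(r^*-y)=f(r^*)\bigl(a+F(r^*-y)\bigr)$, rewrites this ratio as $f(r^*-y)/f(r^*)$. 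In the Black-Scholes case $f=\phi$ and $y=v>0$ give $\phi(r-v)/\phi(r)=e^{rv-v^2/2}$, so $\delta(r;v)=e^{rv-v^2/2}\Phi(r)-\Phi(r-v)$; as $r\to\infty$ the first term diverges (as $v>0$) while $\Phi(r-v)\to1$, hence $\delta(\infty;v)=\infty$, and one is always in the second case.

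The hard part will be the boundary identification $\delta(-\infty;y)=0$ for a general strictly log-concave density: the strict monotonicity of $\delta$ and the stationary-point computation are routine, but making sure that $a$ always lies in the range of $\delta(\cdot;y)$ (so that $r^*$ is well defined) needs the mild tail estimate above, or the self-consistency observation.
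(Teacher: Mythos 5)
Your proof is correct and follows the same overall architecture as the paper's: establish that $\delta(\cdot;y)$ is strictly increasing (your computation $\delta'(r;y)=g'(r)F(r)$ with $g=f(\cdot-y)/f$ is identical to the paper's formula $\frac{F(r)}{f(r)^2}\bigl(f(r)f'(r-y)-f(r-y)f'(r)\bigr)$, just organized through $g$), identify the boundary value $\delta(-\infty;y)=0$, and then read off the shape of $\gamma$ from the sign of $\delta-a$; the case split and the Black--Scholes computation coincide with the paper's. The one place where you genuinely diverge is the step you correctly single out as delicate, namely $\delta(-\infty;y)=0$. The paper handles it by a two-case argument (l'H\^opital applied to $F(r-y)/F(r)$ when $\gamma$ has a finite limit at $-\infty$, and a ``the derivative must eventually be non-positive for every $a$'' observation when $\gamma$ explodes), which is terser and somewhat less explicit. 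You instead prove the lower bound $\delta\geq 0$ directly via the integral comparison $g(r)F(r)\geq\int_{-\infty}^r g(u)f(u)\,du=F(r-y)$ (a bound the paper leaves implicit), and obtain the matching upper bound either by the tail estimate using that $(\log f)'>0$ on a left half-line, or by the self-consistency argument that an everywhere-increasing $\gamma$ cannot satisfy $\gamma(r)\geq a/F(r)\to\infty$. Your version is more rigorous on this point and buys a cleaner justification that $r^*=\delta^{-1}(a;y)$ is well defined for every $a\in(0,\delta(\infty;y))$; the only caveat, which applies equally to the paper, is that the argument tacitly assumes the support of $f$ is all of $\mathbb R$ (otherwise ``$r\to-\infty$'' should be read as $r$ tending to the left endpoint of the support).
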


	\begin{proof}

Observe that if $\gamma$ has a finite limit at $-\infty$, then $\delta(r;y)=F(r)\bigl(\frac{f(r-y)}{f(r)}-\frac{F(r-y)}{F(r)}\bigr)$ has a limit equal to $0$ due to l'H�pital's rule. Also, if $\gamma$ explodes at $-\infty$, then its derivative must be non-positive at $-\infty$, i.e. $a$ is always larger or equal than $\delta(-\infty;y)$, which does not depend on $a$. As a consequence, $\delta(-\infty;y)=0$ in any case.

If $\delta(r;y)$ is increasing, two scenarios are possible:
\begin{itemize}
    \item $\delta(\infty;y)>a$, then $a$ cannot be always larger than $\delta(r,y)$ and the function $\gamma$ has at least one point of minimum. Also, since $\delta$ is monotonous in $r$, there is a unique $r^*$ such that $a=\delta(r^*;y)$ and this point is also the point of minimum of $\gamma$, i.e. $\psi(a,y) = \frac{a+F(r^*-y)}{F(r^*)} = \frac{f(r^*-y)}{f(r^*)}$;
    \item $\delta(\infty;y)\leq a$ (in particular $\delta$ is not surjective), then $\gamma$ is decreasing and therefore $\psi(a,y)=1+a$.
\end{itemize}

Now since $f$ is strictly log-concave, then the function $\frac{f'}{f}$ is decreasing. The derivative of $\delta$ with respect to $r$ is $\frac{F(r)}{f(r)^2}(f(r)f'(r-y)-f(r-y)f'(r))$ and this is positive iff $\frac{f'(r-y)}{f(r-y)}>\frac{f'(r)}{f(r)}$, which holds true.

Then, if $\delta(\infty;y)>a$ there exists a unique $r^*:=\delta^{-1}(a;y)$ and $\psi(a,y)= \frac{a+F(r^*-y)}{F(r^*)}$. Otherwise, if $\delta(\infty;y)\leq a$, it holds $\psi(a,y)=1+a$.

In the Black-Scholes case, $f$ is strictly log-concave and $\frac{f(r-v)}{f(r)} = \exp\bigl(-\frac{(r-v)^2-r^2}{2}\bigr) = \exp\bigl(\frac{v(2r-v)}{2}\bigr)$ which explodes for $r$ going to $\infty$. Then $\delta(\infty;v)=\infty>a$.

\end{proof}

	We can actually give more details on the bounds of the point of minimum
\(r^*\) of the function \(\gamma\), when it exists (i.e.~when
\(\delta(\infty;y)>a\)). In the following lemma we find a lower bound
\(r^l\) and an upper bound \(r^u\) for \(r^*\) under the hypothesis of
surjectivity of the function \(\frac{f'}{f}\), and in the successive
corollary we study the Black-Scholes case, where the bounds are actually
explicit. In this way, the point \(r^*\) can be numerically computed in
a very quick way inverting the function \(\delta\) in the interval
\([r^l,r^u]\).

	\begin{lemma}\label{lemmaBoundr}

Let $f$ a strictly log-concave probability density function and $F$ its cumulative density function, and let $a,y>0$. Then $\frac{f'(r)}{f(r)}$ is decreasing and $f$ is unimodal. Let $s$ the unique point of maximum of $f$.

In the case $\delta(\infty;y)>a$, if $\frac{f'(r)}{f(r)}$ is surjective, it holds

\begin{itemize}
\item If $\delta(s;y)\geq a$ then $\tilde r<r^*\leq s$, where $\tilde r$ is the unique $r\leq s$ solving $a=F(r)-F(r-y)$.
\item If $\delta(s;y)<a$ and $\delta(s+y;y)\geq a$ then $s<r^*\leq s+y$.
\item If $\delta(s+y;y)<a$ then $s+y<r^*<\hat r$, where $\hat r$ is the unique $r$ solving $\frac{f(r-y)}{f(r)}=\frac{a}{F(s)}+1$.
\end{itemize}

\end{lemma}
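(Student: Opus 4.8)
The plan is to derive everything from the strict monotonicity of $\delta(\cdot;y)$ established in \cref{lemmaMinPsi}, so that pinning down $r^*:=\delta^{-1}(a;y)$ reduces to elementary sign comparisons. First I would record the consequences of strict log-concavity: $f'/f$ is strictly decreasing, and since it is assumed surjective it is a continuous decreasing bijection of $\mathbb R$; hence it vanishes at a unique point $s$, which is the (strict, unique) maximum of $f$, with $f$ strictly increasing on $(-\infty,s]$ and strictly decreasing on $[s,\infty)$, $f>0$ everywhere and $F$ strictly increasing. Since $\delta(\cdot;y)$ is strictly increasing and $\delta(r^*;y)=a$, for any level $\rho$ one has $r^*\le\rho\iff\delta(\rho;y)\ge a$ and $r^*>\rho\iff\delta(\rho;y)<a$. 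Reading the three case hypotheses through this equivalence already yields $r^*\le s$ in the first case, $s<r^*\le s+y$ in the second case (which therefore needs nothing more), and $s+y<r^*$ in the third case. What remains is to show that $\tilde r$ and $\hat r$ are well defined and that $\delta(\tilde r;y)<a$ and $\delta(\hat r;y)>a$.

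For the first bullet, set $g(r):=F(r)-F(r-y)$. For $r\le s$ both $r-y$ and $r$ lie on the increasing branch, so $g'(r)=f(r)-f(r-y)>0$, and $g$ is a strictly increasing bijection of $(-\infty,s]$ onto $\bigl(0,\,F(s)-F(s-y)\bigr]$. Because $s$ is the strict maximum and $y>0$ we have $f(s-y)/f(s)<1$, hence $\delta(s;y)=\tfrac{f(s-y)}{f(s)}F(s)-F(s-y)<F(s)-F(s-y)=g(s)$; thus in the first case $a\le\delta(s;y)<g(s)$, which defines a unique $\tilde r<s$ with $g(\tilde r)=a$. On that branch $\tfrac{f(\tilde r-y)}{f(\tilde r)}<1$ as well, so
\[
\delta(\tilde r;y)=\frac{f(\tilde r-y)}{f(\tilde r)}F(\tilde r)-F(\tilde r-y)<F(\tilde r)-F(\tilde r-y)=a,
\]
and strict monotonicity of $\delta$ gives $\tilde r<r^*$, while $a\le\delta(s;y)$ gives $r^*\le s$.

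For the third bullet, put $h(r):=f(r-y)/f(r)$, so $(\log h)'(r)=\tfrac{f'(r-y)}{f(r-y)}-\tfrac{f'(r)}{f(r)}>0$ and $h$ is strictly increasing; writing $\log h(r)=-\int_{r-y}^{r}\tfrac{f'}{f}$ and using that $f'/f\to-\infty$ at $+\infty$ (surjectivity) gives $h(r)\to\infty$ as $r\to\infty$, so together with $h(s)=f(s-y)/f(s)<1$ there is a unique $\hat r>s$ with $h(\hat r)=\tfrac{a}{F(s)}+1$. Since $\hat r>s$ we have $F(\hat r)>F(s)>0$, whence
\[
\delta(\hat r;y)=h(\hat r)F(\hat r)-F(\hat r-y)=a\,\frac{F(\hat r)}{F(s)}+\bigl(F(\hat r)-F(\hat r-y)\bigr)>a,
\]
because the first summand exceeds $a$ and the second is positive; strict monotonicity of $\delta$ gives $r^*<\hat r$, while $\delta(s+y;y)<a$ gives $s+y<r^*$.

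The argument is essentially bookkeeping once the comparisons are phrased through $\delta$; the only place the extra hypothesis is genuinely used is in the construction of $\hat r$, where surjectivity of $f'/f$ is what guarantees that $h$ reaches the value $\tfrac{a}{F(s)}+1$. I expect that point — controlling the range of $h$ — to be the main (mild) obstacle, everything else reducing to the monotonicity of the three functions $\delta$, $g$ and $h$.
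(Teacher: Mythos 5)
Your proof is correct and follows essentially the same route as the paper: monotonicity of $\delta(\cdot;y)$ to place $r^*$ relative to $s$ and $s+y$, the increasing function $F(r)-F(r-y)$ on the increasing branch of $f$ for $\tilde r$, and the increasing, exploding ratio $f(r-y)/f(r)$ (via surjectivity of $f'/f$) for $\hat r$. The only cosmetic difference is that you push the final comparisons through $\delta$ evaluated at $\tilde r$ and $\hat r$, whereas the paper compares $F(r^*)-F(r^*-y)$ and $f(r^*-y)/f(r^*)$ against the defining levels of $\tilde r$ and $\hat r$ and uses the monotonicity of those two auxiliary functions instead; both arguments are equivalent.
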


	\begin{proof}

Firstly observe that since $f$ is strictly log-concave, then the function $r\to\frac{f'(r)}{f(r)}$ is decreasing while the function $r\to\frac{f(r-y)}{f(r)}$ is increasing, given that its derivative is $\frac{f'(r-y)f(r)-f(r-y)f'(r)}{f(r)^2}$. Secondly, from the proof of Theorem 4.1.6. of \cite{tehranchi2020black}, it holds
$$\frac{f'(r)}{f(r)}\leq\frac{1}{y}\log\frac{f(r)}{f(r-y)}\leq\frac{f'(r-y)}{f(r-y)}.$$
Then if $\frac{f'(r)}{f(r)}$ goes to $-\infty$ at $\infty$, the function $\frac{f(r-y)}{f(r)}$ explodes at $\infty$, while if $\frac{f'(r)}{f(r)}$ goes to $\infty$ at $-\infty$, then $\frac{f(r-y)}{f(r)}$ goes to $0$ at $-\infty$.

In the case $\delta(\infty;y)>a$, from \cref{lemmaMinPsi} there exists a unique $r^*$ such that $a=\delta(r^*;y)$. Since $F(r)>F(r-y)$, it holds
$$a=\delta(r^*;y)>\biggl(\frac{f(r^*-y)}{f(r^*)}-1\biggr)F(r^*-y).$$
If $\delta(s;y)\geq a$ then $r^*\leq s$. Otherwise $r^*>s$. If $\delta(s+y;y)\geq a$ then $r^*\leq s+y$. Otherwise $r^*>s+y$, so $F(r^*-y)>F(s)$, $\frac{f(r^*-y)}{f(r^*)}>1$ and $\frac{a}{F(s)}+1>\frac{f(r^*-y)}{f(r^*)}$. Since the function $r\to\frac{f(r-y)}{f(r)}$ is increasing and explodes at $\infty$ under the hypothesis that $\frac{f'(r)}{f(r)}$ goes to $-\infty$, then there exists a unique $\hat r$ such that $\frac{f(\hat r-y)}{f(\hat r)}=\frac{a}{F(0)}+1$. Furthermore, $r^*<\hat r$.

In the previous step we have already found a lower bound (either $s$ or $s+y$) in the case $\delta(s;y)<a$. If $\delta(s;y)\geq a$, then $r^*\leq s$ and $\frac{f(r^*-y)}{f(r^*)}<1$, so
$$a=\delta(r^*;y)<F(r^*)-F(r^*-y).$$
The function $r\to F(r)-F(r-y)$ has derivative $f(r)-f(r-y)$, which is positive for $r\leq s$. Then there is a unique $\tilde r\leq s$ solving $a=F(\tilde r)-F(\tilde r-y)$ and $\tilde r<r^*$.

\end{proof}

	\Cref{lemmaBoundr} can be further exploit in the Black-Scholes case. It
turns out that the bounds for \(r^*\) are explicit and do not need any
inversion algorithm.

	\begin{corollary}[Explicit bounds for $r^*$ in the Black-Scholes case]\label{corolBoundrBS}

In the Black-Scholes case:

\begin{itemize}
\item If $a\leq\sqrt{\frac{\pi}{2}}\phi(v) + \Phi(v)-1$ then $-\sqrt{-2\log\bigl(\frac{a}{v}\sqrt{2\pi}\bigr)}<r^*\leq 0$.
\item If $\sqrt{\frac{\pi}{2}}\phi(v) + \Phi(v)-1<a\leq \frac{1}{\sqrt{2\pi}}\frac{\Phi(v)}{\phi(v)}-\frac{1}{2}$ then $0<r^*\leq v$.
\item If $a>\frac{1}{\sqrt{2\pi}}\frac{\Phi(v)}{\phi(v)}-\frac{1}{2}$ then $v<r^*<\frac{1}{2}\bigl(v+\frac{2}{v}\log(2a+1)\bigr)$.
\end{itemize}

\end{corollary}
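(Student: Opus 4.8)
The plan is to specialize \cref{lemmaBoundr} to the Gaussian case $f=\phi$, $F=\Phi$, $y=v$, and then to turn the three implicitly defined quantities appearing there (the mode $s$, the lower-bound point $\tilde r$, and the upper-bound point $\hat r$) into closed-form expressions. First I would record the elementary facts specific to $\phi$: it is strictly log-concave, its score $\frac{\phi'(r)}{\phi(r)}=-r$ is surjective on $\mathbb R$, its unique mode is $s=0$, the governing ratio simplifies to $\frac{\phi(r-v)}{\phi(r)}=\exp\bigl(\frac{v(2r-v)}{2}\bigr)=e^{vr-v^2/2}$, and (by \cref{lemmaMinPsi}) $\delta(\infty;v)=\infty>a$ always holds so that $r^*$ exists. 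Hence all hypotheses of \cref{lemmaBoundr} are met and $\delta(r;v)=e^{vr-v^2/2}\Phi(r)-\Phi(r-v)$.

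Next I would evaluate the two thresholds $\delta(s;v)=\delta(0;v)$ and $\delta(s+v;v)=\delta(v;v)$. Using $e^{-v^2/2}=\sqrt{2\pi}\,\phi(v)$ and $\Phi(-v)=1-\Phi(v)$ gives $\delta(0;v)=\sqrt{\tfrac{\pi}{2}}\,\phi(v)+\Phi(v)-1$; using $e^{v^2/2}=\frac{1}{\sqrt{2\pi}\,\phi(v)}$ and $\Phi(0)=\tfrac12$ gives $\delta(v;v)=\frac{1}{\sqrt{2\pi}}\frac{\Phi(v)}{\phi(v)}-\tfrac12$. These are precisely the cutoffs listed in the three bullets, so the three regimes of the corollary coincide with the three cases of \cref{lemmaBoundr}, and the middle bullet $0<r^*\le v$ is already exactly its conclusion with $s=0$, $s+v=v$.

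It then remains to make the bounds in the first and third bullets explicit. For the third bullet, $\hat r$ is characterized in \cref{lemmaBoundr} by $\frac{\phi(\hat r-v)}{\phi(\hat r)}=\frac{a}{\Phi(0)}+1=2a+1$, i.e. $v\hat r-\frac{v^2}{2}=\log(2a+1)$, which solves directly to $\hat r=\frac12\bigl(v+\frac{2}{v}\log(2a+1)\bigr)$. For the first bullet, $\tilde r\le 0$ is defined by $a=\Phi(\tilde r)-\Phi(\tilde r-v)=\int_{\tilde r-v}^{\tilde r}\phi(t)\,dt$; since $\phi$ is strictly increasing on $(-\infty,0]$ and the interval has positive length $v$, this integral is strictly less than $v\,\phi(\tilde r)$, so $\phi(\tilde r)>\frac{a}{v}$, equivalently $\tilde r^2<-2\log\bigl(\frac{a}{v}\sqrt{2\pi}\bigr)$ (the right-hand side being nonnegative because the same bound forces $a\le\frac{v}{\sqrt{2\pi}}$, so $\frac{a}{v}\sqrt{2\pi}\le 1$). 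Therefore $\tilde r>-\sqrt{-2\log(\frac{a}{v}\sqrt{2\pi})}$, and combining with $\tilde r<r^*\le 0$ from \cref{lemmaBoundr} yields the stated interval.

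The only step that is not pure substitution into \cref{lemmaBoundr} is this last estimate for the first bullet; the subtlety there is to pick the correct monotonicity bound (the supremum of $\phi$ over $[\tilde r-v,\tilde r]\subset(-\infty,0]$ is attained at the right endpoint $\tilde r$) and to track that the resulting inequality on $\tilde r$ propagates to $r^*$ through $\tilde r<r^*$. Everything else — the identification of the three cutoffs and the explicit solution of the log-linear equation for $\hat r$ — is routine algebra using the Gaussian identities above.
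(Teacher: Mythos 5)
Your proposal is correct and follows essentially the same route as the paper's own proof: specialize \cref{lemmaBoundr} to $f=\phi$, compute the two thresholds $\delta(0;v)$ and $\delta(v;v)$ via the Gaussian identities, solve the log-linear equation for $\hat r$, and bound $\tilde r$ by majorizing $\int_{\tilde r-v}^{\tilde r}\phi$ with $v\phi(\tilde r)$ using the monotonicity of $\phi$ on $(-\infty,0]$. The paper's argument is identical in every substantive step, including the observation that $\frac{a}{v}\sqrt{2\pi}<1$ so the square root is well defined.
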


	\begin{proof}

In the Black-Scholes case, $f=\phi$, $s=0$ and $\phi'(r)=-r\phi(r)$, so that $\frac{\phi'(r)}{\phi(r)}=-r$ which is surjective. In \cref{lemmaMinPsi} we showed that $\delta(\infty;v)=\infty>a$, so three scenarios are possible applying \cref{lemmaBoundr}.

In the first scenario, the condition $\delta(s;v)\geq a$ reads $\frac{\phi(-v)}{\phi(0)}\Phi(0)-\Phi(-v)\geq a$. Since $\Phi(-v)=1-\Phi(v)$, $\phi(-v)=\phi(v)$, $\Phi(0)=\frac{1}{2}$ and $\phi(0)=\frac{1}{\sqrt{2\pi}}$, the condition is equivalent to $a\leq\sqrt{\frac{\pi}{2}}\phi(v) + \Phi(v)-1$. In this case the point $\tilde r\leq0$ is the only $r$ satisfying $a=\int_{r-v}^{r} \phi(z)\,dz$ and it holds $r^*>\tilde r$. Since $\tilde r\leq 0$, it holds $\int_{r-v}^{r} \phi(z)\,dz<\int_{r-v}^{r} \phi(\tilde r)\,dz = \phi(\tilde r)v$, so that $\frac{a}{v}\sqrt{2\pi}<\exp\bigl(-\frac{\tilde r^2}{2}\bigr)$. As a consequence, the LHS is smaller than $1$ and we can solve $\tilde r^2<-2\log\bigl(\frac{a}{v}\sqrt{2\pi}\bigr)$, which implies in particular $\tilde r>-\sqrt{-2\log\bigl(\frac{a}{v}\sqrt{2\pi}\bigr)}$.

In the second scenario, the condition $\delta(s+v;v)\geq a$ is $\frac{\phi(0)}{\phi(v)}\Phi(v) - \Phi(0)\geq a$, or equivalently $a\leq \frac{1}{\sqrt{2\pi}}\frac{\Phi(v)}{\phi(v)}-\frac{1}{2}$.

In the last scenario, the point $\hat r$ solves $\frac{\phi(\hat r-v)}{\phi(\hat r)}=2a+1$. The LHS is $\exp\bigl(\frac{v(2r-v)}{2}\bigr)$ and solving we find $\hat r = \frac{1}{2}\bigl(v+\frac{2}{v}\log(2a+1)\bigr)$.

\end{proof}

	Thanks to \cref{lemmaBoundr} and \cref{corolBoundrBS} we have found
specific bounds for \(r^*\). Then extremely fast numerical
implementations based on standard methods such as the \verb+brentq+
function of the \verb+scipy.optimize+ library in Python can be obtained
using these bounds.

	\hypertarget{formulas-from-normalized-transformations}{%
\subsection{Formulas from normalized
transformations}\label{formulas-from-normalized-transformations}}

	The transformation \(\mathbb V_{k,\alpha}\) of
\cref{a-slight-generalization}, and so \(\mathbb T_k\) and
\(\mathbb U_k\), allow to generate new pricing formulas using the
following recipe: start from a Call pricing function with closed
formula, normalize it, apply the transformation and de-normalize to get
another closed formula. This allows to extend any closed formula to a
\(2\)-parameter family of closed formulas.

In other words, we look at the pricing formula in the new world, but
consider eventually applying it to the usual world: we take a financial
engineer point of view here, where any pricing function depending on
some parameters is a useful candidate to calibrate the current state of
the market (in the usual world).

So we go through the following pipeline of transformations:

\begin{enumerate}
\def\labelenumi{\arabic{enumi}.}
\tightlist
\item
  Start from any arbitrage-free Call pricing function \(K \to C(S,K)\);
\item
  Normalize it and get a function \(c \in \mathbb C\) defined as
  \(c(k) := \frac{C(S,kS)}{S}\);
\item
  Apply, for any \(k,\alpha\geq0\) such that \(c(k)>0\) and
  \(\alpha\leq-\frac{c(k)}{c'(k)}\), the transformation
  \(\mathbb V_{k,\alpha}\) to \(c\);
\item
  Get a new arbitrage-free Call pricing function given by the formula
  \(M \to X \mathbb V_{k,\alpha} c\bigl(\frac{M}{X}\bigr)\), where
  \(X>0\) represents the value of the new underlier.
\end{enumerate}

The above steps can be applied in particular for the two special choices
of \(\alpha\) defining \(\mathbb T_k\) and \(\mathbb U_k\):
\(\alpha=c(k)\) and \(\alpha=-\frac{c(k)}{c'(k)}\).

Observe that if we choose \(k=\frac{K}{S}\) and \(X=C(S,K)\) the new
Call pricing function obtained using the transformation
\(\mathbb V_{k,c(k)}=\mathbb T_k\) coincides with
\(M\to\hat C_K(C(S,K),M)=C(S,K+M)\).

	Let us compute the new closed formulas we obtain implementing the above
pipeline for the known families of Black-Scholes, SVI (composed with the
Black-Scholes pricing function), and CPT, that all provide closed-form
formulas.

	\hypertarget{a-2-parameter-enrichment-of-the-black-scholes-formula}{%
\subsubsection{A 2-parameter enrichment of the Black-Scholes
formula}\label{a-2-parameter-enrichment-of-the-black-scholes-formula}}

	Let \(\text{BS}(S,K)\) the Black-Scholes function defined in
\cref{eqBS}. Then
\(\text{BS}(S,K) = S \text{bs}\bigl(\frac{K}{S}\bigr)\) where the
normalized Black-Scholes pricing function \(\text{bs}\) belongs to the
Tehranchi space. We can therefore consider the two families of functions
indexed by \(k\) \begin{align*}
\text{bs}^{\mathbb T}_k\Bigl(\frac{M}{X}\Bigr) &:= \frac{\text{bs}\bigl(k+\text{bs}(k)\frac{M}{X}\bigr)}{\text{bs}(k)}\\
\text{bs}^{\mathbb U}_k\Bigl(\frac{M}{X}\Bigr) &:= \frac{\text{bs}\bigl(k-\frac{\text{bs}(k)}{\text{bs}'(k)}\frac{M}{X}\bigr)}{\text{bs}(k)}
\end{align*} leading to the \emph{enriched Black-Scholes models}
\begin{align*}
\text{BS}^{\mathbb T}_k(X,M) &:= X \text{bs}^{\mathbb T}_k\Bigl(\frac{M}{X}\Bigr) = X\frac{\text{bs}\bigl(k+\text{bs}(k)\frac{M}{X}\bigr)}{\text{bs}(k)}\\
\text{BS}^{\mathbb U}_k(X,M) &:= X \text{bs}^{\mathbb U}_k\Bigl(\frac{M}{X}\Bigr) = X\frac{\text{bs}\bigl(k-\frac{\text{bs}(k)}{\text{bs}'(k)}\frac{M}{X}\bigr)}{\text{bs}(k)}.\\
\end{align*}

	In \cref{figureBSSigma} we plot the \(\text{BS}^{\mathbb T}_k(X,M)\) and
\(\text{BS}^{\mathbb U}_k(X,M)\) prices (for fixed maturity) for
different values of \(k\) and a fixed value of \(X=1\). The original
implied total volatility is fixed at \(0.2\).

\begin{figure}
	\centering
	\includegraphics[width=.8\linewidth]{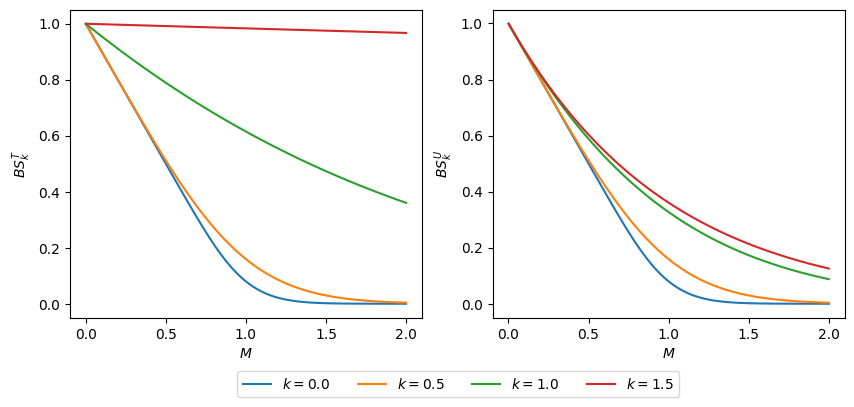}
	\caption{Prices $\text{BS}^{\mathbb T}_k(X,M)$ and $\text{BS}^{\mathbb U}_k(X,M)$ obtained from Black-Scholes formula applying transformations $\mathbb T_k$ and $\mathbb U_k$ respectively. The original Black-Scholes implied total volatility is set at $0.2$.}
	\label{figureBSSigma}
\end{figure}
    
	In order to define the \(2\)-parameter enrichment of the Black-Scholes
formula \(\text{BS}^{\mathbb V}_{k,z}\) as suggested in the introduction
of this section, let us make explicit the second parameter of the
function \(\text{bs}\), the total implied volatility
\(v=\sigma\sqrt T\). As seen in \cref{a-slight-generalization}, the
families \(\text{BS}^{\mathbb T}_k\) and \(\text{BS}^{\mathbb U}_k\) are
a particular choice in the more generic set of families
\(X\frac{\text{bs}\bigl(k+\alpha\frac{M}{X},v\bigr)}{\text{bs}(k,v)}\).
In particular, \(\text{BS}^{\mathbb T}_k\) corresponds to the choice
\(\alpha=\text{bs}(k)\) and \(\text{BS}^{\mathbb U}_k\) to the choice
\(\alpha=-\frac{\text{bs}(k)}{\text{bs}'(k)}\).

More generally, we can represent \(\alpha\) as the value of a normalized
Black-Scholes pricing function at \(k\): \(\alpha=\text{bs}(k,z)\) where
the variability is given by the choice of the implied total volatility
\(z\). We have then identified a \emph{2-parameters enrichment} of the
Black-Scholes pricing function that satisfies properties of
\cref{propEasyCallsOnCalls}:

\begin{proposition}[$2$-parameter enrichment of the Black-Scholes formula from the normalized pricing function]

For any $k\geq0$ and $z>0$ we define the 2-parameter enrichment of the Black-Scholes pricing function with $v=\sigma_1\sqrt{T}$ and $z=\sigma_2\sqrt{T}$
$$\text{BS}^{\mathbb V}_{k,z}(X,M,v) := X \frac{\text{bs}\bigl(k+ \text{bs}(k, z)\frac{M}{X}, v\bigr)}{\text{bs}(k, v)}.$$

Then $\text{BS}^{\mathbb V}_{k,z}(X,M,v)$ is a non-increasing and convex function of $M$ with $\text{BS}^{\mathbb V}_{k,z}(X,M,v) \leq X$. If $z\leq v$, then $(X-M)_+\leq \text{BS}^{\mathbb V}_{k,z}(X,M,v)$.

In the case $z=v$, the function $k\to\text{BS}^{\mathbb V}_{k,v}(X,M,v)$ is non-decreasing.

In particular $\text{BS}^{\mathbb V}_{0,z}(X,M,v) = X\text{bs}\bigl(\frac{M}{X}, v\bigr) = \text{BS}(X,M,v)$.

\end{proposition}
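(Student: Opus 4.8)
The plan is to read $\text{BS}^{\mathbb V}_{k,z}(X,M,v)$ as a de-normalised instance of the transformation $\mathbb V_{k,\alpha}$ and then lean on \cref{lemmaV} and \cref{propTincreasing}. Fix $v>0$, set $c:=\text{bs}(\cdot,v)\in\mathbb C$ and $\alpha:=\text{bs}(k,z)\ge 0$; since $c(k)=\text{bs}(k,v)>0$ for every finite $k\ge 0$ and $X>0$, one has $\text{BS}^{\mathbb V}_{k,z}(X,M,v)=X\,(\mathbb V_{k,\alpha}c)\bigl(\tfrac{M}{X}\bigr)$. The three properties claimed for \emph{all} $z>0$ — convexity in $M$, monotonicity in $M$, and the upper bound $\le X$ — are precisely the part of the proof of \cref{lemmaV} that does not use the condition $\alpha\le-\tfrac{c(k)}{c'(k)}$: $m\mapsto c(k+\alpha m)$ is convex and non-increasing because $c$ is and $\alpha\ge0$, and $c(k+\alpha m)\le c(k)$ because $k+\alpha m\ge k$. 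Scaling by $X>0$ and substituting $m=M/X$ transfers these to $M\mapsto \text{BS}^{\mathbb V}_{k,z}(X,M,v)$.

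The only claim needing work is the intrinsic-value lower bound under $z\le v$: by \cref{lemmaV} it is enough to verify the admissibility condition $\alpha=\text{bs}(k,z)\le-\tfrac{\text{bs}(k,v)}{\text{bs}'(k,v)}$, for then $\mathbb V_{k,\alpha}c\in\mathbb C$, hence $(\mathbb V_{k,\alpha}c)(m)\ge(1-m)_+$, which de-normalises to $(X-M)_+\le\text{BS}^{\mathbb V}_{k,z}(X,M,v)$. I would split this inequality into two classical Black-Scholes facts. First, $\text{bs}'(k,v)=-\Phi\bigl(d_2(k,v)\bigr)$, so $-\tfrac{\text{bs}(k,v)}{\text{bs}'(k,v)}=\tfrac{\text{bs}(k,v)}{\Phi(d_2(k,v))}\ge\text{bs}(k,v)$ because $\Phi\le1$ (and $\text{bs}'(k,v)\neq0$ for $v>0$, so the ratio is well defined). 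Second, the normalised vega is non-negative, $\partial_v\text{bs}(k,v)=\phi\bigl(d_1(k,v)\bigr)\ge0$ (using the identity $\phi(d_1)=k\phi(d_2)$), so $z\le v$ gives $\text{bs}(k,z)\le\text{bs}(k,v)$. Chaining the two yields $\text{bs}(k,z)\le\text{bs}(k,v)\le-\tfrac{\text{bs}(k,v)}{\text{bs}'(k,v)}$, as required.

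For the last two assertions I would argue directly. When $z=v$ we have $\alpha=\text{bs}(k,v)=c(k)$, hence $\mathbb V_{k,\alpha}c=\mathbb V_{k,c(k)}c=\mathbb T_kc$, so $\text{BS}^{\mathbb V}_{k,v}(X,M,v)=X\,(\mathbb T_kc)\bigl(\tfrac{M}{X}\bigr)$; with $X$ and $M$ fixed, the moneyness $m=M/X$ is fixed, and \cref{propTincreasing} states that $k\mapsto(\mathbb T_kc)(m)$ is non-decreasing, whence so is $k\mapsto\text{BS}^{\mathbb V}_{k,v}(X,M,v)$. When $k=0$, every function of $\mathbb C$ equals $1$ at $0$, so $\text{bs}(0,z)=\text{bs}(0,v)=1$ and $\text{BS}^{\mathbb V}_{0,z}(X,M,v)=X\,\text{bs}\bigl(\tfrac{M}{X},v\bigr)=\text{BS}(X,M,v)$ by homogeneity of the Black-Scholes price. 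I expect the only non-mechanical step to be the admissibility inequality of the second paragraph; it is the sole place where the specific structure of the Black-Scholes formula, rather than just $c\in\mathbb C$, enters, and it is exactly what forces the hypothesis $z\le v$ in the intrinsic-value bound.
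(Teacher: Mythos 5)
Your proof is correct, and its overall skeleton matches the paper's: reduce everything to the transformation lemmas, get the upper bound from $k+\text{bs}(k,z)\frac{M}{X}\geq k$ together with the monotonicity of $\text{bs}(\cdot,v)$, use positive vega to exploit $z\leq v$, and invoke \cref{propTincreasing} for the $z=v$ monotonicity in $k$. The one place you diverge is the intrinsic-value lower bound. The paper argues by pointwise comparison: $\text{bs}(k,z)\leq\text{bs}(k,v)$ and the fact that $\text{bs}(\cdot,v)$ is non-increasing give $\text{BS}^{\mathbb V}_{k,z}(X,M,v)\geq X\,\mathbb T_k\text{bs}\bigl(\tfrac{M}{X}\bigr)$, and then \cref{lemmaT} (i.e., $\mathbb T_k c\in\mathbb C$) supplies $(1-m)_+$. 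You instead verify the admissibility hypothesis of \cref{lemmaV}, namely $\alpha=\text{bs}(k,z)\leq-\tfrac{\text{bs}(k,v)}{\text{bs}'(k,v)}$, via the chain $\text{bs}(k,z)\leq\text{bs}(k,v)\leq\tfrac{\text{bs}(k,v)}{\Phi(d_2)}$, which requires the extra Black--Scholes-specific facts $\text{bs}'(k,v)=-\Phi(d_2)$ and $\partial_v\text{bs}=\phi(d_1)$. Both routes are valid and both ultimately rest on positive vega; the paper's is shorter and needs nothing beyond monotonicity of $\text{bs}$ in each argument, while yours makes visible that the true admissibility threshold is $-\tfrac{\text{bs}(k,v)}{\text{bs}'(k,v)}=\tfrac{\text{bs}(k,v)}{\Phi(d_2)}$, which strictly exceeds $\text{bs}(k,v)$, so the lower bound in fact survives for a range of $\alpha$ (hence of $z$) larger than the stated $z\leq v$. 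Your direct treatment of the $k=0$ case via $\text{bs}(0,z)=1$ is also fine (the paper leaves it implicit).
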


	\begin{proof}

The monotonicity and convexity with respect to $M$ are easy to be proved. Since $k+ \text{bs}(k, z)\frac{M}{X} \geq k$ and the function $\text{bs}(k,v)$ is decreasing in $k$, we obtain the first inequality. If $z\leq v$, then $\text{bs}(k,z)\leq\text{bs}(k,v)$ and $\text{BS}^{\mathbb V}_{k,z}(X,M,v)\geq X \frac{\text{bs}\bigl(k+ \text{bs}(k, v)\frac{M}{X}, v\bigr)}{\text{bs}(k,v)}$ which is larger than $(X-M)_+$ from \cref{lemmaT}.

Finally if $z=v$ we apply \cref{propTincreasing}.

\end{proof}

	\hypertarget{the-enriched-svi-models}{%
\subsubsection{The enriched SVI models}\label{the-enriched-svi-models}}

	The Stochastic Volatility Inspired model (SVI) is a model for the
implied total variance with formulation
\[\text{SVI}(h) = a+b(\rho(h-m)+\sqrt{(h-m)^2+\sigma^2})\] where
\(h=\log\frac{K}{S}\) is the log-forward moneyness. It was firstly
proposed by Jim Gatheral in 2004 at the Global Derivatives and Risk
Management Madrid conference \cite{gatheral2004parsimonious}.

In this model, Call prices at moneyness \(k\) are Black-Scholes prices
with implied total variance \(\sqrt{\text{SVI}(\log k)}\). We denote
these prices with
\[\text{BS}^\text{SVI}(S,K) := \text{BS}\biggl(S,K,\sqrt{\text{SVI}\biggl(\log \frac{K}{S}\biggr)}\biggr).\]
Applying \(\mathbb T_k\) and \(\mathbb U_k\), we obtain the
\emph{enriched SVI models} \begin{align*}
\text{BS}^{\text{SVI},\mathbb T}_k(X,M) &:= X\frac{\text{bs}^\text{SVI}\bigl(k+\text{bs}^\text{SVI}(k)\frac{M}{X}\bigr)}{\text{bs}^\text{SVI}(k)}\\
\text{BS}^{\text{SVI},\mathbb U}_k(X,M) &:= X\frac{\text{bs}^\text{SVI}\bigl(k-\frac{\text{bs}^\text{SVI}(k)}{(\text{bs}^\text{SVI})'(k)}\frac{M}{X}\bigr)}{\text{bs}^\text{SVI}(k)}.\\
\end{align*}

We plot prices \(\text{BS}^{\text{SVI},\mathbb T}_k(X,M)\) and
\(\text{BS}^{\text{SVI},\mathbb U}_k(X,M)\) of the enriched SVI models
in \cref{figureSVIhat}, where we choose as initial arbitrage-free
parameters for the SVI smile \(a = 0.01\), \(b = 0.1\), \(\rho = -0.6\),
\(m = -0.05\), and \(\sigma = 0.1\) (see Table 2 of
\cite{martini2022no}). We take a fixed value \(X=1\).

\begin{figure}
	\centering
	\includegraphics[width=.8\linewidth]{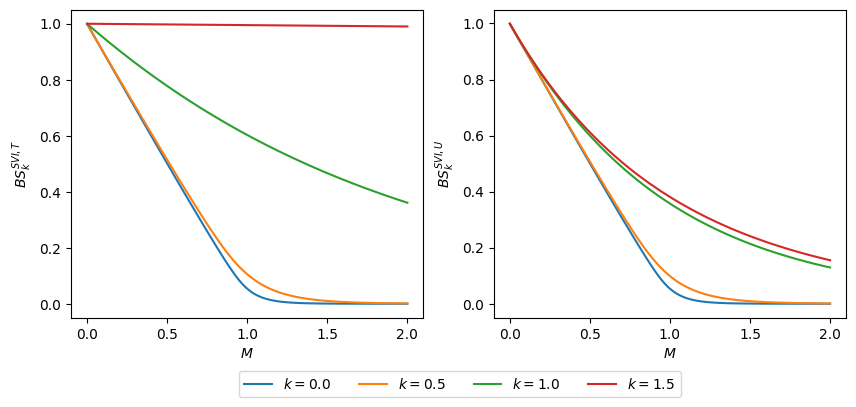}
	\caption{Prices $\text{BS}^{\text{SVI},\mathbb T}_k(X,M)$ and $\text{BS}^{\text{SVI},\mathbb U}_k(X,M)$ obtained from SVI prices applying transformations $\mathbb T_k$ and $\mathbb U_k$ respectively. The original arbitrage-free SVI parameters are $a = 0.01$, $b = 0.1$, $rho = -0.6$, $m = -0.05$, and $sigma = 0.1$.}
	\label{figureSVIhat}
\end{figure}
    
	Again, we can enrich also the SVI model to a \(2\)-parameter family
indexed by \(k\) and \(\alpha\):
\[\text{BS}^{\text{SVI},\mathbb V}_{k,\alpha}(X,M) := X\frac{\text{bs}^\text{SVI}\bigl(k+\alpha\frac{M}{X}\bigr)}{\text{bs}^\text{SVI}(k)}.\]

	\hypertarget{the-enriched-cpt-models}{%
\subsubsection{The enriched CPT models}\label{the-enriched-cpt-models}}

	The CPT prices are defined in \cref{eqCPTC} and have corresponding
normalized prices
\[c^\text{CPT}(k;f,y):=\frac{C^\text{CPT}(S,Sk;f,y)}{S}=\int_{-\infty}^{\infty}\bigl(f(z+y)-kf(z)\bigr)_+\,dz.\]

Even the CPT model can be extended via \(\mathbb T_k\) and
\(\mathbb U_k\) to get the \emph{enriched CPT models} \begin{align*}
C^{\text{CPT},\mathbb T}_k(X,M) &:= X\frac{c^\text{CPT}\bigl(k+c^\text{CPT}(k;f,y)\frac{M}{X};f,y\bigr)}{c^\text{CPT}(k;f,y)}\\
C^{\text{CPT},\mathbb U}_k(X,M) &:= X\frac{c^\text{CPT}\bigl(k-\frac{c^\text{CPT}(k;f,y)}{(c^\text{CPT})'(k;f,y)}\frac{M}{X};f,y\bigr)}{c^\text{CPT}(k;f,y)}\\
\end{align*} and via \(\mathbb V_{k,\alpha}\) to get the \(2\)-parameter
extension of the CPT model:
\[C^{\text{CPT},\mathbb V}_{k,\alpha}(X,M) := X\frac{c^\text{CPT}\bigl(k+\alpha\frac{M}{X};f,y\bigr)}{c^\text{CPT}(k;f,y)}.\]

	\hypertarget{smile-symmetry-and-a-lift-of-the-relative-pricing-function}{%
\section{Smile symmetry and a lift of the relative pricing
function}\label{smile-symmetry-and-a-lift-of-the-relative-pricing-function}}

	As already pointed out in \cref{remarkDeriv1}, Calls on Calls are
contracts written on underliers with a positive mass in \(0\), i.e.~that
can become null with positive probability. This implies some unusual
features such as a derivative of the pricing function with respect to
the strike which is larger than \(-1\) at the origin. However, it is
possible to change the probability measure in order to obtain new
contracts that do not present this feature anymore. Moreover, here is a
tight relationship with the symmetry operation applied to the smile, as
is well-known and detailed in section 2.2 of \cite{martini2021explicit}.
An analogous transformation is performed in section 2.2 of
\cite{tehranchi2020black} with the involution of Call prices.

	\hypertarget{the-change-of-probability-with-a-mass-at-0}{%
\subsection{The change of probability with a mass at
0}\label{the-change-of-probability-with-a-mass-at-0}}

	We firstly start with the definition of the probability \(P^*\)
associated to a non-negative non-constantly zero random variable
\(X_T\). Note that we re-introduce the maturity \(T\) here, in order to
convey some financial context, on one hand, and also to distinguish
those random variables from constant quantities known at the current
time (assumed to be time \(0\)).

\begin{definition}\label{defPstar}

Let $X_T$ a non-negative non-constantly zero random variable on the probability space $(P,\Omega)$, with finite expectation $E[X_T]$. We define
$$P^*(A):=\frac{E\bigl[\mathbbm{1}_AX_T\mathbbm{1}_{\{X_T>0\}}\bigr]}{E[X_T]}$$
for every $A\in\Omega$. We also denote $X:=E[X_T]$ and $P_0 := P(X_T=0)$.

\end{definition}

	An immediate consequence of the above definition is that \(P^*\) is
actually a probability measure on a subset of the original \(\Omega\).
The proof of the following lemma is elementary and so omitted.

\begin{lemma}\label{lemmaXstar}

$P^*$ is a probability measure on $\Omega^*=\{X_T>0\}$. Any random variable $Z$ on $(P,\Omega)$ can be restricted to a random variable on $(P^*,\Omega^*)$ (that we still denote with $Z$). Then $E^*[Z]=\frac{1}{X}E[ZX_T\mathbbm 1_{\{X_T>0\}}]$. 

Let $X^*_T:=\frac{1}{X_T}$. Then
$$E^*[X^*_T]=\frac{1-P_0}{X}.$$

\end{lemma}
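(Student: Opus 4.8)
The plan is to dispatch the three assertions of the lemma in turn, each reducing to a routine measure-theoretic computation; there is no serious obstacle, only bookkeeping of the event $\{X_T>0\}$.

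First I would check that $P^*$ is a probability measure on $\Omega^*=\{X_T>0\}$. Non-negativity of $P^*(A)$ is immediate since $X_T\geq 0$ and $X=E[X_T]>0$ (the latter because $X_T$ is non-negative and not constantly zero, with finite expectation by hypothesis). Countable additivity follows by writing, for pairwise disjoint $A_n$, $\mathbbm{1}_{\bigcup_n A_n}X_T\mathbbm{1}_{\{X_T>0\}}=\sum_n \mathbbm{1}_{A_n}X_T\mathbbm{1}_{\{X_T>0\}}$ as a series of non-negative terms and applying the monotone convergence theorem to pass $E[\cdot]$ inside the sum. For total mass one, observe that $X_T\mathbbm{1}_{\{X_T>0\}}=X_T$ pointwise (both sides vanish on $\{X_T=0\}$), hence $P^*(\Omega^*)=E[X_T\mathbbm{1}_{\{X_T>0\}}]/E[X_T]=E[X_T]/E[X_T]=1$.

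Next, the identity $E^*[Z]=\frac1X E[ZX_T\mathbbm{1}_{\{X_T>0\}}]$ is exactly the change-of-variables (Radon--Nikodym) formula for $P^*$, whose density relative to $P$ is $\frac{X_T\mathbbm{1}_{\{X_T>0\}}}{X}$; the indicator is redundant for the density itself but records that $P^*$ is carried by $\Omega^*$. I would prove it by the standard bootstrap: it holds for $Z=\mathbbm{1}_A$ by the very definition of $P^*$, extends to non-negative simple $Z$ by linearity, to arbitrary non-negative measurable $Z$ by monotone convergence, and to integrable $Z$ by splitting into positive and negative parts. The only point of care is that $Z$ is a priori a random variable on all of $\Omega$, but since $P^*$ lives on $\Omega^*$ its values off $\Omega^*$ are irrelevant, consistently with the factor $\mathbbm{1}_{\{X_T>0\}}$ on the right-hand side.

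Finally I would apply this with $Z=X_T^*=\frac{1}{X_T}$, which is well defined precisely on $\Omega^*=\{X_T>0\}$, the domain of $P^*$, so it is a legitimate random variable on $(P^*,\Omega^*)$. Then $E^*[X_T^*]=\frac1X E\bigl[\tfrac{1}{X_T}X_T\mathbbm{1}_{\{X_T>0\}}\bigr]=\frac1X E[\mathbbm{1}_{\{X_T>0\}}]=\frac{P(X_T>0)}{X}=\frac{1-P_0}{X}$, using $P_0=P(X_T=0)$. The decisive cancellation $\frac{1}{X_T}X_T=1$ is valid exactly on $\{X_T>0\}$, which is why the indicator in the change-of-variables formula is essential; keeping track of that event throughout is the only subtlety of the argument.
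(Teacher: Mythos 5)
Your proof is correct and is precisely the standard elementary argument the paper had in mind when it stated that the proof "is elementary and so omitted": verification of the probability axioms via monotone convergence, the Radon--Nikodym/bootstrap extension to general $Z$, and the cancellation $\frac{1}{X_T}X_T=1$ on $\{X_T>0\}$ yielding $\frac{1-P_0}{X}$. Nothing is missing.
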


	This lemma suggests to consider contracts under the probability \(P^*\)
on the underlier \(X_T^*=\frac{1}{X_T}\). Indeed, it holds
\begin{align*}
E^*\biggl[\biggl(\frac{1}{X_T}-\frac{1}K\biggr)_+\biggr] &= \frac{E\bigl[(K-X_T)_+\mathbbm{1}_{\{X_T>0\}}\bigr]}{XK}\\
E^*\Bigl[\Bigl(\frac{1}K-\frac{1}{X_T}\Bigr)_+\Bigr] &= \frac{E\bigl[(X_T-K)_+\bigr]}{XK}
\end{align*} which suggests that a Call price under \(P\) is also a Put
price on the underlier \(X_T^*\) under the new probability \(P^*\).
Furthermore
\(E[(K-X_T)_+] = E\bigl[(K-X_T)_+\mathbbm{1}_{\{X_T>0\}} + K\mathbbm{1}_{\{X_T=0\}}\bigr] = E\bigl[(K-X_T)_+\mathbbm{1}_{\{X_T>0\}}] + KP_0\)
so that the Put-Call-Parity
\[\frac{1-P_0}{X}-\frac{1}{K} = E^*\biggl[\biggl(\frac{1}{X_T}-\frac{1}{K}\biggr)_+\biggr]-E^*\biggl[\biggl(\frac{1}{K}-\frac{1}{X_T}\biggr)_+\biggr]\]
allows us to express the price of Calls on \(\frac{1}{X_T}\) with strike
\(\frac{1}K\) under \(P^*\) with respect to the original Call prices as
\begin{align*} 
E^*\biggl[\biggl(\frac{1}X_T-\frac{1}K\biggr)_+\biggr] &=  E^*\biggl[\biggl(\frac{1}K-\frac{1}{X_T}\biggr)_+\biggr]+\frac{1-P_0}{X}- \frac{1}K\\
&= \frac{E\bigl[(X_T-K)_+\bigr]}{XK}+\frac{1-P_0}{X}- \frac{1}K.\\
\end{align*}

	We can re-apply the same procedure to the underlier \(X_T^*\) under
\(P^*\), defining a probability measure \(P^{**}\) and an underlier
\(X_T^{**}\).

\begin{definition}

Let $X_T$ a non-negative non-constantly zero random variable on the probability space $(P,\Omega)$, with finite expectation $E[X_T]$. We define
$$P^{**}(A):=\frac{E^*\bigl[\mathbbm 1_AX_T^*\bigr]}{E^*[X_T^*]}$$
for every $A\in\Omega^*$.

\end{definition}

The random variable \(\frac{1}{X_T^*}\) is well defined and does not
vanish under \(P^*\) since it is defined on the space \(\Omega^*\), so
that the corresponding of \cref{lemmaXstar} for the function \(P^{**}\)
becomes the following.

\begin{lemma}\label{lemmaXstarstar}

$P^{**}$ is a probability measure on $\Omega^*=\{X_T>0\}$. Any random variable $Z$ on $(P,\Omega)$ can be restricted to a random variable on $(P^{**},\Omega^*)$ (that we still denote with $Z$). Then $E^{**}[Z]=\frac{X}{1-P_0}E^*[Z]=\frac{1}{1-P_0}E[Z\mathbbm 1_{\{X_T>0\}}]$. 

Let $X_T^{**}:=\frac{1}{X_T^*}$. Then
$$E^{**}[X_T^{**}]=\frac{X}{1-P_0}.$$

\end{lemma}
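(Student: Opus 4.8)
The plan is to treat $P^{**}$ as an absolutely continuous tilt of the already-restricted measure $P^*$, with Radon--Nikodym density $X_T^*/E^*[X_T^*]$, and then collapse the two successive tilts using the elementary pointwise identity $X_T^* X_T = 1$ on $\Omega^* = \{X_T>0\}$; the whole argument parallels the one behind \cref{lemmaXstar}. First I would record that $P^{**}$ is a probability measure on $\Omega^*$: on $\Omega^*$ the variable $X_T^*=1/X_T$ is a well-defined, finite, non-negative random variable, and by \cref{lemmaXstar} its $P^*$-expectation is $E^*[X_T^*]=\frac{1-P_0}{X}\in\,]0,\infty[$, so $A\mapsto P^{**}(A)=E^*[\mathbbm 1_A X_T^*]/E^*[X_T^*]$ is non-negative, countably additive (these are inherited from $E^*$), and satisfies $P^{**}(\Omega^*)=1$ since at $A=\Omega^*$ the numerator coincides with the denominator.

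Next, for the expectation formula, I would use that $E^{**}[Z]=E^*[Z X_T^*]/E^*[X_T^*]$ for any $Z$ integrable enough, and apply \cref{lemmaXstar} with $Z X_T^*$ playing the role of its $Z$: this gives $E^*[Z X_T^*]=\frac1X E[Z X_T^* X_T \mathbbm 1_{\{X_T>0\}}]=\frac1X E[Z\,\mathbbm 1_{\{X_T>0\}}]$, the last step because $X_T^* X_T=1$ on $\{X_T>0\}$. Dividing by $E^*[X_T^*]=\frac{1-P_0}{X}$ yields the two displayed forms $E^{**}[Z]=\frac{X}{1-P_0}E^*[Z X_T^*]=\frac{1}{1-P_0}E[Z\,\mathbbm 1_{\{X_T>0\}}]$; in particular $P^{**}$ is just $P$ conditioned on $\Omega^*$, which is exactly what makes any $Z$ on $(P,\Omega)$ restrict to a random variable on $(P^{**},\Omega^*)$.

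Finally I would specialise to $Z=X_T^{**}=1/X_T^*$. On $\Omega^*$ we have $X_T^*>0$, so $1/X_T^*$ is well defined there and equals $X_T$; moreover $X_T\,\mathbbm 1_{\{X_T=0\}}=0$ a.s., so $X_T\,\mathbbm 1_{\{X_T>0\}}=X_T$ a.s. Substituting into the formula just obtained, $E^{**}[X_T^{**}]=\frac{1}{1-P_0}E[X_T\,\mathbbm 1_{\{X_T>0\}}]=\frac{E[X_T]}{1-P_0}=\frac{X}{1-P_0}$.

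There is essentially no obstacle beyond careful bookkeeping: making sure each expectation is taken on the correct sample space, carrying the indicator $\mathbbm 1_{\{X_T>0\}}$ through each invocation of \cref{lemmaXstar}, and noting that finiteness of $E[X_T]$ (equivalently $E^*[X_T^*]<\infty$, which \cref{lemmaXstar} supplies) is precisely what makes the quotients meaningful. The qualitative point worth highlighting is that $P^{**}$ charges no mass to $\{X_T=0\}$, so under $P^{**}$ the underlier $X_T^{**}=X_T$ is strictly positive — this is the ``lift'' that removes the mass-at-$0$ feature of \cref{remarkDeriv1}.
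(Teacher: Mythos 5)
Your proof is correct and is precisely the elementary change-of-measure verification that the paper leaves implicit: the paper gives no proof of \cref{lemmaXstarstar}, presenting it as the direct analogue of \cref{lemmaXstar} (whose proof is itself ``elementary and so omitted''), and your route --- tilt $P^*$ by the density $X_T^*/E^*[X_T^*]$, then collapse the two tilts via $X_T^*X_T=1$ on $\Omega^*$ --- is the intended one. One remark of substance: your (correct) intermediate identity $E^{**}[Z]=\frac{X}{1-P_0}E^*[ZX_T^*]$ does not match the lemma's displayed first equality $E^{**}[Z]=\frac{X}{1-P_0}E^*[Z]$, which appears to be a typo in the statement (a factor $X_T^*$ is missing inside $E^*$); the second equality $E^{**}[Z]=\frac{1}{1-P_0}E[Z\mathbbm 1_{\{X_T>0\}}]$ and every subsequent use of the lemma in the paper agree with your version.
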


	We can now consider Call prices written on \(X_T^{**}\) using again the
Put-Call-Parity: \begin{align*} 
E^{**}\biggl[\biggl(\frac{1}{X_T^*}-\frac{1}K\biggr)_+\biggr] &=  E^{**}\biggl[\biggl(\frac{1}K-\frac{1}{X_T^*}\biggr)_+\biggr]+\frac{X}{1-P_0}- \frac{1}K\\
&= \frac{1}{1-P_0}E\biggl[\biggl(\frac{1}{K}-X_T\biggr)_+\mathbbm 1_{\{X_T>0\}}\biggr] +\frac{X}{1-P_0}- \frac{1}K\\
&= \frac{1}{1-P_0}\biggl(E\biggl[\biggl(X_T-\frac{1}{K}\biggr)_+\biggr] - X + \frac{1-P_0}{K}\biggr) +\frac{X}{1-P_0}- \frac{1}K\\
&= \frac{1}{1-P_0} E\biggl[\biggl(X_T-\frac{1}K\biggr)_+\biggr].\\
\end{align*} In other words, Calls on \(X^{**}_T\) are still Calls on
\(X_T\), with an apposite rescaling. The main difference between the two
type of Calls is that while Calls on \(X_T\) might have a derivative
larger than \(-1\) in \(0\) because of the non-null probability of
\(X_T\) to nullify, the derived Calls on \(X_T^{**}\) will anyways have
a derivative in \(0\) strictly equal to \(-1\).

	\hypertarget{in-terms-of-pricing-functions}{%
\subsection{In terms of pricing
functions}\label{in-terms-of-pricing-functions}}

	We have introduced the change of probability measure to avoid mass in
zero, now we can consider contracts written in the new probability
spaces. From the previous discussion in
\cref{the-change-of-probability-with-a-mass-at-0} we get the following.

\begin{lemma}

Assume there is a function $C$ of two variables such that $C(X, K) = E[(X_T-K)_+]$. Then
\begin{align*} 
E^*\bigl[\bigl(X_T^*-K\bigr)_+\bigr] &= \frac{K}{X} C\biggl(X,\frac{1}K\biggr)+\frac{1-P_0}{X}- K\\
E^{**}\bigl[\bigl(X_T^{**}-K\bigr)_+\bigr] &= \frac{1}{1-P_0} C(X, K).
\end{align*}

\end{lemma}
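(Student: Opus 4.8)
The plan is to reduce both identities to the expectation-representation formulas of \cref{lemmaXstar} and \cref{lemmaXstarstar} together with the ordinary put--call parity under $P$; once the defining relation $C(X,K)=E[(X_T-K)_+]$ is inserted, each identity becomes a short rearrangement of quantities already displayed in \cref{the-change-of-probability-with-a-mass-at-0}.

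For the first identity, I would apply \cref{lemmaXstar}, which states $E^*[Z]=\frac1X E[Z\,X_T\mathbbm{1}_{\{X_T>0\}}]$, to $Z=(X_T^*-K)_+=(\tfrac1{X_T}-K)_+$. On $\{X_T>0\}$ one has $X_T(\tfrac1{X_T}-K)_+=(1-KX_T)_+=K(\tfrac1K-X_T)_+$ (this uses $K>0$, tacitly assumed so that $\tfrac1K$ makes sense), so that $E^*[(X_T^*-K)_+]=\frac KX\,E[(\tfrac1K-X_T)_+\mathbbm{1}_{\{X_T>0\}}]$. The indicator is removed via the elementary identity $(\tfrac1K-X_T)_+\mathbbm{1}_{\{X_T>0\}}=(\tfrac1K-X_T)_+-\tfrac1K\mathbbm{1}_{\{X_T=0\}}$, whose expectation contributes $-\tfrac{P_0}{K}$; then put--call parity $E[(\tfrac1K-X_T)_+]=E[(X_T-\tfrac1K)_+]-E[X_T]+\tfrac1K=C(X,\tfrac1K)-X+\tfrac1K$ turns the right-hand side into $\frac KX\bigl(C(X,\tfrac1K)-X+\tfrac1K\bigr)-\frac{P_0}{X}$, which simplifies to $\frac KX C(X,\tfrac1K)-K+\frac{1-P_0}{X}$, the claimed expression. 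Equivalently, one can just relabel the strike $\tfrac1K\mapsto K$ in the formula for $E^*[(\tfrac1{X_T}-\tfrac1K)_+]$ already derived in \cref{the-change-of-probability-with-a-mass-at-0}.

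For the second identity, the key remark is that $X_T^{**}=\tfrac1{X_T^*}=X_T$ on $\Omega^*=\{X_T>0\}$, so \cref{lemmaXstarstar}, which gives $E^{**}[Z]=\frac1{1-P_0}E[Z\,\mathbbm{1}_{\{X_T>0\}}]$, yields $E^{**}[(X_T^{**}-K)_+]=\frac1{1-P_0}E[(X_T-K)_+\mathbbm{1}_{\{X_T>0\}}]$. Since $K\ge 0$, the payoff $(X_T-K)_+$ already vanishes on $\{X_T=0\}$, so the indicator drops out with no correction term, and $E^{**}[(X_T^{**}-K)_+]=\frac1{1-P_0}E[(X_T-K)_+]=\frac1{1-P_0}C(X,K)$.

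There is no genuine obstacle here; the entire content is bookkeeping. The two points that deserve a little care are the $\mathbbm{1}_{\{X_T=0\}}$ correction term in the first identity (together with the implicit hypothesis $K>0$ there), and keeping the normalizing constants straight, namely $X=E[X_T]$, $P_0=P(X_T=0)$, $E^*[X_T^*]=\tfrac{1-P_0}{X}$ and $E^{**}[X_T^{**}]=\tfrac{X}{1-P_0}$ from \cref{lemmaXstar} and \cref{lemmaXstarstar}.
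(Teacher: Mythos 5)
Your proof is correct and follows essentially the same route as the paper: the paper obtains both identities from the computations in the preceding subsection, namely the measure-change formulas $E^*[Z]=\frac1X E[ZX_T\mathbbm 1_{\{X_T>0\}}]$ and $E^{**}[Z]=\frac1{1-P_0}E[Z\mathbbm 1_{\{X_T>0\}}]$ combined with one application of put--call parity and the mass-at-zero correction $E[(K-X_T)_+]=E[(K-X_T)_+\mathbbm 1_{\{X_T>0\}}]+KP_0$. The only cosmetic difference is that the paper applies put--call parity under $P^*$ (after identifying the Put on $X_T^*$ as a rescaled Call on $X_T$), whereas you apply it under $P$ and carry the indicator correction explicitly; the two are the same bookkeeping traversed in the opposite order.
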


	We are interested by necessary and sufficient conditions on \(C\) such
that there exist functions \(C^*\) and \(C^{**}\) satisfying
\begin{equation}\label{eqCstarCstarstar}
\begin{aligned} 
C^*(X^*,K) &= E^*\bigl[\bigl(X^*_T-K\bigr)_+\bigr]\\
C^{**}(X^{**},K) &= E^{**}\bigl[\bigl(X^{**}_T-K\bigr)_+\bigr]
\end{aligned}
\end{equation} where \(X^*:=\frac{1-P_0}{X}\) and
\(X^{**}:=\frac{X}{1-P_0}\).

	The usual situation where \(P_0 = 0\) constantly leads, given that
\(X^*=\frac{1}{X}\) and \(X^{**}=X\), to the formulas \begin{align*}
C^*(X^*,K) &= K X^* C\biggl(\frac{1}{X^*},\frac{1}K\biggr)+X^*- K\\
C^{**}(X^{**},K) &= C(X^{**}, K).
\end{align*} which are described in Theorem 2.2.2. of
\cite{tehranchi2020black}.

	Note that in the degenerate case where \(C(X,K)=(X-(1-P_0)K)_+\), it
holds
\(\frac{K}{X} C\bigl(X,\frac{1}K\bigr)+\frac{1-P_0}{X}- K = (K-X^*)_++X^*-K = (X^{*}-K)_+\)
and \(\frac{1}{1-P_0} C(X, K) = (X^{**}-K)_+\), so that the required
property in \cref{eqCstarCstarstar} holds.

	Going back to the general case, the following lemma finds a sufficient
condition for the existence of the functions \(C^*\) and \(C^{**}\) in
case \(P_0\) is given by some function \(\hat P_0(X)=P_0\).

\begin{lemma}\label{lemmaCondSuffF}

A sufficient condition for the existence of functions $C^*$ and $C^{**}$ satisfying \cref{eqCstarCstarstar} is the existence of a function $f$ such that $X = f\bigl(\frac{X}{1-P_0}\bigr)$ where $P_0=\hat P_0(X)$. Then
\begin{align*}
C^*(X^*,K) &:= \frac{K}{f\bigl(\frac{1}{X^*}\bigr)} C\biggl(f\biggl(\frac{1}{X^*}\biggr),\frac{1}K\biggr)+X^*- K\\
C^{**}(X^{**},K) &:= \frac{X^{**}}{f(X^{**})} C(f(X^{**}), K).
\end{align*}

\end{lemma}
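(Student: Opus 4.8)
The plan is to read off $C^*$ and $C^{**}$ directly from the two identities of the preceding lemma,
\[
E^*\bigl[(X_T^*-K)_+\bigr] = \frac{K}{X}\,C\Bigl(X,\tfrac1K\Bigr)+\frac{1-P_0}{X}-K,
\qquad
E^{**}\bigl[(X_T^{**}-K)_+\bigr] = \frac{1}{1-P_0}\,C(X,K),
\]
by re-expressing their right-hand sides in terms of the new underlier values. The only obstruction is that these right-hand sides still involve the ``hidden'' quantities $X$ and $P_0$ rather than $X^*=\frac{1-P_0}{X}$ and $X^{**}=\frac{X}{1-P_0}$; so the whole point is to recover $X$, and hence $P_0$, from the new underlier. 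First I would record the automatic identity $\frac1{X^*}=X^{**}$ (both equal $\frac{X}{1-P_0}$), so that recovering $X$ from $X^{**}$ is the same thing as recovering it from $X^*$.

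Then I would bring in the hypothesis: under $P_0=\hat P_0(X)$ the map $X\mapsto X^{**}=\frac{X}{1-\hat P_0(X)}$ depends on $X$ alone, and the assumed existence of $f$ with $X=f\bigl(\frac{X}{1-P_0}\bigr)$ says exactly that this map has a left inverse $f$, i.e.\ $X=f(X^{**})=f\bigl(\frac1{X^*}\bigr)$. From $1-P_0=\frac{X}{X^{**}}=\frac{f(X^{**})}{X^{**}}$ one then obtains $1-P_0$ (and $P_0$) as a function of $X^{**}$, together with the consistency relation $\frac{1-P_0}{X}=\frac1{X^{**}}=X^*$.

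Substituting $X=f\bigl(\frac1{X^*}\bigr)$ and $\frac{1-P_0}{X}=X^*$ into the first identity, and $X=f(X^{**})$ with $\frac1{1-P_0}=\frac{X^{**}}{X}=\frac{X^{**}}{f(X^{**})}$ into the second, produces precisely
\[
C^*(X^*,K) := \frac{K}{f\bigl(\frac1{X^*}\bigr)}\,C\Bigl(f\bigl(\tfrac1{X^*}\bigr),\tfrac1K\Bigr)+X^*-K,
\qquad
C^{**}(X^{**},K) := \frac{X^{**}}{f(X^{**})}\,C\bigl(f(X^{**}),K\bigr).
\]
To finish I would verify \cref{eqCstarCstarstar} by substituting back $X^*=\frac{1-P_0}{X}$ and $X^{**}=\frac{X}{1-P_0}$: since $f\bigl(\frac1{X^*}\bigr)=f(X^{**})=f\bigl(\frac{X}{1-P_0}\bigr)=X$ by the defining property of $f$, both displayed definitions collapse back to the two identities of the preceding lemma, which are $E^*[(X_T^*-K)_+]$ and $E^{**}[(X_T^{**}-K)_+]$ as required. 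As a sanity check, when $P_0\equiv0$ one may take $f$ the identity and recovers the formulas quoted from Theorem 2.2.2 of \cite{tehranchi2020black}.

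I do not expect a genuine obstacle: the computation is a bare substitution, and the care needed is purely bookkeeping — that $X^{**}$ lies in the domain of $f$ (it does, being in the range of $X\mapsto X^{**}$), that $f(X^{**})=X>0$ so the divisions are legitimate, and that $K$ passes through untouched. The conceptual content is the single observation that the only missing ingredient for turning the preceding lemma into bona fide pricing functions is the recovery of the expectation $X$ from the transformed underlier, which is exactly what $f$ supplies.
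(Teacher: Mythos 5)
Your proposal is correct and is exactly the direct substitution the paper intends: the paper states this lemma without proof, treating it as an immediate consequence of the preceding lemma once one notes $\frac{1}{X^*}=X^{**}$, $X=f(X^{**})$, $\frac{1-P_0}{X}=X^*$ and $\frac{1}{1-P_0}=\frac{X^{**}}{f(X^{**})}$. Your bookkeeping (domain of $f$, positivity of $f(X^{**})$, consistency with the $P_0\equiv 0$ case) supplies precisely the details the paper leaves implicit.
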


This lemma generalizes the no-mass at \(0\) situation where \(f\) is the
identity function.

	\hypertarget{in-terms-of-normalized-pricing-functions}{%
\subsubsection{In terms of normalized pricing
functions}\label{in-terms-of-normalized-pricing-functions}}

	It is also interesting to look at the normalized partial pricing
functions (for a fixed pair \(X,P_0\)) where the strike is replaced by
the moneyness \(m\), and the price is scaled by the underlier as in
\cref{normalized-call-prices}.

\begin{lemma}

It holds
\begin{align*} 
c^*(m) &= mc\biggl(\frac{1}{-c'(0_+)m}\biggr)+1- m\\
c^{**}(m) &= c\biggl(\frac{m}{-c'(0_+)}\biggr).
\end{align*}
In particular
\begin{align*}
c^{*'}(0_+) &= c(\infty)-1\\
c^{**'}(0_+) &= -1.
\end{align*}

\end{lemma}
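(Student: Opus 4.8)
The plan is to specialize the preceding lemma, which gives
\[
C^*(X^*,K) = \frac{K}{X}\,C\!\left(X,\tfrac1K\right)+\frac{1-P_0}{X}-K,\qquad
C^{**}(X^{**},K) = \frac{1}{1-P_0}\,C(X,K),
\]
to the normalized setting of \cref{defNormalization}. First I would fix the pair $(X,P_0)$, recall that the normalized price $c$ is related to $C$ by $C(X,K) = X\,c\!\left(\tfrac{K}{X}\right)$, and recall from \cref{lemmaXstar} that $-c'(0_+) = P(X_T>0)/1 = 1-P_0$ once we normalize; more carefully, the underlier of the starred world is $X^* = \tfrac{1-P_0}{X}$ and of the double-starred world is $X^{**}=\tfrac{X}{1-P_0}$, and the key identity to isolate is $-c'(0_+) = 1-P_0$, which follows because $c'(0_+)$ is (minus) the probability that the underlier is strictly positive, i.e. $-c'(0_+)=P(X_T>0)=1-P_0$ (see Theorem 2.1.2 of \cite{tehranchi2020black} as invoked in \cref{remarkDeriv1}).

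**Deriving the two formulas.** With that identity in hand, I would substitute. For $c^{**}$: by definition $c^{**}(m) = \tfrac{1}{X^{**}}C^{**}(X^{**},X^{**}m)$. Plugging in $C^{**}(X^{**},K)=\tfrac{1}{1-P_0}C(X,K) = \tfrac{1}{1-P_0}X\,c\!\left(\tfrac{K}{X}\right)$ with $K=X^{**}m$ gives $c^{**}(m) = \tfrac{1}{X^{**}}\cdot\tfrac{X}{1-P_0}\,c\!\left(\tfrac{X^{**}m}{X}\right)$; since $X^{**}=\tfrac{X}{1-P_0}$ the prefactor collapses to $1$ and the argument becomes $\tfrac{m}{1-P_0}=\tfrac{m}{-c'(0_+)}$, yielding $c^{**}(m)=c\!\left(\tfrac{m}{-c'(0_+)}\right)$. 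For $c^{*}$: similarly $c^*(m)=\tfrac{1}{X^*}C^*(X^*,X^*m)$; substituting $C^*(X^*,K)=\tfrac{K}{X}C\!\left(X,\tfrac1K\right)+X^*-K = \tfrac{K}{X}\cdot X\,c\!\left(\tfrac{1}{KX}\right)+X^*-K$ with $K=X^*m$ and dividing by $X^*$ produces $c^*(m)= m\,c\!\left(\tfrac{1}{X^*\cdot X\cdot m}\right)+1-m$. Here one uses $X^*X = 1-P_0 = -c'(0_+)$, so the argument of $c$ is $\tfrac{1}{-c'(0_+)\,m}$, giving $c^*(m) = m\,c\!\left(\tfrac{1}{-c'(0_+)m}\right)+1-m$.

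**The derivatives at zero.** For $c^{**\prime}(0_+)$, differentiate $c^{**}(m)=c\!\left(\tfrac{m}{-c'(0_+)}\right)$ by the chain rule to get $c^{**\prime}(m)=\tfrac{1}{-c'(0_+)}c'\!\left(\tfrac{m}{-c'(0_+)}\right)$, so $c^{**\prime}(0_+)=\tfrac{c'(0_+)}{-c'(0_+)}=-1$. For $c^{*\prime}(0_+)$, write $c^*(m)= m\,c\!\left(\tfrac{1}{-c'(0_+)m}\right)+1-m$ and note that as $m\to 0_+$ the argument $\tfrac{1}{-c'(0_+)m}\to\infty$, so $c\!\left(\tfrac{1}{-c'(0_+)m}\right)\to c(\infty)$. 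The term $m\,c(\cdot)$ therefore behaves like $m\,c(\infty)$ to leading order; I would make this rigorous by checking that $m\cdot\tfrac{d}{dm}c\!\left(\tfrac{1}{-c'(0_+)m}\right) = -\tfrac{1}{-c'(0_+)m}c'\!\left(\tfrac{1}{-c'(0_+)m}\right)\to 0$, using that $yc'(y)\to 0$ as $y\to\infty$ (this is exactly the fact established inside the proof of \cref{propTincreasing}, namely that $c(y)-yc'(y)\to c(\infty)$ forces $yc'(y)\to 0$). Hence $c^{*\prime}(0_+) = c(\infty)+0-1 = c(\infty)-1$.

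**Main obstacle.** The only non-routine point is the tail behaviour $yc'(y)\to 0$ needed for $c^{*\prime}(0_+)$; everything else is algebraic substitution. But since this limit is already proved in the excerpt (within \cref{propTincreasing}, via the representation $c(y)=1-E[S\wedge y]$, $-c'(y)=P(S>y)$ and the computation $c(y)-yc'(y)=1-\int_0^y s f_S(s)\,ds\to 1-E[S]=c(\infty)$), I would simply cite it. The identification $-c'(0_+)=1-P_0$ is likewise standard (Theorem 2.1.2 of \cite{tehranchi2020black}), so the proof reduces to bookkeeping of the normalization constants.
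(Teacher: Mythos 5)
Your proposal is correct and follows essentially the same route as the paper: the same substitution of $K=X^{*}m$ (resp.\ $K=X^{**}m$) into the formulas for $C^{*}$ and $C^{**}$ together with $C(X,K)=Xc\bigl(\frac{K}{X}\bigr)$ and $-c'(0_+)=1-P_0$, and the same limiting argument for $c^{*\prime}(0_+)$ resting on $yc'(y)\to 0$ as $y\to\infty$. The only (immaterial) difference is that you cite this tail fact from the proof of \cref{propTincreasing}, whereas the paper re-derives it in place via a Fubini argument on $\frac{d}{dK}C(X,K)=-P(X_T>K)$.
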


\begin{proof}

From definitions, it holds
\begin{align*} 
c^*(m) &= \frac{m}{X} C\biggl(X,\frac{X}{(1-P_0)m}\biggr)+1- m\\
c^{**}(m) &= \frac{1}{X} C\biggl(X, \frac{X}{1-P_0} m\biggr).
\end{align*}
The first conclusion follows from the definition of $c(m) = \frac{C(X, Xm)}{X}$ and from the fact that $P_0 = P(X_T=0)=1+c'(0_+)$.

It is easy to show that the derivative of $c^{**}$ in $0_+$ is $-1$, while the derivative of $c^*$ in $0_+$ is $c(\infty)-1+\lim_{m\to 0_+}\frac{c'\bigl(\frac{1}{-c'(0_+)m}\bigr)}{-c'(0_+)m}$. Now by Fubini, $\frac{d}{dK} C(X,K)=-P(X_T>K)=-E[\mathbbm{1}_{\{X_T>K\}}]$ and it holds
\begin{align*}
X &= E[X_T\mathbbm{1}_{\{X_T>K\}}] + E[X_T\mathbbm{1}_{\{X_T\leq K\}}]\\
&\geq KE[\mathbbm{1}_{\{X_T>K\}}] + E[X_T\mathbbm{1}_{\{X_T\leq K\}}].\\
\end{align*}
Letting $K$ to $\infty$, the term $E[X_T\mathbbm{1}_{\{X_T\leq K\}}]$ goes to $E[X_T] = X$, so that the term $KE[\mathbbm{1}_{\{X_T>K\}}] = -K\frac{d}{dK} C(X,K)$ must go to $0$. As a consequence, $kc'(k)=\frac{K}{X}\frac{d}{dK} C(X,K)$ goes to $0$ as $k=\frac{1}{-c'(0_+)m}$ goes to $\infty$.

\end{proof}

	\hypertarget{the-lifted-calls-on-calls}{%
\subsection{The lifted Calls on Calls}\label{the-lifted-calls-on-calls}}

	Let us go back now to the case of the relative Call on Call pricing
function which satisfies \[C(S, K+M) = \hat C_K\bigl(C(S, K), M\bigr).\]
This means that we set the random variable \(X_T\) in \cref{defPstar} to
be \((S_T-K)_+\), so that \(X_T=0\) iff \(S_T \leq K\).

In this contest, the random variables \(X_T^*\) and \(X_T^{**}\)
correspond to \(\frac{1}{(S_T-K)_+}\) and \((S_T-K)_+\) seen as random
variables in the set \(\{S_T>K\}\), so that \begin{align*}
X &= C(S,K)\\
X^* &= \frac{1-P(S_T\leq K)}{C(S,K)}\\
X^{**} &= \frac{C(S,K)}{1-P(S_T\leq K)}.
\end{align*}

	Then we want to find functions \(\hat C^*_K\) and \(\hat C^{**}_K\) such
that \begin{equation}\label{eqCKstar}
\begin{aligned}
\hat C_K^*(X^*,M) &= \frac{M}{C(S,L)} \hat C_K\biggl(C(S,K),\frac{1}M\biggr)+\frac{1-P(S_T \leq K)}{C(S,L)}- M\\
\hat C^{**}_K(X^{**}, M) &= \frac{1}{1-P(S_T \leq K)}\hat C_K\bigl(C(S, K),M\bigr)
\end{aligned}
\end{equation} and we call \emph{lifted} Calls on Calls the prices
\(\hat C^{**}_K\).

In the following proposition we reconsider \cref{lemmaCondSuffF} and
state a sufficient condition for the existence of such functions
\(\hat C_K^*\) and \(\hat C_K^{**}\).

	\begin{proposition}[Sufficient condition for the existence of $\hat C^*_K$ and $\hat C^{**}_K$]\label{lemmaExistenceF}

A sufficient condition for the existence of functions $\hat C^*_K$ and $\hat C^{**}_K$ satisfying \cref{eqCKstar} is that the function
$$S \to -\frac{d}{dS} C(S,K)\frac{d}{dK} C(S,K) + C(S,K)\frac{d^2}{dSdK} C(S,K)$$
has constant sign for fixed $K$.

In the homogeneous case this is equivalent to the condition that the function
\begin{equation}\label{eqG1Homogeneous}
kc'(k)^2-c(k)c'(k)-kc(k)c''(k)
\end{equation}
has constant sign.

In the Black-Scholes case, the condition holds true.

\end{proposition}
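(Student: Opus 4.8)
The plan is to read everything off \cref{lemmaCondSuffF}, applied with the "abstract underlier" $X_T=(S_T-K)_+$, which is exactly the choice made in the paragraph introducing the lifted Calls on Calls. With that choice $X=C(S,K)$, and since (by Fubini, as noted earlier) $\partial_K C(S,K)=-P(S_T>K)$, we get $1-P_0=1-P(S_T\le K)=-\partial_K C(S,K)$, hence
$$X^{**}=\frac{X}{1-P_0}=\frac{C(S,K)}{-\partial_K C(S,K)},$$
which at fixed $K$ is a function of $S$. Because $S\mapsto C(S,K)$ is invertible under the standing assumption, $P_0$ is a function of $X$ alone, and the function $f$ required by \cref{lemmaCondSuffF} — the one expressing $X$ in terms of $X^{**}$ — exists as soon as $S\mapsto X^{**}$ is injective, for which it suffices that $\frac{d}{dS}X^{**}$ keeps a constant (nonvanishing) sign.

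Next I would compute that derivative. By the quotient rule,
$$\frac{d}{dS}\,\frac{C(S,K)}{-\partial_K C(S,K)}=\frac{-\partial_S C(S,K)\,\partial_K C(S,K)+C(S,K)\,\partial^2_{S K}C(S,K)}{\bigl(\partial_K C(S,K)\bigr)^2},$$
so its sign is that of the numerator $-\partial_S C\,\partial_K C+C\,\partial^2_{S K}C$; this is precisely the displayed function, which settles the first assertion. In the homogeneous case I would then write $C(S,K)=Sc(k)$ with $k=K/S$, so that $\partial_S C=c(k)-kc'(k)$, $\partial_K C=c'(k)$ and $\partial^2_{S K}C=-\tfrac{k}{S}c''(k)$; substituting, the factors of $S$ cancel and the numerator becomes $kc'(k)^2-c(k)c'(k)-kc(k)c''(k)$, a function of $k$ alone. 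Since $S\mapsto K/S$ is a bijection of $(0,\infty)$ at fixed $K$, "constant sign in $S$" and "constant sign in $k$" are the same requirement, which is \cref{eqG1Homogeneous}.

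For the Black-Scholes case I would take $c=\mathrm{bs}$ and use $\mathrm{bs}'(k)=-\Phi(d_2)$ and $\mathrm{bs}''(k)=\tfrac{\phi(d_2)}{kv}$ together with $\phi(d_1)=k\phi(d_2)$; a short computation rewrites \cref{eqG1Homogeneous} as
$$\Phi(d_1)\Phi(d_2)+\tfrac1v\bigl(\phi(d_1)\Phi(d_2)-\phi(d_2)\Phi(d_1)\bigr).$$
Dividing by $\Phi(d_1)\Phi(d_2)>0$ and writing $G:=\phi/\Phi$, and using $d_1=d_2+v$, positivity amounts to $v+G(d_1)-G(d_2)=\bigl(d_1+G(d_1)\bigr)-\bigl(d_2+G(d_2)\bigr)>0$, i.e. to the strict monotonicity of $d\mapsto d+G(d)$, whose derivative is $1+G'(d)$. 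Since $G(d)=\phi(-d)/\bigl(1-\Phi(-d)\bigr)=\lambda(-d)$ with $\lambda$ the standard normal hazard rate, one has $G'(d)=-\lambda'(-d)$, so the claim reduces to the classical fact that $\lambda'<1$ on $\mathbb R$ (equivalently the refined Mills bound $1-\Phi(x)>\tfrac{\sqrt{x^2+4}-x}{2}\,\phi(x)$, i.e. $\lambda(x)\bigl(\lambda(x)-x\bigr)<1$, combined with the identity $\lambda'=\lambda(\lambda-x)$). Hence the expression is strictly positive, in particular of constant sign, and the Black-Scholes condition holds.

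The two derivative computations and the algebraic simplification in the Black-Scholes case are routine bookkeeping; the one genuine ingredient is the hazard-rate inequality $\lambda'<1$ in the last paragraph, which is where I expect the actual difficulty to lie.
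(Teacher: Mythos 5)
Your proposal is correct and follows the paper's proof almost step for step: reduce via \cref{lemmaCondSuffF} to the monotonicity of $S\mapsto C(S,K)/(-\partial_K C(S,K))$, identify the sign of its derivative with the displayed expression, specialize to $C(S,K)=Sc(K/S)$ to get \cref{eqG1Homogeneous}, and check positivity in Black--Scholes. The only divergence is the endgame: the paper divides by $\phi(d_1)\phi(d_2)$ and shows that the auxiliary family $v\mapsto R(x+v)/(1+vR(x+v))$ with $R=\Phi/\phi$ is decreasing in $v$, whereas you divide by $\Phi(d_1)\Phi(d_2)$ and reduce to the strict monotonicity of $d\mapsto d+\phi(d)/\Phi(d)$; both arguments bottom out at exactly the same Sampford/Mills-ratio inequality (equivalently, that the standard normal hazard rate has derivative less than $1$), and yours is arguably the more direct packaging of it.
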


	\begin{proof}

From \cref{lemmaCondSuffF}, a sufficient condition for the existence of $\hat C^*_K$ and $\hat C^{**}_K$ is the existence of a function $f$ such that $f\bigl(\frac{C(S, K)}{1-P(S_T \leq K)}\bigr)=C(S,K)$. This condition is one to one with the fact that the function $g(S)=\frac{C(S, K)}{1-P(S_T \leq K)}$ is monotone. Indeed, if $g(S^1)=g(S^2)$, then $f(g(S^1))=f(g(S^2))$, i.e. $C(S^1,K)=C(S^2,K)$ and $S^1=S^2$ since $C(\cdot,K)$ is a monotone function. On the other hand, if $g$ is monotone, then the function $f(x)=C(g^{-1}(x),K)$ is the required function.

Observe that $1-P(S_T \leq K) = -\frac{d}{dK}C(S,K)$. Then we should prove that $g(S)=\frac{C(S,K)}{-\frac{d}{dK} C(S,K)}$ is monotone. The derivative of $g$ has the sign of
$$-\frac{d}{dS} C(S,K)\frac{d}{dK} C(S,K) + C(S,K)\frac{d^2}{dSdK} C(S,K).$$

In the homogeneous case
$$g(S) = \frac{c(k)S}{-c'(k)}$$
and it is monotone iff \cref{eqG1Homogeneous} has constant sign.

In the Black-Scholes case, we have
\begin{align*}
\text{bs}(k) &= \Phi(d_1)-k\Phi(d_2)\\
\text{bs}'(k) &= -\Phi(d_2)\\
\text{bs}''(k) &= \frac{\phi(d_2)}{kv}
\end{align*}
where $d_{1,2} = -\frac{\log k}{v} \pm \frac{v}2$ and $v=\sigma\sqrt T$. The expression in \cref{eqG1Homogeneous} becomes $\Phi(d_1)\Phi(d_2)-(\Phi(d_1)-k\Phi(d_2))\frac{\phi(d_2)}{v}$ and, using the equality $k\phi(d_2)=\phi(d_1)$, the latter expression becomes $\frac{1}{v}(\phi(d_1)\Phi(d_2) + v \Phi(d_1)\Phi(d_2) - \phi(d_2)\Phi(d_1))$. Since $d_1=d_2+v$, we shall study the sign of $\phi(x+v)\Phi(x) + v\Phi(x+v)\Phi(x) - \phi(x)\Phi(x+v)$ for $v>0$. Dividing by $\phi(x+v)\phi(x)$, this reduces to study
\begin{equation}\label{eqRxy}
R(x) + v R(x+v)R(x) - R(x+v)
\end{equation}
where $R(x)=\frac{\Phi(x)}{\phi(x)}$. Let us consider the function $\mathcal R_x(v) = \frac{R(x+v)}{1+v R(x+v)}$ for a fixed $x$. Its derivative can be simplified to $\frac{R'(x+v)-R^2(x+v)}{(1+v R(x+v))^2}$, which has the sign of $\phi(z)^2+z\Phi(z)n(z)-\Phi(z)^2=\phi(z)^2 \bigl(1+z R(z)-R(z)^2 \bigl)$ where $z=x+v$.

Observe now that $R(z)=r(-z)$ where $r$ denotes the Mill's ratio, whence  $1+z R(z)-R(z)^2=1-tr(t)-r(t)^2$ with $t=-z$. We know that $1-tr(t)=-r'(t)$, and also from Sampford \cite{sampford1953some} that $1 > \frac{1}r'=\frac{-r'}{r^2}$, which proves that $1+z R(z)-R(z)^2 < 0$.

Therefore $\mathcal R_x(v)$ attains its maximum for $v$ going to $0$, where $\mathcal R_x(0)=R(x)$. As a consequence the expression \cref{eqRxy} is always positive and so is \cref{eqG1Homogeneous}.

\end{proof}

	\Cref{lemmaExistenceF} is of particular interest since it shows that in
the Black-Scholes case there exist functions \(\hat C^*_K\) and
\(\hat C^{**}_K\) satisfying \cref{eqCKstar}. In particular, in the
Black-Scholes case it is possible to defined lifted Call on Call pricing
functions, i.e.~Calls on Calls with the property of a derivative equal
to \(-1\) at \(0\). In particular, we have proved that in the
Black-Scholes case the function
\(g(X)=\frac{X}{\Phi(d_2(C^{-1}(X,K),K,v))}\) where
\(d_2(a,b,v)=-\frac{\log\frac{b}{a}}{v}-\frac{v}{2}\) is invertible in
the variable \(X\) and its inverse is \(f(X^{**}):=g^{-1}(X^{**})\)
which satisfies
\[f\biggl(\frac{X}{\Phi(d_2(C^{-1}(X,K),K,v))}\biggr) = X.\] The
function \(f\), however, does not present an explicit form.

Using \cref{propQuasiClosedCPT} we can write the formula for the lifted
Call on Call pricing function in the Black-Scholes case:

	\begin{proposition}[Formula for the lifted Call on Call pricing function in the Black-Scholes model]

For $K,M,v=\sigma\sqrt T\geq 0$, let $\psi(a,v;\Phi) = \inf_{r \in \mathbb R} \frac{a+\Phi(r-v)}{\Phi(r)}$. Call $\tilde f(\cdot,v)$ the inverse of the function
$$z\to\frac{z}{\Phi\bigl(d_2(\psi(z,v;\Phi),1,v)\bigr)}.$$
Then for Black-Scholes prices it holds
$$\hat C_K^{**}(X^{**},M) = \frac{1}{\Phi(d_2(\hat s,1,v))}\Bigl(K\hat s\Phi\bigl(d_1(K\hat s,K+M,v)\bigr) - (K+M)\Phi\bigl(d_2(K\hat s,K+M,v)\bigr)\Bigr)$$
where $d_{1,2}(a,b,v)=-\frac{\log\frac{b}{a}}{v} \pm \frac{v}{2}$ and $\hat s = \psi\bigl(\tilde f\bigl(\frac{X^{**}}K,v\bigr),v;\Phi\bigr)$.

\end{proposition}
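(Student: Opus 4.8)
The plan is to reduce the lifted price to an ordinary Black--Scholes price using the relation established in \cref{eqCKstar}, and then to express the hidden underlier $S$ (equivalently $\hat s=S/K$) in terms of $X^{**}$ by means of the CPT inversion of \cref{lemmaPsi}, specialised to $f=\phi$.

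First I would recall from \cref{eqCKstar}, applied with $X_T=(S_T-K)_+$ so that $1-P_0=P(S_T>K)$, that
$$\hat C^{**}_K(X^{**},M) = \frac{\hat C_K(C(S,K),M)}{P(S_T>K)} = \frac{C(S,K+M)}{P(S_T>K)},\qquad X^{**}=\frac{C(S,K)}{P(S_T>K)}.$$
In the Black--Scholes model $P(S_T>K)=-\tfrac{d}{dK}C(S,K)=\Phi(d_2(S,K,v))$ (as in the computation of \cref{lemmaExistenceF}) and $C(S,K)=\text{BS}(S,K,v)$, so the whole task is to write $S$ as a function of $X^{**}$, $K$, $v$. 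Since $\text{BS}$ is positively homogeneous of degree one and $d_{1,2}(S,K,v)=d_{1,2}(S/K,1,v)$, it is convenient to pass to $s:=S/K$: then $X^{**}/K=\text{BS}(s,1,v)/\Phi(d_2(s,1,v))$ and $\Phi(d_2(S,K,v))=\Phi(d_2(s,1,v))$.

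Second I would recover $s$ from $X^{**}/K$. By \cref{lemmaPsi} with $f=\phi$, $F=\Phi$, the equation $\text{BS}(S,K,v)=X$ forces $S=K\psi(X/K,v;\Phi)$, i.e. $z\mapsto\psi(z,v;\Phi)$ is the inverse of $s\mapsto\text{BS}(s,1,v)$. Writing $z:=\text{BS}(s,1,v)$ we get
$$\frac{X^{**}}{K} = \frac{\text{BS}(s,1,v)}{\Phi(d_2(s,1,v))} = \frac{z}{\Phi\bigl(d_2(\psi(z,v;\Phi),1,v)\bigr)},$$
which is precisely the map inverted by $\tilde f(\cdot,v)$; hence $z=\tilde f(X^{**}/K,v)$ and $\hat s:=\psi\bigl(\tilde f(X^{**}/K,v),v;\Phi\bigr)=\psi(z,v;\Phi)=s=S/K$. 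That the map above is genuinely invertible — so that $\tilde f$ is well defined — is exactly the Black--Scholes case of \cref{lemmaExistenceF}, via the homogeneous reduction of $g$ used there.

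Finally I would substitute $S=K\hat s$ and $\Phi(d_2(S,K,v))=\Phi(d_2(\hat s,1,v))$ into the first display and expand $C(S,K+M)=\text{BS}(K\hat s,K+M,v)=K\hat s\,\Phi\bigl(d_1(K\hat s,K+M,v)\bigr)-(K+M)\Phi\bigl(d_2(K\hat s,K+M,v)\bigr)$ via \cref{eqBS}, which yields the stated formula. The only genuinely delicate point is the bookkeeping in the second step: one must keep straight that $\psi$ takes the price normalised by the \emph{strike} $K$ (not by $S$), so that $z\mapsto z/\Phi(d_2(\psi(z,v;\Phi),1,v))$ is literally the function whose inverse defines $\tilde f$; everything else is the homogeneity of $\text{BS}$ together with the definition of $d_{1,2}$.
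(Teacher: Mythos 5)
Your proposal is correct and follows essentially the route the paper intends (the paper states this proposition without a detailed proof, simply invoking \cref{propQuasiClosedCPT}, \cref{eqCKstar} and the invertibility from \cref{lemmaExistenceF}): you apply the lifting relation, use \cref{lemmaPsi} to recover $\hat s = S/K$ from the price, and correctly identify $\tilde f$ as the strike-normalised inverse of $g(X)=X/\Phi(d_2(C^{-1}(X,K),K,v))$. The bookkeeping with $z=X/K$ and the homogeneity $d_2(S,K,v)=d_2(S/K,1,v)$ is handled correctly, so nothing is missing.
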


	\hypertarget{relation-with-the-transformation-mathbb-u_k}{%
\subsubsection{\texorpdfstring{Relation with the transformation
\(\mathbb U_k\)}{Relation with the transformation \textbackslash mathbb U\_k}}\label{relation-with-the-transformation-mathbb-u_k}}

	Let us consider the lifted Call on Call \(\hat C_K^{**}(X^{**},M)\) and
its normalized function
\[\hat c^{**}_K(m) = \frac{\hat C^{**}_K(X^{**},X^{**}m)}{X^{**}}\]
where \(X^{**} = \frac{X}{1-P(X_T=0)}\). Then in turn it holds
\begin{align*}
\hat c^{**}_K(m) &= \frac{1}{X}\hat C_K\biggr(X, \frac{X}{1-P(X_T=0)}m\biggl)\\
&= \frac{1}{X}C\biggl(C^{-1}(X,K), K + \frac{X}{1-P(X_T=0)}m\biggr).
\end{align*} We write \(k=\frac{K}{C^{-1}(X,K)}\) and define the
function \(c(l):=\frac{C(C^{-1}(X,K),lC^{-1}(X,K))}{C^{-1}(X,K)}\), then
\(X = c(k)C^{-1}(X,K)\) and \begin{align*}
\hat c^{**}_K(m) &= \frac{c\bigl(k+\frac{c(k)}{1-P(X_T=0)}m\bigr)}{c(k)}
\end{align*} where
\(1-P(X_T=0) = -\partial_KC(C^{-1}(X,K),K) = -c'(k)\). Then
\[\hat c^{**}_K(m) = \mathbb U_{k(X,K)}c(m)\] where
\(k(X,K)=\frac{K}{C^{-1}(X,K)}\).

While in \cref{a-transformation-in-the-tehranchi-space} we have proved
that the transformation \(\mathbb T_k\) corresponds to the normalization
of the relative Call on Call \(\hat C_K\), here we showed that the
transformation \(\mathbb U_k\) actually corresponds to the normalization
of the lifted Call on Call \(\hat C_K^{**}\). In other words we showed
the following.

\begin{lemma}

The transformations $\mathbb T_k$ and $\mathbb U_k$ are linked to relative Calls on Calls $\hat C_K$ and lifted Calls on Calls $\hat C_K^{**}$ via
\begin{align*}
\hat C_K(X,M) &= \mathbb T_{k(X,K)}c\Bigl(\frac{M}{X}\Bigr)X\\
\hat C_K^{**}(X^{**},M) &= \mathbb U_{k(X,K)}c\Bigl(\frac{M}{X^{**}}\Bigr)X^{**}
\end{align*}
where $k(X,K)=\frac{K}{C^{-1}(X,K)}$.

\end{lemma}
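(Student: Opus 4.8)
The plan is to observe that both identities are essentially repackagings of computations already carried out in the excerpt: the first is \cref{eqCHatnorm} read without the homogeneity hypothesis, and the second is precisely the derivation in \cref{relation-with-the-transformation-mathbb-u_k}. So the proof is pure bookkeeping, the one thing to keep straight being that the function $c$ in the statement is anchored at the base point $S=C^{-1}(X,K)$.

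First I would fix $X>0$ and $K$, set $S:=C^{-1}(X,K)$ (so that $X=C(S,K)$ and $k(X,K)=K/S$), and let $c(l):=C(S,lS)/S$, which is exactly the $c$ appearing in the statement and in \cref{relation-with-the-transformation-mathbb-u_k}. By \cref{lemmaPropertiesNormalized} this $c$ lies in $\mathbb C$, and $c(k)=X/S>0$, so $\mathbb T_{k}$ is defined at $k=k(X,K)$; provided $P(S_T\le K)<1$, equivalently $c'(k)\neq 0$, so is $\mathbb U_{k}$.

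For the $\mathbb T_k$ identity I would write $\hat C_K(X,M)=C(S,K+M)=S\,c(k+M/S)$ from the definition of $c$, then substitute $S=X/c(k)$ into $\mathbb T_k c(M/X)\,X=X\,c\bigl(k+c(k)\tfrac{M}{X}\bigr)/c(k)$ and check that it collapses to the same $S\,c(k+M/S)$; this is a two-line substitution with no obstacle.

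For the $\mathbb U_k$ identity I would invoke \cref{relation-with-the-transformation-mathbb-u_k}, where it was shown that the normalized lifted price $\hat c^{**}_K(m)=\hat C^{**}_K(X^{**},X^{**}m)/X^{**}$ equals $\mathbb U_{k(X,K)}c(m)$, the key inputs being $1-P(X_T=0)=-\partial_K C(S,K)=-c'(k)$ and $X^{**}=X/(1-P(X_T=0))$. Setting $m=M/X^{**}$ and multiplying through by $X^{**}$ yields $\hat C^{**}_K(X^{**},M)=X^{**}\,\mathbb U_{k(X,K)}c(M/X^{**})$, the claimed formula. If anything counts as the main obstacle it is only the need to keep $S$, $X$, $c$, and $k(X,K)$ mutually consistent between the two parts; the analytic content (membership in $\mathbb C$, the probabilistic identity for $c'(k)$, and the change-of-measure formulas behind $X^{**}$) has already been established.
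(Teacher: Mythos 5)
Your proposal is correct and follows essentially the same route as the paper, which presents this lemma as a direct summary of the computation in the preceding subsection (where $\hat c^{**}_K(m)=\mathbb U_{k(X,K)}c(m)$ is derived with $c$ anchored at $S=C^{-1}(X,K)$) together with the de-normalization of \cref{eqCHatnorm}. Your explicit check that $c(k)=X/S$ makes the $\mathbb T_k$ identity collapse to $S\,c(k+M/S)=C(S,K+M)$ is exactly the bookkeeping the paper leaves implicit.
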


	\hypertarget{implied-volatility-of-the-relative-pricing-functions}{%
\section{Implied volatility of the relative pricing
functions}\label{implied-volatility-of-the-relative-pricing-functions}}

	The Calls on Calls prices and their lifted prices can be studied from an
implied volatility point of view. It is of extreme interest that, even
in the case of underlying Black-Scholes Calls, the relative prices
actually hide smile shapes which are not constant. Also, we can state
something more on these smiles, as we will see that they always explode
for small strikes while behave as the original smiles at \(\infty\).

We define the implied total volatility \(\hat v_K(X,M)\) of the relative
Call on Call \(\hat C_K(X,M)\) and the implied total volatility
\(\hat v_K^{**}(X^{**},M)\) of the lifted Call on Call
\(\hat C_K^{**}(X^{**},M)\) to be the value of \(v=\sigma\sqrt T\)
satisfying \[\hat C_K(X,M) = \text{BS}(X,M,v)\] and
\[\hat C_K^{**}(X^{**},M) = \text{BS}(X^{**},M,v)\] respectively, where
\(\text{BS}\) is defined in \cref{eqBS}.

	We are interested in the functions \(M\to\hat v_K(X,M)\) and
\(M\to\hat v_K^{**}(X,M)\) and in particular to their asymptotics for
\(M\) going to \(\infty\) and \(0\). We can look at the limits of this
function studying the relations \begin{align*}
\hat C_K(X,M) &= C(C^{-1}(X,K),K+M)\\
\hat C_K^{**}(X^{**},M) &= \frac{X^{**}}{f(X^{**})}\hat C_K(f(X^{**}),M)
\end{align*} where we suppose that there exists \(f\) such that
\(f(X^{**}) = X\).

We denote with \(v(S,L)\) the implied total volatility of the Call
option with strike \(L\) and underlier \(S\), so that we shall study the
relation between \(\hat v_K(X,M)\) (or \(\hat v_K^{**}(X^{**},M)\)) and
\(v(C^{-1}(X,K),K+M))\). We will also denote
\(\hat d_{1,2}(X,M):=d_{1,2}\bigl(X,M,\hat v_K(X,M)\bigr)\) and
\(\hat d_{1,2}^{**}(X^{**},M):=d_{1,2}\bigl(X^{**},M,\hat v_K^{**}(X^{**},M)\bigr)\).
When it is clear, we will drop the dependence to \(X\) and \(X^{**}\).

	\begin{remark}\label{remarkdLee}

In the framework of arbitrage-free prices the functions $d_1$ and $d_2$ are monotone for the results found by Fukasawa in \cite{fukasawa2012normalizing}. The condition of surjectivity is not granted a priori. In general, it always holds that the function $d_1$ goes at $+\infty$ for small strikes and the function $d_2$ goes at $-\infty$ for large strikes. However, the function $d_1$ goes to $-\infty$ for large strikes iff Call prices vanish at $\infty$, and $d_2$ goes to $\infty$ for small strikes iff there is no mass at $0$, i.e. the derivative of Call prices is $-1$ in $0$.

Furthermore, Proposition 2.1 in \cite{mingone2022smiles} shows that if $d_1$ is surjective, then the Lee right wing condition holds: $v(S,K)<\sqrt{2\log\frac{K}{S}}$ for $K$ large enough; while if $d_2$ is surjective, then the Lee left wing condition holds: $v(S,K)<\sqrt{-2\log\frac{K}{S}}$ for $K$ small enough.

\end{remark}

	\begin{remark}\label{remarkLimPhi}

For two functions $f$ and $g$ we write $f\sim g$ iff $\lim_x\frac{f(x)}{g(x)}=1$, where the limit of $x$ will depend on the situation.

Integrating by parts the expression for $1-\Phi(x)$ we find
$$1-\Phi(x) = \frac{\phi(x)}{x}-\int_x^\infty\frac{\phi(t)}{t^2}\,dt = \frac{\phi(x)}{x}-\frac{\phi(x)}{x^3}+\int_x^\infty\frac{\phi(t)}{t^4}\,dt$$
so that for $x>0$
$$\frac{\phi(x)}{x}-\frac{\phi(x)}{x^3}<1-\Phi(x)<\frac{\phi(x)}{x}.$$
Then, for $x$ going to $\infty$, $\Phi(-x)\sim\frac{\phi(x)}{x}$.

\end{remark}

	\hypertarget{calls-on-calls-implied-volatility}{%
\subsection{Calls on Calls' implied
volatility}\label{calls-on-calls-implied-volatility}}

	In this section we prove that the Calls on Calls' smiles behave as the
original smiles at \(\infty\), while at small strikes they always
explode with a rate of \(\sqrt{-2\log\frac{M}{X}}\).

\begin{lemma}[Asymptotic behavior of the Calls on Calls' total implied volatility]\label{lemmaImpliedVolhatC}

The Calls on Calls' total implied volatility $\hat v_K(X,M)$ behaves asymptotically as
\begin{itemize}
\item $\sqrt{-2\log\frac{M}{X}} - \hat d_2(X,0)$ for small strikes;
\item the underlying total implied volatility $v(C^{-1}(X,K),K+M)$ for large strike. If it exists, the limit of $\hat v_K(X,M)$ is equal to the limit of $v(C^{-1}(X,K),K+M)$.
\end{itemize}

\end{lemma}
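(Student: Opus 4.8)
The plan is to use throughout the identity $\hat C_K(X,M)=C(S,K+M)$ with $S:=C^{-1}(X,K)$, so that $\hat v_K(X,M)$ and $v\bigl(C^{-1}(X,K),K+M\bigr)=v(S,K+M)$ are the Black--Scholes implied total volatilities of the \emph{same} price $P(M):=C(S,K+M)$, but read off with the two different underlier/strike pairs $(X,M)$ and $(S,K+M)$. The two wings are handled separately, and in both cases the point is that the two Black--Scholes coordinates of a given price differ only by bounded perturbations.

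For large $M$, I would first use the homogeneity of $\text{BS}$, namely $\text{BS}(S,K+M,v)=\frac{S}{X}\text{BS}\bigl(X,\frac{X}{S}(K+M),v\bigr)$, so that $v(S,K+M)$ is also the implied volatility \emph{for the common underlier} $X$ of the strike $\frac{X}{S}(K+M)$ and price $\frac{X}{S}P(M)$, while $\hat v_K(X,M)$ is the implied volatility of strike $M$ and price $P(M)$ for the same underlier $X$. As $M\to\infty$ the two log-strikes $\log\frac{M}{X}$ and $\log\frac{K+M}{S}$ differ by a bounded amount (tending to $\log\frac{X}{S}$) and the two prices by the bounded factor $\frac{X}{S}$. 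I would then invoke the standard large-strike expansion of the Black--Scholes implied variance, obtained from $\text{BS}(u,\kappa,w)\sim\frac{u\,w^{3}}{\sqrt{2\pi}\,(\log(\kappa/u))^{2}}e^{-d_{1}^{2}/2}$ — itself a consequence of the Mill's-ratio estimate $\Phi(-x)\sim\phi(x)/x$ of \cref{remarkLimPhi} — which gives $w^{2}\sim\frac{(\log(\kappa/u))^{2}}{2\log(u/P)+\log(\kappa/u)}$; since this leading behaviour involves $\kappa$ only through $\log\kappa$ and $P$ only through $\log P$, the bounded perturbations above change it by a relative $o(1)$, whence $\hat v_K(X,M)/v\bigl(C^{-1}(X,K),K+M\bigr)\to1$. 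The last assertion of the lemma is then immediate: if $v\bigl(C^{-1}(X,K),K+M\bigr)$ converges, so does $\hat v_K(X,M)$, to the same limit.

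For small $M$, note that $\hat C_K(X,M)=C(S,K+M)\uparrow C(S,K)=X$, with defect
$$X-\hat C_K(X,M)=\int_{K}^{K+M}P(S_{T}>u)\,du\sim M\,P(S_{T}>K)\quad(M\to0),$$
by right-continuity of $u\mapsto P(S_T>u)$. The underlier of the Call on Call is $X_T=(S_T-K)_+$, which has a strictly positive atom $P(S_T\le K)$ at $0$; by \cref{remarkdLee} this prevents $\hat d_2(X,M)$ from diverging to $+\infty$ as $M\to0$, and since $\hat d_2$ is monotone in the log-strike it therefore converges to a finite limit, which is the quantity $\hat d_2(X,0)$ of the statement. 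A preliminary rate $\hat v_K(X,M)^{2}\sim-2\log\frac{M}{X}$ — available from Lee's left-wing moment formula applied to $X_T$, whose left tail is maximally heavy because of the atom at $0$ (it can also be seen from the defect identity: a bounded $\hat v_K$ would force $\hat d_2(X,M)\to+\infty$, hence $X-\hat C_K(X,M)$ to be at least of order $M$, contradicting $X-\hat C_K(X,M)\sim M\,P(S_T>K)$ since $P(S_T>K)<1$) — gives $\hat v_K\to\infty$. Feeding the finite limit $\hat d_2(X,0)$ and $\hat v_K\to\infty$ into the defining relation $\hat d_2(X,M)=-\frac{\log(M/X)}{\hat v_K(X,M)}-\frac{\hat v_K(X,M)}{2}$ and solving for $\hat v_K$ then yields $\hat v_K(X,M)=\sqrt{-2\log\frac{M}{X}}-\hat d_2(X,0)+o(1)$, as claimed; along the way one identifies $\hat d_2(X,0)=\Phi^{-1}\bigl(P(S_T>K)\bigr)$ from the chain-rule identity $P(S_T>K+M)=\Phi(\hat d_2(X,M))-X\phi(\hat d_1(X,M))\,\partial_M\hat v_K(X,M)$.

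The hard part will be the uniform control of error terms rather than the leading-order bookkeeping. In the small-strike wing one must justify that the vega-weighted term $X\phi(\hat d_1(X,M))\,\partial_M\hat v_K(X,M)=M\phi(\hat d_2(X,M))\,\partial_M\hat v_K(X,M)$ vanishes; this needs not just $\hat v_K^{2}\sim-2\log\frac{M}{X}$ but a derivative estimate of the type $\partial_M\hat v_K(X,M)\sim-\bigl(M\sqrt{-2\log(M/X)}\bigr)^{-1}$, so some regularity of the smile near the left wing has to be brought in (alternatively one works directly with the monotone quantity $\hat d_2(X,M)$ and avoids differentiating $\hat v_K$). In the large-strike wing the delicate point is to make ``the implied variance depends on $(\log\kappa,\log P)$ only, to leading order'' quantitative and uniform in $M$, so as to get a genuine ratio limit rather than merely matching $\limsup$'s; one should also check that the sub-cases where $P(M)$ decays only polynomially, or does not vanish at all, are covered by the same expansion (they are, since then $w^{2}\sim\log M$ for both representations).
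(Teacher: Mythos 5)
Your proposal is correct and follows essentially the same route as the paper: for small strikes, the mass of $(S_T-K)_+$ at $0$ forces the slope at $M=0$ to exceed $-1$, hence (via \cref{remarkdLee} and monotonicity) $\hat d_2$ converges to a finite limit, and solving $\hat d_2=-\frac{\log(M/X)}{\hat v_K}-\frac{\hat v_K}{2}$ gives $\hat v_K\sim\sqrt{-2\log\frac MX}-\hat d_2(X,0)$ exactly as in the paper's $\gamma=1-\varepsilon$ expansion; for large strikes, both arguments reduce to the Mill's-ratio expansion of \cref{remarkLimPhi} and the observation that the leading-order implied variance depends only on the log-strike and log-price, which change by bounded amounts between the two coordinate systems. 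Your repackaging of the right wing via homogeneity of $\text{BS}$, and your explicit identification $\hat d_2(X,0)=\Phi^{-1}(P(S_T>K))$, are harmless refinements (the latter, as you note, needs a vega estimate but is not required for the statement).
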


	\begin{proof}

Firstly, we observe that $-\frac{d}{dM}\hat C_K(X,M) = -\partial_K C(C^{-1}(X,K),K+M)$. In $M=0$, this means that Calls on Calls' prices have a derivative in $0$ equal to $-\partial_K C(C^{-1}(X,K),K)$ which, for $K>0$, is strictly larger than $-1$. Then, the function $\hat d_2(M)$ is not surjective and $\hat v_K\sim\sqrt{-2\log\frac{M}{X}}$. More precisely, we can write $\hat v_K = \gamma\sqrt{-2\log\frac{M}{X}}$ where $\gamma \sim 1$ at $\infty$. Substituting in the expression for $\hat d_2(M)$, we find
\begin{equation}\label{eqD2Epsilon}
\hat d_2(M) = \frac{\sqrt{2}}{2}\sqrt{-\log\frac{M}{X}}\biggl(\frac{1}{\gamma}-\gamma\biggr) \sim \hat d_2(0)
\end{equation}
where $\hat d_2(0)$ is the finite limit of $\hat d_2$ for $M=0$. Observe that $\frac{1}{1-\varepsilon} = 1-\varepsilon+o(\varepsilon)$ so that $\frac{1}{1-\varepsilon} - (1-\varepsilon) = 2\varepsilon + o(\varepsilon)$. Setting $\gamma=1-\epsilon$, we find from \cref{eqD2Epsilon} that it must hold $\varepsilon \sim \frac{\hat d_2(0)}{\sqrt{-2\log\frac{M}{X}}}$, or $\gamma\sim 1-\frac{\hat d_2(0)}{\sqrt{-2\log\frac{M}{X}}}$. All in all, we find $\hat v_K \sim \sqrt{-2\log\frac{M}{X}} - \hat d_2(0)$.

Secondly, from the definition of Calls on Calls' prices, it follows that Calls on Calls vanish at increasing strikes iff the underlying Calls vanish. In particular, if $d_1$ has finite limit, then its implied total volatility must explode at $\infty$, and similarly for $\hat d_1$ and $\hat v_K$. Otherwise, if $d_1$ is surjective, then $\hat d_1$ is surjective and in particular for \cref{remarkdLee} the Lee right wing condition holds, i.e. $v(C^{-1}(X,K), M), \hat v_K(X,M)<\sqrt{2\log\frac{M}{X}}$ for large $M$. We can then write the asymptotics of the Call on Call price using \cref{remarkLimPhi} as
$$C_\text{BS}(X,M,\hat v_K) \sim X\frac{\phi(\hat d_1(M))}{-\hat d_1(M)} - M\frac{\phi(\hat d_2(M))}{-\hat d_2(M)}$$
and since $M\phi(\hat d_2(M))=X\phi(\hat d_1(M))$, developing the expressions for $\hat d_1(M)$ and $\hat d_2(M)$, the right hand side becomes
$$C_\text{BS}(X,M,\hat v_K) \sim X\phi(\hat d_1(M))\frac{\hat v_K^3}{\bigl(\log\frac{M}{X}\bigr)^2-\frac{\hat v_K^4}{4}}.$$
Taking the logarithm in the above expression and looking at the dominating terms, we find
$$\log C_\text{BS}(X,M,\hat v_K) \sim -\frac{\hat d_1(M)^2}{2}.$$
If $\hat v_K\in o\bigl(\sqrt{\log\frac{M}{X}}\bigr)$ (this includes the case where it goes to a finite limit) then 
$$\log C_\text{BS}(X,M,\hat v_K) \sim -\frac{1}{2\hat v_K^2}\biggl(\log\frac{M}{X}\biggr)^2,$$
otherwise, if $\hat v_K\sim \hat c\sqrt{\log\frac{M}{X}}$ with $0<\hat c<\sqrt 2$, then
$$\log C_\text{BS}(X,M,\hat v_K) \sim -\frac{\log\frac{M}{X}}{8\hat c^2}(2-\hat c^2)^2.$$

Similarly, we develop the asymptotics of the logarithm of $C_\text{BS}(C^{-1}(X,K),K+M,v)$ and get that if $v\in o\bigl(\sqrt{\log\frac{K+M}{C^{-1}(X,K)}}\bigr)$ these coincide with
$$\log C_\text{BS}(C^{-1}(X,K),K+M,v) \sim -\frac{1}{2v^2}\biggl(\log\frac{K+M}{C^{-1}(X,K)}\biggr)^2.$$
Then, also the logarithm of Call on Call prices must have a similar behavior, so that since $\log\frac{M}{X}\sim\log\frac{K+M}{C^{-1}(X,K)}$ the only possible case is $\hat v_K\in o\bigl(\sqrt{\log\frac{M}{X}}\bigr)$ and equating the asymptotic behaviors it follows $\hat v_K\sim v$. In particular, if it exists, the limit of $\hat v_K(X,M)$ at $\infty$ is equal to the limit of $v(C^{-1}(X,K),K+M)$.

Otherwise, if $v\sim c\sqrt{\log\frac{K+M}{C^{-1}(X,K)}}$ with $0<c<\sqrt 2$, then
$$\log C_\text{BS}(X,M,\hat v_K) \sim -\frac{\log\frac{K+M}{C^{-1}(X,K)}}{8c^2}(2-c^2)^2.$$
Now in order to have equivalent asymptotic behaviors, then there must be a positive $\hat c<\sqrt{2}$ such that $\hat v_K\sim \hat c\sqrt{\log\frac{M}{X}}$ and $\frac{(2-\hat c^2)^2}{\hat c^2} = \frac{(2-c^2)^2}{c^2}$. Solving, the only positive solution is $\hat c=c$, so that again $\hat v_K\sim v$.

\end{proof}

	\hypertarget{relation-with-the-underlying-implied-volatility}{%
\subsubsection{Relation with the underlying implied
volatility}\label{relation-with-the-underlying-implied-volatility}}

	In \cref{propMonotonicityStrike} we show that, in the homogeneous case,
Calls on Calls' prices are increasing as functions of the relative
underlying strike. Then, \(\hat C_{K_1}(X,M)<\hat C_{K_2}(X,M)\) for
every \(K_1<K_2\), which implies
\(\hat v_{K_1}(X,M)<\hat v_{K_2}(X,M)\). In particular, for \(K_1=0\)
and \(K_2=K\), it holds \(\hat C_0(X,M)=C(X,M)=C_\text{BS}(X,M,v(X,M))\)
and \(\hat C_K(X,M)=C_\text{BS}(X,M,\hat v_K(X,M))\), so that looking at
the implied total volatilities it follows \(v(X,M)<\hat v(X,M)\). This
means that the Calls on Calls's implied total volatility is always
larger than the original implied total volatility for fixed moneyness.

	\hypertarget{lifted-calls-on-calls-implied-volatility}{%
\subsection{Lifted Calls on Calls' implied
volatility}\label{lifted-calls-on-calls-implied-volatility}}

	We now look at the behavior of the smiles of lifted Calls on Calls. In
this case, we expect \(\hat d_2^{**}\) to be surjective since prices
have derivative equal to \(-1\) in \(0\). However, we will show that the
smiles still explode for small strikes, while the behavior at \(\infty\)
is as for the original smile.

\begin{lemma}[Asymptotic behavior of the lifted Calls on Calls' total implied volatility]\label{lemmaImpliedVolhatCstar}

The lifted Calls on Calls' total implied volatility $\hat v^{**}_K(X^{**},M)$ behaves asymptotically as
\begin{itemize}
\item $(2-\sqrt{2})\sqrt{-\log\frac{M}{X^{**}}}$ for small strikes;
\item the underlying total implied volatility $v(C^{-1}(f(X^{**}),K),K+M)$ for large strike. If it exists, the limit of $\hat v^{**}_K(X^{**},M)$ is equal to the limit of $v(C^{-1}(f(X^{**}),K),K+M)$.
\end{itemize}

\end{lemma}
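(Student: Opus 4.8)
The plan is to handle the two wings separately, mirroring the two-part structure of the proof of \cref{lemmaImpliedVolhatC} and exploiting the identity $\hat C_K^{**}(X^{**},M)=\frac{X^{**}}{f(X^{**})}\,\hat C_K(f(X^{**}),M)=\frac{X^{**}}{X}\,C(C^{-1}(X,K),K+M)$ with $X=f(X^{**})$, together with the constant $\rho:=X^{**}/X=1/P(S_T>K)\ge 1$. The forward of the lifted contract is $X^{**}=E^{**}[X_T^{**}]$ (\cref{lemmaXstarstar}), so put-call parity holds with strike $M$ and forward $X^{**}$.

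For large strike the argument is a transcription of the second half of \cref{lemmaImpliedVolhatC}. First, $\hat C_K^{**}$ vanishes as $M\to\infty$ if and only if the underlying Call $C(C^{-1}(X,K),\cdot)$ does, so $\hat d_1^{**}$ is surjective exactly when the underlying $d_1$ is. If it is not, both $\hat v_K^{**}$ and the underlying total implied volatility $v$ must explode and are compared directly; if it is, \cref{remarkdLee} supplies the Lee right-wing bound, and one then expands $\log\hat C_K^{**}(X^{**},M)=\log\rho+\log C(C^{-1}(X,K),K+M)$ and $\log\mathrm{BS}(X^{**},M,\hat v_K^{**})$ via \cref{remarkLimPhi}, uses $\log\frac{M}{X^{**}}\sim\log\frac{K+M}{C^{-1}(X,K)}$ (the additive constant $\log\rho$ being negligible against these), and matches the dominating terms of the logarithms to obtain $\hat v_K^{**}(X^{**},M)\sim v(C^{-1}(f(X^{**}),K),K+M)$; in particular the two have the same limit whenever one exists.

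The small-strike wing is where the new content lies. Since $\hat C_K^{**}$ is a genuine Call price on the non-vanishing underlier $X_T^{**}$, it has slope $-1$ at $M=0$, hence $\hat d_2^{**}$ is surjective; by the Lee left-wing bound (\cref{remarkdLee}) $\hat v_K^{**}(X^{**},M)\le(1+o(1))\sqrt{-2\log(M/X^{**})}$, and $\hat d_2^{**}(M)\to+\infty$ as $M\to0$, so any limit point $c$ of $\hat v_K^{**}/\sqrt{-\log(M/X^{**})}$ lies in $(0,\sqrt2)$. The key step is the exact small-$M$ order of the lifted price's time value, which equals the lifted Put price: from the identity above, $\hat P_K^{**}(X^{**},M)=\hat C_K^{**}(X^{**},M)-(X^{**}-M)=\rho\int_K^{K+M}P(K<S_T\le u)\,du$, and under the standing non-degeneracy of $S_T$ near $K$ (a density bounded above and below, in particular in the Black-Scholes case) this is of order $M^2$, so $\log\hat P_K^{**}(X^{**},M)\sim 2\log M$. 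On the other hand, substituting $\hat v_K^{**}=c\,\sqrt{-\log(M/X^{**})}$ into $P_{\mathrm{BS}}(X^{**},M,\hat v_K^{**})=M\Phi(-\hat d_2^{**})-X^{**}\Phi(-\hat d_1^{**})$ and applying \cref{remarkLimPhi} gives $\log P_{\mathrm{BS}}(X^{**},M,\hat v_K^{**})\sim\tfrac12\bigl(1+\tfrac1{c^2}+\tfrac{c^2}{4}\bigr)\log(M/X^{**})$; here the two summands have the same exponential order but do not cancel, their ratio tending to $\frac{2-c^2}{2+c^2}\in(0,1)$. Equating the two rates yields $\tfrac12\bigl(1+\tfrac1{c^2}+\tfrac{c^2}{4}\bigr)=2$, i.e. $c^4-12c^2+4=0$, whose only root in $(0,\sqrt2)$ is $c^2=6-4\sqrt2=(2-\sqrt2)^2$; hence every limit point equals $2-\sqrt2$, $\hat v_K^{**}/\sqrt{-\log(M/X^{**})}$ converges, and $\hat v_K^{**}(X^{**},M)\sim(2-\sqrt2)\sqrt{-\log(M/X^{**})}$. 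One must also exclude the regime $\hat v_K^{**}=o(\sqrt{-\log(M/X^{**})})$, in which the leading log-asymptotic of the Put takes the different shape $-\tfrac{(\log(M/X^{**}))^2}{2(\hat v_K^{**})^2}$ and matching it to $2\log M$ would force $\hat v_K^{**}\sim\tfrac12\sqrt{-\log(M/X^{**})}$, contradicting the regime.

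The main obstacle is this last wing: fixing the precise $M^2$-order of the lifted Put price, and thereby selecting the correct root of $c^4-12c^2+4=0$, requires both the non-degeneracy of $S_T$ near $K$ and a careful accounting of the two competing contributions $M\Phi(-\hat d_2^{**})$ and $X^{**}\Phi(-\hat d_1^{**})$, which are of the same exponential order and whose difference, not either one alone, controls the rate. The large-strike wing is routine by comparison, being essentially \cref{lemmaImpliedVolhatC} with the harmless extra factor $\rho$.
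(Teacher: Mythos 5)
Your proof is correct and follows essentially the same route as the paper's: put--call parity reduces the small-strike wing to matching $\log P_{\text{BS}}(X^{**},M,\hat v_K^{**})\sim -\tfrac12 \hat d_1^{**2}$ against the $2\log M$ rate of the lifted time value, the equation $\frac{(2+c^2)^2}{8c^2}=2$ (your $c^4-12c^2+4=0$, which is the same equation) selects $c=2-\sqrt2$ as the only admissible root, the $o\bigl(\sqrt{-\log(M/X^{**})}\bigr)$ regime is excluded by the same contradiction, and the large-strike wing is the same transcription of \cref{lemmaImpliedVolhatC} with the harmless factor $X^{**}/f(X^{**})$. Your explicit identification of the lifted time value as $\rho\int_K^{K+M}P(K<S_T\le u)\,du$ and the flagged non-degeneracy of the law of $S_T$ near $K$ make precise an assumption the paper leaves implicit when it asserts that the coefficient of $M^2$, namely $\frac{X^{**}}{f(X^{**})}\partial_K^2 C$, is a positive constant.
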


	\begin{proof}

It holds $-\partial_M\hat C_K^{**}(X^{**},M) = -\frac{X^{**}}{f(X^{**})}\partial_M\hat C_K(f(X^{**}),M)$ and $-\partial_M\hat C_K(f(X^{**}),M) = 1-P(f(X^{**})=0)=\frac{f(X^{**})}{X^{**}}$, then lifted Calls on Calls' prices have derivative equal to $-1$ at null strikes. In particular, the Lee left wing condition holds: $\hat v_K^{**}(X^{**},M)<\sqrt{-2\log\frac{M}{X^{**}}}$ for small strikes.

From the Put-Call-Parity, it holds
\begin{equation}\label{eqPStar}
P_\text{BS}^{**}(X^{**},M,\hat v_K^{**}) = \frac{X^{**}}{f(X^{**})} C(C^{-1}(f(X^{**}),K),K+M) - X^{**} + M
\end{equation}
where we dropped the dependence of the implied total variance from the underlier and the strike for notation simplicity. For $M$ going to $0$, since both $\hat d_1^{**}$ and $\hat d_2^{**}$ go to $\infty$, the left hand side behaves as
\begin{align*}
P_\text{BS}^{**}(X^{**},M,\hat v_K^{**}) &\sim -X^{**}\frac{\phi(\hat d_1^{**})}{\hat d_1^{**}} + M\frac{\phi(\hat d_2^{**})}{\hat d_2^{**}}\\
&= X^{**}\phi(\hat d_1^{**})\frac{\hat v_K^{**3}}{\bigl(\log\frac{M}{X^{**}}\bigr)^2-\frac{\hat v_K^{**4}}{4}}.
\end{align*}
As in the proof of \cref{lemmaImpliedVolhatC}, we take the logarithm and consider the dominant terms, so that
$$\log P_\text{BS}^{**}(X^{**},M,\hat v_K^{**}) \sim -\frac{\hat d_1^{**2}}{2}.$$

On the other hand, the right hand side in \cref{eqPStar} is equal to
$$\frac{X^{**}}{f(X^{**})}\bigl(C(C^{-1}(f(X^{**}),K),K+M) - f(X^{**})\bigr) + M$$
and, since $C(C^{-1}(f(X^{**}),K),K)=f(X^{**})$, for $M$ going to $0$ the above expression behaves as
$$\frac{X^{**}}{f(X^{**})}\bigl(\partial_K C(C^{-1}(f(X^{**}),K),K)M + \partial^2_K C(C^{-1}(f(X^{**}),K),K)M^2\bigr) + M.$$
Observe that $\frac{X^{**}}{f(X^{**})}=\frac{1}{-\partial_K C(C^{-1}(f(X^{**}),K),K)}$ so that the above expression reduces to
$$\frac{X^{**}}{f(X^{**})}\partial^2_K C(C^{-1}(f(X^{**}),K),K)M^2$$
where the term multiplying $M^2$ is a positive constant. Then, taking the logarithm, this behaves as $2\log M$.

Now, if $\hat v_K^{**}\in o\bigl(\sqrt{-\log\frac{M}{X^{**}}}\bigr)$ then
$$\log P_\text{BS}^{**}(X^{**},M,\hat v_K^{**}) \sim -\frac{1}{2\hat v_K^{**2}}\biggl(\log\frac{M}{X^{**}}\biggr)^2.$$
and equating this with $2\log M$ we find $\hat v_K^{**} \sim \frac{\sqrt{-\log M}}{2}$, which implies $\hat v_K^{**}\notin o\bigl(\sqrt{-\log\frac{M}{X^{**}}}\bigr)$, so that this solution cannot be accepted. If $\hat v_K^{**}\sim c^{**}\sqrt{-\log\frac{M}{X^{**}}}$ with $c^{**}<\sqrt{2}$ then
$$\log P_\text{BS}^{**}(X^{**},M,\hat v_K^{**}) \sim \frac{\log\frac{M}{X^{**}}}{8c^{**2}}(2+c^{**2})^2$$
and equating with $2\log M$ we find that the only admissible solution is $c^{**}=2-\sqrt{2}$.

Regarding the limit for large strikes, the definition of $\hat C^{**}_K$ implies that lifted Calls on Calls vanish at $\infty$ iff the underlying Calls do. Then, $d_1^{**}$ is surjective iff $d_1$ is. So that if $d_1$ has a finite limit at $\infty$, then $\hat v^{**}_K$ must explode. Otherwise, as in the relative Calls on Calls' case, the Lee right wing condition must hold: $\hat v^{**}_K(X^{**},M)<\sqrt{2\log\frac{M}{X^{**}}}$ for large enough $M$. Similarly to the previous section, considering the relation
$$C_{\text{BS}}(X^{**},M,\hat v_K^{**}) = \frac{X^{**}}{f(X^{**})}C_{\text{BS}}(C^{-1}(f(X^{**}),K),K+M,v)$$
and developing for large $M$, we obtain that $\hat v_K^{**}(X^{**},M)$ at $\infty$ behaves as $v(C^{-1}(f(X^{**}),K),K+M)$. In particular, if the limit exists, this is equal to the limit of $v(C^{-1}(f(X^{**}),K),K+M)$ at $\infty$.

\end{proof}

	\hypertarget{examples}{%
\subsection{Examples}\label{examples}}

	\hypertarget{black-scholes-calls-on-calls-implied-volatility}{%
\subsubsection{Black-Scholes Calls on Calls' implied
volatility}\label{black-scholes-calls-on-calls-implied-volatility}}

	In the Black-Scholes case, the implied total volatility is constant. In
\cref{lemmaImpliedVolhatC,lemmaImpliedVolhatCstar} we showed that both
the Calls on Calls' implied total volatility and the lifted Calls on
Calls' implied total volatility explode for \(M=0\) and go to the
original Black-Scholes total implied volatility at \(M=\infty\).

On the left of \cref{figureBSSigmahatstar} we plot the total implied
volatilities \(\hat v_K(X,M)\) and \(\hat v_K^{**}(X,M)\) as functions
of \(M\). We take \(X=\text{BS}(S,K,v)\) where \(S=100\), \(K=110\) and
\(v=0.2\).

	\hypertarget{svi-calls-on-calls-implied-volatility}{%
\subsubsection{SVI Calls on Calls' implied
volatility}\label{svi-calls-on-calls-implied-volatility}}

	We consider now the SVI model as in \cref{the-enriched-svi-models}. For
\(K\) going to \(0\), the corresponding implied total volatility
\(\sqrt{\text{SVI}(\log\frac{K}{S})}\) behaves as
\(\sqrt{b(1-\rho)|\log\frac{K}{S}|}\). Except for a constant, this is
equivalent to the behavior of \(\hat v_K(X,M)\) and
\(\hat v_K(X^{**},M)\) in \(0\). Also, for \(M\) going to \(\infty\),
the three total implied volatilities behave similarly, exploding with a
speed of \(\sqrt{b(1+\rho)\log\frac{K+M}{S}}\).

We show on the right of \cref{figureBSSigmahatstar} the total implied
volatilities \(\hat v_K(X,M)\) and \(\hat v_K^{**}(X^{**},M)\) as
functions of \(M\). We set
\(X=\text{BS}(S,K,\text{SVI}(\log\frac{K}{S}))\) where \(S=100\),
\(K=110\) and the SVI parameters \(a=0.01\), \(b=0.1\), \(\rho=-0.6\),
\(m=-0.5\), \(\sigma=0.1\) are taken as in
\cref{the-enriched-svi-models} in order to guarantee arbitrage-free
prices.

\begin{figure}
	\centering
	\includegraphics[width=.8\linewidth]{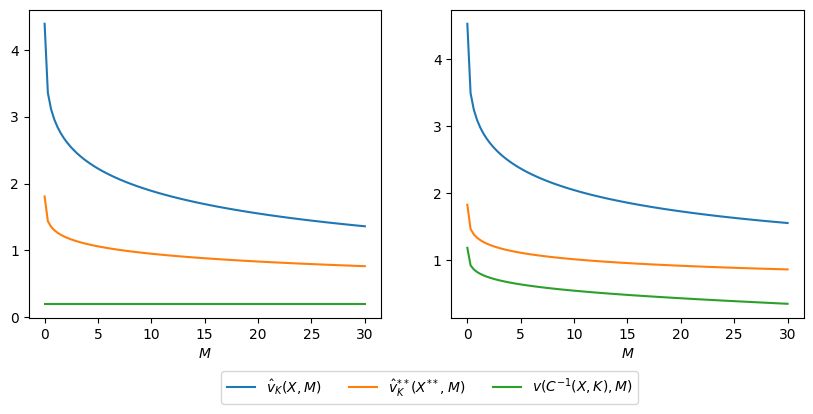}
	\caption{The hidden smiles $\hat v_k(X,M)$ and $\hat v_k^{**}(X^{**},M)$ of the Black-Scholes model (left) and the smiles obtained from SVI prices (right). The current underlier has value $100$, the original strike is set at $110$, the original Black-Scholes implied total volatility at $0.2$, and the original SVI parameters at $a=0.01$, $b=0.1$, $\rho=-0.6$, $m=-0.5$, $\sigma=0.1$.}
	\label{figureBSSigmahatstar}
\end{figure}


\newpage \bibliography{biblio}

\begin{thebibliography}{10}

\bibitem{boyd2004convex}
Stephen~P Boyd and Lieven Vandenberghe.
\newblock {\em Convex optimization}.
\newblock Cambridge university press, 2004.

\bibitem{carr2015duality}
Peter Carr and Gregory Pelts.
\newblock {Duality, Deltas, and Derivatives Pricing}.
\newblock {\em Presented at the conference dedicated to Steve Shreve's 65th
  birthday}, June 2015.

\bibitem{fukasawa2012normalizing}
Masaaki Fukasawa.
\newblock The normalizing transformation of the implied volatility smile.
\newblock {\em Mathematical Finance: An International Journal of Mathematics,
  Statistics and Financial Economics}, 22(4):753--762, 2012.

\bibitem{gatheral2004parsimonious}
Jim Gatheral.
\newblock {A Parsimonious Arbitrage-Freee Implied Volatility Parameterization
  with Applications to the Valuation of Volatility Derivatives}.
\newblock {\em Proceeding of the Global Derivatives and Risk Management Madrid
  conference}, 2004.

\bibitem{lucic2020lecture}
Vladimir Lucic.
\newblock Lecture notes on volatility modelling.
\newblock {\em Imperial College London}, 2020.

\bibitem{martini2021explicit}
Claude Martini and Arianna Mingone.
\newblock {Explicit no arbitrage domain for sub-SVIs via reparametrization}.
\newblock {\em arXiv preprint arXiv:2106.02418}, 2021.

\bibitem{martini2022no}
Claude Martini and Arianna Mingone.
\newblock {No arbitrage SVI}.
\newblock {\em SIAM Journal on Financial Mathematics}, 13(1):227--261, 2022.

\bibitem{merton1974pricing}
Robert~C Merton.
\newblock {On the pricing of corporate debt: The risk structure of interest
  rates}.
\newblock {\em The Journal of finance}, 29(2):449--470, 1974.

\bibitem{mingone2022smiles}
Arianna Mingone.
\newblock Smiles in delta.
\newblock {\em arXiv preprint arXiv:2209.00406}, 2022.

\bibitem{sampford1953some}
Michael~R Sampford.
\newblock {Some inequalities on Mill's ratio and related functions}.
\newblock {\em The Annals of Mathematical Statistics}, 24(1):130--132, 1953.

\bibitem{tehranchi2020black}
Michael~R Tehranchi.
\newblock {A Black--Scholes inequality: applications and generalisations}.
\newblock {\em Finance and Stochastics}, 24(1):1--38, 2020.

\end{thebibliography}
\bibliographystyle{plain}

\end{document}